\newtheorem{Theo}{Theorem}
\newtheorem{Lemm}[Theo]{Lemma}
\newcommand{\argmin}{\mathop{\rm argmin}\limits}
\newcommand{\bs}{\boldsymbol}
\newcommand{\mac}{\mathcal}
\newcommand{\mbb}{\mathbb}
\newcommand{\mrm}{\mathrm}
\newcommand{\ra}{\rightarrow}
\newcommand{\lan}{\langle}
\newcommand{\ran}{\rangle}
\def\ba#1\ea{\begin{align*}#1\end{align*}} 
	\def\banum#1\eanum{\begin{align}#1\end{align}} 
\newif\ifdraft
\newcommand{\mymemo}[1]{\ifdraft \textcolor{blue}{#1}\fi}
\begin{document}

\title{$K$-means clustering for sparsely observed longitudinal data }
\author[1, 3, 4]{Michio Yamamoto} 
\author[2, 3]{Yoshikazu Terada}
\affil[1]{\small Graduate School of Human Sciences, Osaka University}
\affil[2]{\small Graduate School of Engineering Science, Osaka University}
\affil[3]{\small RIKEN AIP}
\affil[4]{\small Data Science and AI Innovation Research Promotion Center, Shiga University}
\maketitle

\begin{abstract}
    In longitudinal data analysis, observation points of repeated measurements over time often vary among subjects except in well-designed experimental studies.
    Additionally, measurements for each subject are typically obtained at only a few time points. 
    From such sparsely observed data, 
    identifying underlying cluster structures can be challenging.
    This paper proposes a fast and simple clustering method that generalizes the classical $k$-means method to identify cluster centers in sparsely observed data.
    The proposed method employs the basis function expansion to model the cluster centers, providing an effective way to estimate cluster centers from fragmented data.
    We establish the statistical consistency of the proposed method, 
    as with the classical $k$-means method.
    Through numerical experiments, we demonstrate that the proposed method performs competitively with, or even outperforms, existing clustering methods.
    Moreover,
    the proposed method offers significant gains in computational efficiency due to its simplicity.
    Applying the proposed method to real-world data illustrates its effectiveness in identifying cluster structures in sparsely observed data.
\end{abstract}

\vspace{25pt}

\noindent Keywords: clustering, longitudinal data, sparsely observed data, functional data analysis

\section{Introduction}
\label{sec:Introduction}

Clustering subjects using longitudinal data is one of the basic analytical tasks in many scientific fields.
The situation in which all subjects in longitudinal data have measurements taken at the same times and in equal numbers is referred to as the \textit{regular} case \citep{yassouridis2018generalization} or a \textit{balanced} design \citep{smith2015essential}.
In this situation, Gaussian mixture models \citep{banfield1993model, mclachlan2019finite} or the classical $k$-means algorithm \citep{hartigan1979algorithm} can be applied.
However, these classical methods cannot be directly applied to the case of the so-called \textit{irregular} data \citep{yassouridis2018generalization} or \textit{unbalanced} design \citep{smith2015essential}, where the time and number of measurements differ for each subject. 
Furthermore, while measurement values change smoothly over time, these naive analyses ignore the dependency among time points, making them inappropriate for longitudinal data.

Functional data analysis (FDA), which assumes a continuous function underlying the discrete measurements, can easily handle irregular data and has been developed over the past 30 years. 
For a general introduction to FDA, refer to introductory texts on applied and theoretical aspects such as \citet{citeulike:11611491}, \citet{horvath2012inference}, and \citet{hsing2015theoretical}, as well as review articles like \citet{wang2016functional} and \citet{koner2023second}.

In FDA, there are two common types of longitudinal data: one where each subject has a sufficient number of observations and another where the number of observations is very limited. 
Generally, the former is referred to as \textit{densely observed data} or just \textit{dense data}; the latter as \textit{sparsely observed data} or just \textit{sparse data} \citep{james2003clustering,zhang2016sparse}. 
Although there is no formal definition of densely/sparsely observed data across the fields, classification is sometimes based on the number of observation points (see \citealp{zhang2016sparse}). 
In this paper, we define a dense setting as a situation where each subject has a sufficient number of measurement points for pre-smoothing to work effectively and a sparse setting as one where this is not the case. 

For clustering densely observed data, pre-smoothing methods for each subject are effective.
For example, in \citet{abraham2003unsupervised}, 
we first obtain a functional representation for each subject through pre-smoothing with B-spline basis functions. 
We then apply $k$-means clustering to the coefficient vectors of these functional representations based on the basis functions.
Similarly, various methods combining the pre-smoothing with variants of the $k$-means algorithm have been developed. 
A robust method using cubic B-spline basis and trimmed $k$-means has been introduced, offering resistance to outliers \citep{garcia2005proposal}. 
To capture multi-resolution features, some methods incorporate wavelet transforms \citep{antoniadis2013clustering, delaigle2019clustering, lim2019multiscale}. 
\citet{chiou2007functional} propose the clustering method that uses predictors conditioned on clusters through the truncated Karhunen-Lo\`eve (KL) expansion expansion.
The idea of $k$-centers via subspace projection has been further developed to the clustering method based on a shape function model with random scaling effects \citep{chiou2008correlation}.

\begin{figure}[!tb]
	\begin{center}
		\renewcommand{\arraystretch}{1}
		\vspace{0.5cm}
		\begin{tabular}{c}
			\includegraphics[width=7cm]{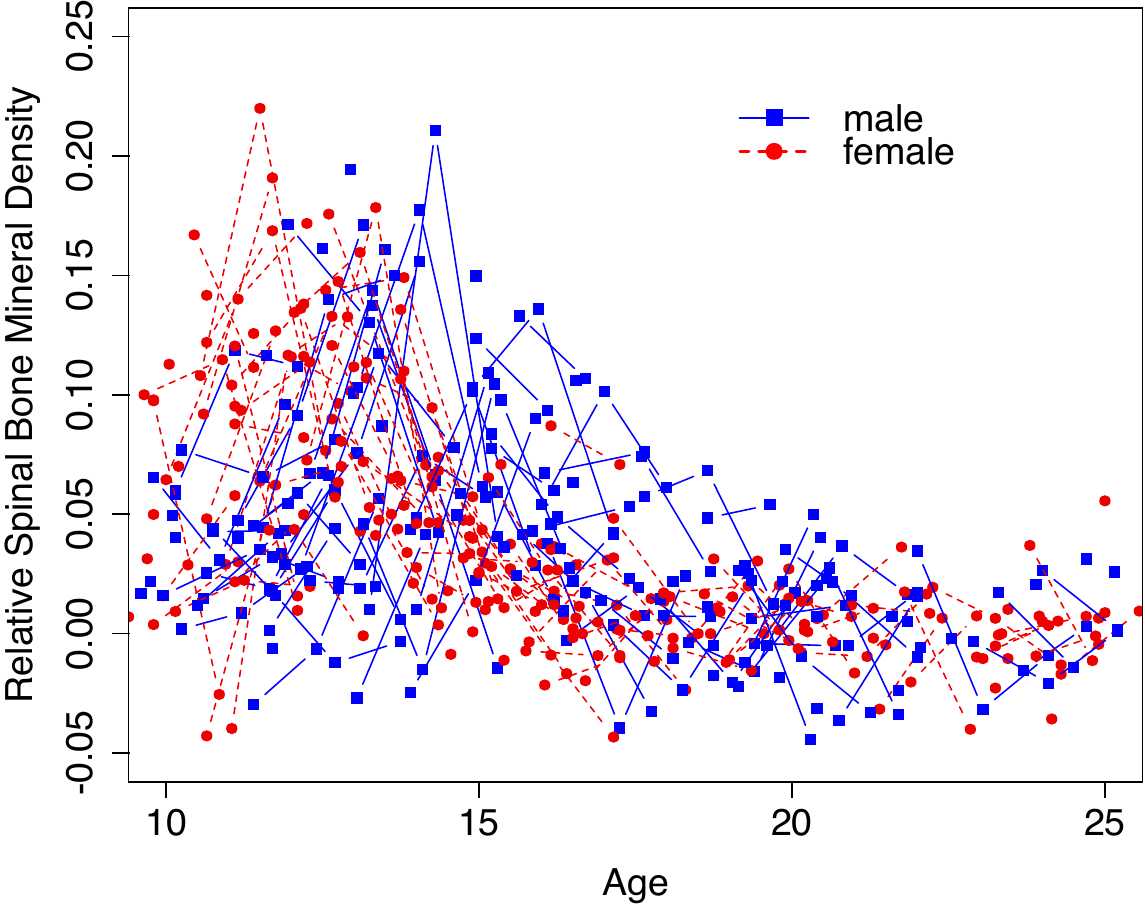}
		\end{tabular}
		\caption{An example of sparsely observed data from \citet{bachrach1999bone}.
  }
		\label{fig:example-sparse-data}
	\end{center}
\end{figure}

In sparse settings, however, applying FDA becomes challenging since the number of measurements per subject is very small.
For example, when applying a pre-smoothing method like \citet{abraham2003unsupervised}, the approximation accuracy of the curves can deteriorate significantly or even become infeasible, leading to poor clustering results based on such inaccurate approximations.
Figure \ref{fig:example-sparse-data} shows an example of such sparsely observed data.
The dataset, which is obtained from \citet{bachrach1999bone}, recodes the difference in spinal bone mineral density measured at two consecutive visits, divided by the average.
In the data, each subject has, on average, fewer than two measurements, and the measurement points vary among subjects.
Many classical FDA methods cannot be applied to such data.
To address this limitation, many methods for sparsely observed data have been developed in the last two decades.
There are many kinds of research on the problem of estimating the mean and covariance functions \citep{yao2005functional,zhou2008joint,zhang2016sparse,li2020fast}, functional regression analysis \citep{yao2005functionalregresstion,csenturk2008generalized,he2010functional}, functional principal component analysis \citep{yao2005functional,hall2006properties,yao2006penalized,paul2009consistency}, and functional discriminant analysis \citep{james2001functional,park2022sparse}.
For a helpful summary of methods for sparsely observed data, see \citet{wang2016functional}.


To address this issue, several clustering methods also have been developed for sparsely observed data.
As a pioneering work, 
\citet{james2003clustering} introduce a general functional clustering model,
assuming that each individual follows a cluster-specific Gaussian process. 
In their model, mean and covariance functions are represented by a finite set of basis functions.
While they used the cubic B-spline basis functions, 
other studies have employed different basis functions (e.g., \citealp{giacofci2013wavelet,ng2006mixture}).
Other model-based clustering methods have been developed, including approaches that assume finite dimensional subspaces for cluster structure \citep{bouveyron2011model,bouveyron2015discriminative} and methods using functional principal components \citep*{jacques2013funclust,rivera2019robust}.
In contrast to the model-based approaches, \citet{peng2008distance} propose a method for estimating $L_{2}$ distances between subjects based on the truncated KL expansion, followed by applying the ordinary clustering methods to these estimated distances.

This paper proposes a fast $k$-means clustering method for sparsely observed data, which can be viewed as a simplification of the general functional model of \cite{james2003clustering}.
The $k$-means clustering is often formulated as a problem of estimating the cluster centers \citep{pollard1981strong,linder2002learning,biau2008performance,levrard2015nonasymptotic}.
Similarly, in our proposed method, we formulate the optimization problem to estimate cluster centers in a function space and represent cluster centers by the basis expansion.
Unlike \citet{abraham2003unsupervised}, our method does not smooth individual observations, making it applicable even to sparsely observed data.
This approach allows the cluster center function to be estimated using all discretely observed values from subjects within the cluster, allowing effective estimation even when each subject has only a few observed values. 
Consequently, the proposed method can reliably identify groups of subjects from sparsely observed data based on the estimated cluster centers.


This paper makes three significant contributions.
First, we formulate the $k$-means clustering to be applicable to sparsely observed data, which can be viewed as a simplification of \cite{james2003clustering}.
Second, we prove the theoretical properties of the proposed method that clustering methods should have \citep{pollard1981strong,linder2002learning,biau2008performance,levrard2015nonasymptotic}.
Third, we evaluate the clustering performance of the proposed method for finite samples, especially for small samples, by numerical experiments and confirm that the clustering estimation accuracy is better than or equal to existing clustering methods for sparsely observed data.
In particular, due to the simplicity of the proposed method, it has been demonstrated that the execution time is significantly faster compared to the method of \citet{james2003clustering}, an existing method with good performance.
In addition to these contributions, we illustrate the effectiveness of the proposed method by applying it to spine bone density data \citep{bachrach1999bone}.
Furthermore, we have created an $\mathtt{R}$ package $\mathtt{fkms}$ that can implement the proposed method.
It should be noted that the proposed method is not only applicable to sparsely observed data but also to densely observed data, making it a versatile method suitable for various situations.

The remainder of this paper is organized as follows.
In Section \ref{sec:proposed-method}, the proposed method is described, and then, in Section \ref{sec:theoretical-results}, we provide the theoretical properties of the proposed estimator.
Section \ref{sec:exmeriments} explains numerical experiments using artificial data to investigate the performance of the proposed method on finite samples.
In Section \ref{sec:real-data}, we apply the proposed method to spinal bone density data to illustrate the usefulness of the proposed method.
Finally, in Section \ref{sec:concluding-remarks}, some discussions are given, and directions for further research are mentioned.

\section{Proposed method}
\label{sec:proposed-method}

Let $T$ be a real-valued random variable with a continuous probability distribution $P_{T}$, which has the positive density on a closed interval $\mac{T}$.
Let $T_{ij}$ $(i=1, \dots, n; j=1, \dots, N_{i})$ represents the time point of the $j$th measurement for the $i$th subject, where each $T_{ij}$ is an i.i.d.\ copy of $T$.
We will denote the set of all square-integrable functions with respect to the measure $P_{T}$  on $\mac{T}$ by $L_{2}(P_{T})$.
Throughout, for any $f,g\in L_2(P_T)$, we define the inner product by $\left<f,g\right>_{P_T}=\int_{\mac{T}}f(t)g(t)dP_T(t)$ and the associated norm by $\|f\|_{P_T}=\left<f,f\right>_{P_T}^{1/2}$.
Let $X$ be an $L_2(P_T)$-valued random element.
we consider the clustering of $L_{2}(P_{T})$-valued random observations $X_{1},\dots,X_{n}$, that are i.i.d.\ copies of $X$, 
from the sparse observations $\{T_{ij},X_{i}(T_{ij});i=1,\dots,n,j=1\dots,N_{i}\}$.
Note that we will assume the independence among $T$, $T_{ij}$, $X$, and $X_{i}$, while the discretely observed measurement $X_{i}(T_{ij})$ may depend on $X_{i}(T_{ij'})$ for $j\neq{}j'$.

The aim of clustering is to assign each subject to one of a finite number of $K$ groups.
The $k$-means clustering in the space $L_2(P_T)$ finds the optimal cluster centers to minimize the expected square loss function
\begin{align}
	\Psi(\bs{f})
	=\mbb{E}\left[
	\min_{k=1,\dots,K}\|X-f_{k}\|_{P_T}^{2}
	\right]
	\label{eq:expected-square-loss}
\end{align}
over all possible choices of cluster centers $\bs{f}=\{f_{1},\dots,f_{K}\}$ with $f_k\in L_{2}(P_{T})$, $k=1,\dots,K$.
Then, each subject is assigned to the cluster with the closest cluster center. 
In practice, we cannot evaluate the loss function $\Psi$ since we do not know the distribution of $(X,T)$.
Instead, the empirical version of the loss $\Psi$ that can be calculated based on the observations has to be used.
In this paper, we propose the functional $k$-means clustering (abbreviated as FKM), which finds the optimal cluster centers by minimizing the following empirical loss function
\begin{align}
	\tilde{\Psi}_{n}(\bs{f})
	=\frac{1}{n}\sum_{i=1}^{n}\min_{k=1,\dots,K}\sum_{j=1}^{N_{i}}\frac{1}{N_{i}}
	\left\{X_{i}(T_{ij})-f_{k}(T_{ij})\right\}^{2}
	\label{eq:empirical-loss}
\end{align}
over all possible choices of cluster centers. 
The factor 
$1/N_i$ works as a weight for subject $i$, and it can be replaced with any other value.
For example, \citet{huang2002varying} have discussed in the functional regression framework that different choices of the weight for subject $i$ in the square loss lead to different convergence rates of the estimators.
Also, \citet{zhang2016sparse} have discussed the two types of weights, equal weight per observation (OBS) and equal weight per subject (SUBJ), in the estimation of mean and covariance functions. 
They show that the convergence properties are different between the two weighting schemes.
The estimator based on our loss function $\tilde{\Psi}$ can be considered the SUBJ scheme, and the same discussion could also be made for our method.

Let $\mac{F}$ be the set of all cluster centers $\bs{f}$.
Then, the set of population global optimizers will be denoted by $Q(\mac{F})=\{\bs{f}\in\mac{F}\mid\inf_{\bs{f}'\in\mac{F}}\Psi(\bs{f}')=\Psi(\bs{f})\}$.
We write $Q=Q(\mac{F})$ for simplicity.
The goal of the analysis is to estimate $\bs{f}^{*}\in{}Q$ from observed data.
It is practically achieved by finding cluster centers $\bs{f}\in\mac{F}$ that minimize the empirical loss function $\tilde{\Psi}(\bs{f})$.
In this paper, we propose the estimation procedure based on the basis expansion of cluster centers.
That is, by using basis functions $\{\phi_{l}(t)\}_{l=1,\dots,m}$ on $\mac{T}$, each cluster center $f_{k}$ is represented as $f_{k}(t)=\sum_{l=1}^{m}\beta_{kl}\phi_{l}(t)=\bs{\beta}_{k}^{\top}\bs{\phi}(t)$ where $\bs{\beta}_{k}=(\beta_{k1},\dots,\beta_{km})^{\top}$ and $\bs{\phi}(t)=(\phi_{1}(t),\dots,\phi_{m}(t))^{\top}$.
Note that we do not assume that the population global optimizer $\bs{f}^{*}\in{}Q$ can be represented by using basis functions in this manner.

We will denote realizations of the sparse observations by $\{t_{ij},x_{i}(t_{ij});\ i=1,\dots,n,j=1,\dots,N_{i}\}$.
Also, we will denote the set of indices of subjects who belong to cluster $k$ by $\mac{C}_{k}$ in the sample.
Then, the proposed algorithm to find optimal cluster centers is summarized in Algorithm \ref{alg:proposed-algorithm}.
\begin{algorithm}[!tb]
	\caption{Clustering algorithm of FKM}
	\label{alg:proposed-algorithm}
	\begin{algorithmic}[1]
		\STATE Set initial cluster assignment to each subject.
		\STATE For each $k=1,\dots,K$, the coefficient is updated as
		\begin{align*}
			\bs{\beta}_{k}
			=\argmin_{\bs{\beta}\in\mbb{R}^{m}}\left\{
                    \sum_{i\in\mac{C}_{k}}\sum_{j=1}^{N_{i}}\frac{1}{N_{i}}
			    (x_{i}(t_{ij})-\bs{\beta}^{\top}\bs{\phi}(t_{ij}))^{2}
                    +\lambda_k\int\left(\frac{d^2 f(t)}{dt^2}\right)^2dt
                    \right\}
		\end{align*}
            where $f(t)=\bs{\beta}^\top\bs{\phi}(t)$, and $\lambda_k$ is a smoothing parameter.
		Then, calculate cluster centers by $f_{k}(t)=\bs{\beta}_{k}^{\top}\bs{\phi}(t)$.
		\STATE Assign each subject to the closest cluster that is obtained as
		\begin{align*}
			k=\argmin_{\ell=1,\dots,K}
			\sum_{j=1}^{N_{i}}\left\{x_{i}(t_{ij})-f_{\ell}(t_{ij})\right\}^{2}.
		\end{align*}
		\STATE Step 2 and Step 3 are repeated until convergence.
	\end{algorithmic}
\end{algorithm}
In the algorithm, we first give an initial cluster assignment for all subjects.
It is possible to give an initial assignment randomly, or it can be determined based on prior knowledge.
Then, for each cluster $k$, the coefficient $\bs{\beta}_{k}$ is obtained by minimizing the penalized least square criterion based on the observed data $x_{i}(t_{ij})$ within the cluster.
In nonparametric regression, regularized regression may be essential to avoid overfitting, especially when the model is highly flexible.
The penalty on the second derivative of the basis functions can help to control the complexity of the model and improve its generalized performance.
Therefore, we use the regularized regression formulation in the proposed algorithm.
Cluster centers are updated by using the obtained coefficients.
Next, assign each subject to the cluster with the closest center evaluated at the observed times.
The update to the cluster centers and the re-assignment of subjects will be repeated until convergence.
The convergence can be evaluated using various criteria.
In this paper, when the assignment of subjects does not change, we conclude that the algorithm has converged.

From Algorithm \ref{alg:proposed-algorithm}, the proposed algorithm seems similar to the ordinary $k$-means clustering.
Then, as is also the case with the $k$-means clustering, we need to set the initial values before we run the algorithm;
the obtained clustering results depend on initial values.
There would be various options to determine initial values.
A simple and popular way is to use several random initial starts; then, we choose the optimal estimator that minimizes the empirical loss function in (\ref{eq:empirical-loss}) among candidates.
In this paper, we adopt this method for initialization.

It should be noted that in Algorithm 1, if each subject's curve can be represented by a basis expansion, i.e., assuming $X_i(t)=\bs{\beta}_i^\top\bs{\phi}(t)$, $i=1,\dots,n$, and there are no penalty terms, then our proposed method is equivalent to the $k$-means method proposed by \citet{abraham2003unsupervised}.
However, in the case of sparsely observed data, such basis expansion cannot be performed on the data, and thus, their method cannot be applied.

The proposed method can be viewed as a simplification of the functional clustering model introduced by \citet{james2003clustering}. 
Specifically, under this model, our approach assumes the variance-covariance matrix of discretely observed data to be an identity matrix and employs a classification likelihood approach for estimation. 
In contrast, although \citet{james2003clustering} also operate within the functional clustering model, they use a different estimation approach.
Their method combines mixed-effects and normal mixture models, with parameters estimated via the EM algorithm. 
Their approach is more comprehensive and complex than our estimation method; this added complexity comes with the drawback of requiring substantial computational time (see Section \ref{sec:exmeriments}).

For the implementation of the proposed method, the type and number of basis functions, as well as the number of clusters, must be predetermined.
There are several choices for the type of basis functions, for example, Fourier, B-spline, and wavelet basis functions.
We can use principal components curves estimated through functional principal components analysis for sparse longitudinal data \citep{yao2005functional}.
Then, the parameters required for the basis functions, the number of clusters, and the values of smoothing parameters must be selected.
These tuning parameters can be jointly determined by the standard procedures based on the within-cluster variance for selecting the number of clusters, which include CH index \citep{calinski1974dendrite}, Silhouette index \citep{kaufman2009finding}, cross-validation (CV) for clustering \citep{wang2010consistent}, and bootstrapping \citep{fang2012selection}.
With the parameters for the basis functions and smoothing parameters being fixed, we may use the Gap statistic \citep{tibshirani2001estimating} and Jump statistic \citep{sugar2003finding} to determine the number of clusters.
Note that, as will be discussed in the next section, our proposed estimator of cluster centers has consistency when the smoothing penalty is not applied.
Thus, in this case, it is possible to achieve the consistent selection of the tuning parameters using the CV method proposed by \citet{wang2010consistent}.
In the real data analysis presented in Section \ref{sec:real-data}, 
with a sufficiently large number of basis functions, the values of the smoothing parameters are selected based on the CV method.

\section{Theoretical results}
\label{sec:theoretical-results}

In this section, we establish the consistencies of the loss function evaluated at the proposed estimator and the estimated cluster centers.
To simplify the discussion, in Step 2 of Algorithm 1, we focus on the case where the smoothing parameters are fixed at 0, and thus, the ordinary least squares method is applied.
\citet{pollard1981strong} provided the consistency of the cluster centers for the $k$-means clustering in the Euclidean space, and then various theoretical results have been developed \citep{linder1994rates,linder2002learning,fischer2010quantization,clemenccon2014statistical}.
The theoretical properties of $k$-means clustering in Hilbert spaces have also been developed \citep{biau2008performance,fischer2010quantization,levrard2015nonasymptotic}.
These studies deal with clustering for continuously observed functional data $X$.
In practice, however, functional data are discretely observed, and therefore, existing theoretical results cannot be applied directly to our setting.
We will demonstrate the consistency of the proposed estimator for discretely observed data.
The proofs of all theoretical results can be found in the Appendix.

Without loss of generality, we consider $\mac{T}=[0,1]$.
For simplicity, we assume that for some $M>0$, $\|X\|_{\infty}=\sup_{t\in\mac{T}}|X(t)|\le{}M$ a.s.
We assume that $\{\phi_{j}\}_{j=1,\dots,\infty}$ is a complete orthonormal system (CONS) in $L_2(P_T)$ and bounded: 
there exists some $C_\phi>0$ such that $\|\phi_{j}\|_{\infty}=\sup_{t\in\mac{T}}|\phi_{j}(t)|\le{}C_\phi$ for all $j\in \mathbb{N}$.
For the sake of simplicity, 
we assume that $P_T$ is the uniform distribution on $\mac{T}$.

We will denote the number of basis functions $m$ as $m_n$ to clarify the relationship with the sample size $n$.
Let $G_n$  be the linear space spanned by $\{\phi_{1},\dots,\phi_{m_n}\}$ and we write the direct products of $K$ linear spaces $G_n$ as $G_n^{K}=\prod_{k=1}^{K}G_n$.
For a constant $C>0$, let $G_n(C)=\{f\in{}G_n\mid\|f\|_{P_T}\le{}C\}$ and $G_n^{K}(C)=\prod_{k=1}^{K}G_n(C)$.
The set of the cluster centers $\{f_{1},\dots,f_{K}\}$ that are elements in 
$G_n(C)$ is expressed by
\begin{align*}
	\mac{F}_{G_n(C)}
	 & =\left\{\bs{f}\in\mac{F}\,\big|\, \bs{f}=(f_{1},\dots,f_{K})^\top\in{}G_n^K(C)\right\}.
\end{align*}
The set of all optimal cluster centers in $\mac{F}_{G_n(M)}$ that minimize the expected loss $\Psi$ in (\ref{eq:expected-square-loss}) is denoted by
\begin{align*}
	Q(\mac{F}_{G_n(M)})
	 & =\left\{\bs{f}\in\mac{F}_{G_n(M)}\,\Big|\,\inf_{\bs{f}'\in\mac{F}_{G_n(M)}}\Psi(\bs{f}')=\Psi(\bs{f})\right\}.
\end{align*}
Similarly, the set of cluster centers in $\mac{F}_{G_n(M)}$ that minimize the empirical loss $\tilde{\Psi}_{n}$ in (\ref{eq:empirical-loss}) is denoted by
\begin{align*}
	\widetilde{Q}_n(\mac{F}_{G_n(M)})
	 & =\left\{\bs{f}\in\mac{F}_{G_n(M)}\,\Big|\,\inf_{\bs{f}'\in\mac{F}_{G_n(M)}}\tilde{\Psi}_{n}(\bs{f}')=\tilde{\Psi}_{n}(\bs{f})\right\}.
\end{align*}

First, we can obtain the consistency in which the expected loss function evaluated at the proposed estimator converges to the optimal value of the loss function.
Write $D_n=n^{-1}\sum_{i=1}^n N_i^{-1/2}$.


\begin{Theo}
    \label{Thm:consistency-loss}
    Let $\bs{f}^{*}\in{}Q$ be any population global optimizer, and let $\tilde{\bs{f}}_{n}^{*}\in{}\widetilde{Q}_n(\mac{F}_{G_n(M)})$ be any empirical global optimizer.
    If $D_n^2 m_n^2 \log n\ra 0$, then as $n\ra\infty$, we have
	\begin{align*}
		\Psi(\tilde{\bs{f}}_{n}^{*})-\Psi(\bs{f}^{*})
		\ra 0\ \ \text{a.s.}
	\end{align*}
\end{Theo}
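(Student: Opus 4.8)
The plan is to follow the classical Pollard-type empirical-process argument for $k$-means consistency, adapted to the discretely-observed setting, and to decompose the excess loss into a \emph{stochastic} term and an \emph{approximation} term. Write
\begin{align*}
	\Psi(\tilde{\bs{f}}_n^*)-\Psi(\bs{f}^*)
	&= \Big[\Psi(\tilde{\bs{f}}_n^*)-\tilde{\Psi}_n(\tilde{\bs{f}}_n^*)\Big]
	+\Big[\tilde{\Psi}_n(\tilde{\bs{f}}_n^*)-\tilde{\Psi}_n(\bs{f}_n^\circ)\Big]
	+\Big[\tilde{\Psi}_n(\bs{f}_n^\circ)-\Psi(\bs{f}_n^\circ)\Big]
	+\Big[\Psi(\bs{f}_n^\circ)-\Psi(\bs{f}^*)\Big],
\end{align*}
where $\bs{f}_n^\circ\in Q(\mac{F}_{G_n(M)})$ is a population optimizer within the sieve. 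The second bracket is $\le 0$ by definition of $\tilde{\bs{f}}_n^*$. The first and third brackets are each bounded by $\sup_{\bs{f}\in\mac{F}_{G_n(M)}}|\tilde{\Psi}_n(\bs{f})-\Psi(\bs{f})|$, which is the main quantity to control. The last bracket, the approximation error, I would handle separately: since $\{\phi_j\}$ is a CONS, for any $f^*_k\in Q$ the truncation $\Pi_{m_n}f^*_k$ satisfies $\|f^*_k-\Pi_{m_n}f^*_k\|_{P_T}\to 0$, and (using $\|X\|_\infty\le M$ to keep everything in $G_n(M)$, possibly after a harmless truncation/scaling argument) a short expansion of the squared norm shows $\Psi(\bs{f}_n^\circ)-\Psi(\bs{f}^*)\le \Psi(\Pi_{m_n}\bs{f}^*)-\Psi(\bs{f}^*)\to 0$.

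For the uniform deviation term, the key decomposition is between the ``continuous'' empirical loss $\Psi_n(\bs{f})=n^{-1}\sum_i \min_k\|X_i-f_k\|_{P_T}^2$ and our genuinely discrete loss $\tilde{\Psi}_n(\bs{f})$. So I would further split
\begin{align*}
	\sup_{\bs{f}\in\mac{F}_{G_n(M)}}|\tilde{\Psi}_n(\bs{f})-\Psi(\bs{f})|
	\le \sup_{\bs{f}}|\tilde{\Psi}_n(\bs{f})-\Psi_n(\bs{f})|
	+\sup_{\bs{f}}|\Psi_n(\bs{f})-\Psi(\bs{f})|.
\end{align*}
The second term is the standard Pollard/Vapnik–Chervonenkis uniform LLN over the class of functions $x\mapsto\min_k\|x-f_k\|^2$ indexed by $\bs{f}\in G_n^K(M)$; since this is a finite-($Km_n$)-dimensional parameterization with bounded envelope ($O(M^2)$), a covering-number bound of order $(A/\varepsilon)^{Km_n}$ combined with Hoeffding/bounded-differences gives that it is $O_{a.s.}\!\big(\sqrt{m_n\log n/n}\big)$, which $\to 0$ under the hypothesis (note $D_n\ge n^{-1}\cdot n\cdot n^{-1/2}\cdot$ nothing useful directly, but certainly $m_n\log n/n\to 0$ follows from $D_n^2 m_n^2\log n\to 0$ together with $D_n\ge n^{-1/2}\cdot$const only in degenerate cases — I would instead note $D_n^2 m_n^2\log n\to0$ already forces $m_n^2\log n = o(1/D_n^2)$ and separately one verifies $m_n\log n/n\to0$, e.g.\ because $D_n\le 1$ always so $m_n^2\log n\to 0$, hence $m_n\log n/n\to 0$ trivially). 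The first term — the discretization gap — is where $D_n$ enters: for fixed $\bs{f}$, conditionally on $X_i$ and on $N_i$, $N_i^{-1}\sum_j\{X_i(T_{ij})-f_k(T_{ij})\}^2$ is an average of $N_i$ i.i.d.\ bounded terms with mean $\|X_i-f_k\|_{P_T}^2$, so its fluctuation around the mean is $O(N_i^{-1/2})$; averaging over $i$ produces the factor $D_n=n^{-1}\sum_i N_i^{-1/2}$. Making this uniform over $\bs{f}\in G_n^K(M)$ costs another covering-number/chaining factor: one shows the discretized process is Lipschitz in $\bs{\beta}$ with constant controlled by $\|\phi_j\|_\infty\le C_\phi$ and $m_n$, so the supremum is $O_{a.s.}(D_n m_n\sqrt{\log n})$, and squaring gives exactly the condition $D_n^2 m_n^2\log n\to 0$.

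Concretely the steps, in order: (1) reduce to the sieve and peel off the approximation error via CONS truncation; (2) reduce the excess loss to $2\sup_{\mac{F}_{G_n(M)}}|\tilde{\Psi}_n-\Psi|$ plus $o(1)$; (3) bound $\sup|\Psi_n-\Psi|$ by a VC/covering argument for finite-dimensional bounded $k$-means classes, getting $O_{a.s.}(\sqrt{m_n\log n/n})$; (4) bound the discretization gap $\sup|\tilde{\Psi}_n-\Psi_n|$ by conditioning on $(X_i,N_i)$, applying a Bernstein/Hoeffding bound for the inner average (variance $\lesssim N_i^{-1}$), then a union bound over a $\rho$-net of $G_n^K(M)$ of cardinality $(A/\rho)^{Km_n}$ together with the Lipschitz estimate in $\bs{\beta}$ using $C_\phi$; optimizing $\rho$ yields $O_{a.s.}(D_n m_n\sqrt{\log n})$; (5) collect terms and invoke $D_n^2 m_n^2\log n\to 0$ (and the implied $m_n\log n/n\to0$) to conclude a.s.\ convergence, using Borel–Cantelli on the exponential tail bounds for the almost-sure (not merely in-probability) statement. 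The main obstacle is step (4): getting the dependence on $N_i$ right \emph{uniformly} in $\bs{f}$ — one must be careful that the net for $G_n^K(M)$ and the per-subject concentration interact correctly, in particular that the Lipschitz constant of $\bs{\beta}\mapsto N_i^{-1}\sum_j\{x_i(t_{ij})-\bs{\beta}^\top\bs{\phi}(t_{ij})\}^2$ on the ball $\{\|\bs{\beta}\|\le M'\}$ is bounded by a deterministic $O(m_n)$ (via $\|\bs{\phi}\|_\infty\le C_\phi\sqrt{m_n}$ and boundedness of $x_i$), so that the chaining/net argument does not lose extra powers of $m_n$ or $n$. The passage from ``in probability'' to ``almost surely'' is routine once the tail bounds are exponential and summable, which the bounded-envelope assumption guarantees.
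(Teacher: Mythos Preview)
Your overall decomposition matches the paper's exactly: excess loss $\le 2\sup_{\mac{F}_{G_n(M)}}|\tilde\Psi_n-\Psi_n| + 2\sup_{\mac{F}_{G_n(M)}}|\Psi_n-\Psi| + (\Psi(\bs f_{G_n}^*)-\Psi(\bs f^*))$, with the approximation term handled by CONS truncation and continuity of $\Psi$. The differences are in how the two suprema are bounded.

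For step (3), $\sup|\Psi_n-\Psi|$, the paper does \emph{not} use a covering argument. It invokes the dimension-free Rademacher bound of Biau, Devroye and Lugosi (2008, Lemma~4.3), which gives $\mbb E\sup|\Psi_n-\Psi|\le 8KM^2/\sqrt n$ regardless of $m_n$; McDiarmid then upgrades this to an exponential tail bound depending only on $n$. Your covering route produces an $m_n$-dependent rate, and your attempt to show it vanishes is flawed: from $D_n\le 1$ and $D_n^2 m_n^2\log n\to 0$ you \emph{cannot} conclude $m_n^2\log n\to 0$ (the inequality goes the wrong way). Indeed the theorem allows $N_i$ to be arbitrarily large, so $D_n$ can be tiny while $m_n$ is large; e.g.\ $N_i\equiv e^n$, $m_n=n^2$ satisfies the hypothesis but has $m_n\log n/n\to\infty$. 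The dimension-free Hilbert-space bound is what makes the paper's argument go through without any side condition here.

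For step (4), $\sup|\tilde\Psi_n-\Psi_n|$, the paper also avoids covering: it applies McDiarmid per subject and bounds the Rademacher complexity of $\mac Z=\{(x,t)\mapsto(x(t)-f(t))^2:f\in G_n(M)\}$ directly via Cauchy--Schwarz and the Ledoux--Talagrand contraction/comparison inequality, obtaining $\hat R_T(\mac Z)\lesssim m_n/\sqrt{N_i}$; together with the $O(m_n)$ envelope in the McDiarmid term this yields $O_{a.s.}(D_n m_n\sqrt{\log n})$ after a union bound over $i$ and Borel--Cantelli. Your covering/union-bound sketch is in the right spirit, but as written it is not clear you recover the same rate: the envelope is $O(m_n)$ and the net has log-cardinality $\asymp m_n$, so a naive union bound loses an extra $\sqrt{m_n}$. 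Matching the paper's $D_n m_n\sqrt{\log n}$ with covering would require a sharper chaining argument than you indicate.
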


\vspace{10pt}
From this theorem, when the sample size is sufficiently large, the proposed estimators of cluster centers are expected to minimize the expected loss function.
Similar results have been reported for the ordinary $k$-means clustering in the Euclidean space \citep{linder1994rates,linder2002learning} and in the Hilbert space with continuously observed data \citep{biau2008performance}.
As far as we know, this is the first result of its kind for the clustering of discretely observed functional data.
The condition $D_n^2 m_n^2 \log n\ra 0$ represents the dependency of the number of basis functions $m_n$ and the number of measurement points per subject $N_i$ on the sample size $n$.

Next, we establish the consistency of the estimated cluster centers.
The distance between two finite subsets $A$ and $B$ of $L_{2}(P_{T})$ can be measured by the Hausdorff metric that is defined as
\begin{align*}
	d_{H}(A,B)
	=\max\left\{
	\max_{a\in{}A}\left\{\min_{b\in{}B}\|a-b\|_{P_T}\right\},
	\max_{b\in{}B}\left\{\min_{a\in{}A}\|a-b\|_{P_T}\right\}
	\right\}.
\end{align*}
Then, the distance between any cluster centers $\bs{f}\in\mac{F}$ and the set of cluster centers $A\subset\mac{F}$ is defined as
\begin{align*}
	d(\bs{f},A)
	=\inf\{d_{H}(\bs{f},\bs{g})\mid{}\bs{g}\in{}A\}.
\end{align*}

\begin{Theo}
    \label{Thm:consistency-estimator}
    Let $\tilde{\bs{f}}_{n}^{*}\in{}\widetilde{Q}_n(\mac{F}_{G_n(M)})$ be any empirical global optimizer.
    If $D_n^2 m_n^2 \log n\ra 0$, then as $n\ra\infty$, we have
    \begin{align*}
	d(\tilde{\bs{f}}_{n}^{*},Q)
	\ra0\ \ \text{a.s.}
    \end{align*}
\end{Theo}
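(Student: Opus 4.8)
The plan is to read off the result from Theorem~\ref{Thm:consistency-loss} together with a purely deterministic ``well-separation'' property of the population loss. Observe first that every empirical optimizer lies in $\mac{F}_{G_n(M)}$, so each of its components has $L_2(P_T)$-norm at most $M$; the same holds for every $\bs{f}^{*}\in Q$, since an optimal cluster centre is a conditional mean of $X$ over a Voronoi region and $\|X\|_{P_T}\le\|X\|_{\infty}\le M$ a.s. Hence it suffices to work inside $\mac{B}=\{\bs{f}\in\mac{F}\mid\|f_{k}\|_{P_T}\le M,\ k=1,\dots,K\}$, and for $\varepsilon>0$ to consider
\begin{align*}
	h(\varepsilon)=\inf\bigl\{\Psi(\bs{f})-\Psi(\bs{f}^{*})\ \big|\ \bs{f}\in\mac{B},\ d(\bs{f},Q)\ge\varepsilon\bigr\},
\end{align*}
where $\bs{f}^{*}\in Q$ is any population optimizer, so $\Psi(\bs{f}^{*})=\inf_{\bs{f}\in\mac{F}}\Psi(\bs{f})$. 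If $h(\varepsilon)>0$ for every $\varepsilon>0$, the theorem follows at once: on the probability-one event given by Theorem~\ref{Thm:consistency-loss} we have $\Psi(\tilde{\bs{f}}_{n}^{*})-\Psi(\bs{f}^{*})\to0$, so for each $\varepsilon>0$ eventually $\Psi(\tilde{\bs{f}}_{n}^{*})-\Psi(\bs{f}^{*})<h(\varepsilon)$, which by the definition of $h$ forces $d(\tilde{\bs{f}}_{n}^{*},Q)<\varepsilon$; letting $\varepsilon\downarrow0$ gives $d(\tilde{\bs{f}}_{n}^{*},Q)\to0$ a.s. In this way all of the randomness and the growing dimension $m_{n}$ are absorbed into Theorem~\ref{Thm:consistency-loss}, and only the deterministic inequality $h(\varepsilon)>0$ remains.

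To prove $h(\varepsilon)>0$ I would argue by contradiction: if $h(\varepsilon)=0$ there are codebooks $\bs{f}_{l}\in\mac{B}$ with $d(\bs{f}_{l},Q)\ge\varepsilon$ and $\Psi(\bs{f}_{l})\downarrow\Psi(\bs{f}^{*})$. Since $\mac{B}$ is a bounded, closed, convex subset of the Hilbert space $L_2(P_T)^{K}$, it is weakly sequentially compact, so along a subsequence $\bs{f}_{l}\rightharpoonup\bar{\bs{f}}\in\mac{B}$. Next I would check that $\Psi$ is weakly sequentially lower semicontinuous on $\mac{B}$: for a fixed realization of $X$ the map $\bs{f}\mapsto\|X-f_{k}\|_{P_T}^{2}$ is convex and norm-continuous, hence weakly l.s.c.; the minimum over the finite index set $k$ preserves weak l.s.c.; and Fatou's lemma (the integrands are bounded by $(2M)^{2}$) transfers this to $\Psi=\mbb{E}[\min_{k}\|X-f_{k}\|_{P_T}^{2}]$. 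Consequently $\Psi(\bar{\bs{f}})\le\liminf_{l}\Psi(\bs{f}_{l})=\Psi(\bs{f}^{*})=\inf_{\mac{F}}\Psi$, and therefore $\bar{\bs{f}}\in Q$.

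The hard part is that this is not yet a contradiction: I have $d(\bs{f}_{l},Q)\le d_{H}(\bs{f}_{l},\bar{\bs{f}})$, but $d_H$ is built from $L_2(P_T)$-norms, which are only weakly \emph{lower} semicontinuous, so weak convergence does not give $d_{H}(\bs{f}_{l},\bar{\bs{f}})\to0$. The main task is thus to upgrade the weak convergence to norm convergence, i.e.\ to show that a codebook whose loss is close to the optimum must be $L_2(P_T)$-close to $Q$, not merely weakly close. My plan is to exploit optimality through the within-cell decomposition $\Psi(\bs{f})=\sum_{k}\bigl(\mbb{E}[\|X-\mbb{E}[X\mid V_k(\bs{f})]\|_{P_T}^{2}\1\{X\in V_k(\bs{f})\}]+P(V_k(\bs{f}))\,\|f_{k}-\mbb{E}[X\mid V_k(\bs{f})]\|_{P_T}^{2}\bigr)$, where $V_k(\bs{f})$ is the Voronoi cell of $f_k$: since $\Psi(\bs{f}_{l})\to\Psi(\bar{\bs{f}})$ with $\bar{\bs{f}}$ optimal, the second sum must vanish in the limit and the cells must stabilize, forcing $\|f_{l,k}\|_{P_T}\to\|\mbb{E}[X\mid V_k(\bar{\bs{f}})]\|_{P_T}=\|\bar f_{k}\|_{P_T}$ for each $k$; combined with $f_{l,k}\rightharpoonup\bar f_{k}$, the Radon--Riesz property of Hilbert spaces then yields $f_{l,k}\to\bar f_{k}$ in $L_2(P_T)$. (Equivalently one may truncate to $G_m$ for large $m$: because $\{\phi_j\}$ is a CONS and $\|X\|_{\infty}\le M$, the tail $\mbb{E}\|X-\Pi_{G_m}X\|_{P_T}^{2}$ is uniformly negligible, so on $\mac{B}$ the loss $\Psi$ differs by a vanishing amount from the finite-dimensional $k$-means loss of $\Pi_{G_m}X$ on the compact ball $G_m^{K}(M)$, where the classical Euclidean well-separation lemma of \citet{pollard1981strong} applies; the substantive point---that a near-optimal centre cannot carry norm in directions transverse to where $X$ concentrates---is the same in either route.) With $\bs{f}_{l}\to\bar{\bs{f}}$ strongly, $d(\bs{f}_{l},Q)\le d_{H}(\bs{f}_{l},\bar{\bs{f}})\to0$ contradicts $d(\bs{f}_{l},Q)\ge\varepsilon$, so $h(\varepsilon)>0$ and the theorem follows. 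The step I expect to demand the most care is exactly $\limsup_{l}\|f_{l,k}\|_{P_T}\le\|\bar f_{k}\|_{P_T}$ (ruling out ``mass escaping into high-frequency components'' of near-optimal codebooks, and handling cells of vanishing probability), since this is where infinite-dimensionality bites and the argument must depart from Pollard's finite-dimensional template.
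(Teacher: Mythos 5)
Your overall architecture coincides with the paper's: both proofs reduce consistency of the centers to (i) convergence of a loss evaluated at the empirical optimizer to the optimal value $\Psi(\bs{f}^{*})$, and (ii) a well-separation property, namely that codebooks at Hausdorff distance at least $\varepsilon$ from $Q$ have population loss bounded away from $\Psi(\bs{f}^{*})$. Your route through Theorem~\ref{Thm:consistency-loss} and the deterministic quantity $h(\varepsilon)$ is, if anything, cleaner than the paper's (which runs the comparison at the level of $\tilde{\Psi}_{n}$ on the sets $\mac{F}_{G_n(M),\epsilon}$ via Lemmas~\ref{lemm:lim-tilde-Psi-n} and~\ref{Lemm:inequality-eps}), and your preliminary steps are sound: the ball of codebooks is weakly sequentially compact, $\Psi$ is weakly sequentially lower semicontinuous, and hence any weak limit point $\bar{\bs{f}}$ of a minimizing sequence lies in $Q$.

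The gap is exactly where you place it, and the argument you sketch does not close it. (The paper does not prove this step from scratch either: Lemma~\ref{Lemm:inequality-eps} invokes Lemma~4.4 of \citet{levrard2015nonasymptotic} for the existence of a minimizer of $\Psi$ over $\mac{F}\backslash B^{o}(Q,\epsilon)$, after which strictness is immediate because such a minimizer cannot lie in $Q$.) Your within-cell decomposition only yields $\sum_{k}P(V_{k}(\bs{f}_{l}))\,\|f_{l,k}-\mbb{E}[X\mid V_{k}(\bs{f}_{l})]\|_{P_T}^{2}\ra 0$, and the inference that ``the cells must stabilize'' and $\|f_{l,k}\|_{P_T}\ra\|\bar{f}_{k}\|_{P_T}$ does not follow from weak convergence: membership of $X$ in $V_{k}(\bs{f}_{l})$ is decided by $\|f_{l,k}\|_{P_T}^{2}-2\lan X,f_{l,k}\ran_{P_T}$, whose first term is only lower semicontinuous along the sequence, so the Voronoi cells of $\bs{f}_{l}$ need not converge to those of $\bar{\bs{f}}$; centers whose cells have vanishing probability are entirely unconstrained by the decomposition; and even for cells with probability bounded below, you need $\mbb{E}[X\mid V_{k}(\bs{f}_{l})]\ra\bar{f}_{k}$ strongly, which is essentially the statement you are trying to prove. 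To complete your route you would have to carry out the weak-to-strong upgrade in full: pass to a subsequence along which $\|f_{l,k}\|_{P_T}^{2}\ra\alpha_{k}\ge\|\bar{f}_{k}\|_{P_T}^{2}$, use dominated convergence to show $\Psi(\bs{f}_{l})\ra\mbb{E}\min_{k}(\|X\|_{P_T}^{2}-2\lan X,\bar{f}_{k}\ran_{P_T}+\alpha_{k})$, conclude that for each $k$ either $\alpha_{k}=\|\bar{f}_{k}\|_{P_T}^{2}$ (Radon--Riesz then gives strong convergence of that center) or the $k$-th cell is asymptotically negligible, and then argue that a stray center over a negligible cell can be replaced by some $\bar{f}_{j}$ without leaving $Q$, so that $\bs{f}_{l}$ is still within $\varepsilon$ of $Q$ in the Hausdorff metric. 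Absent that, the cleanest fix is to do what the paper does and cite the separation result of \citet{levrard2015nonasymptotic}.
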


As mentioned in Section 1, the proposed method can be applied not only to sparse observational data but also to dense data. 
For dense data, \citet{abraham2003unsupervised} proposed a $k$-means method based on the basis function expansion of each subject’s data $X_i$ and demonstrated the consistency of their estimator for cluster centers. 
Compared to their result, Theorem \ref{Thm:consistency-estimator} guides the appropriate size of the number of basis functions $m_n$ in relation to the sample size $n$ and the number of measurements $N_i$


\section{Experiments}
\label{sec:exmeriments}


The finite sample performance of the proposed method, FKM, was investigated through simulations.
First, we evaluate the relatively large sample behavior of the estimated cluster centers using FKM.
Then, to evaluate the finite sample property on cluster recovery, FKM is compared with commonly used existing clustering methods for sparsely observed data: functional clustering model \citep{james2003clustering} and clustering based on the estimated distance proposed by \citet{peng2008distance}.
Following the terminology used in \citet{yassouridis2018generalization}, in the remainder of this paper, we will refer to those methods as \textit{FCM} and \textit{distclust}, respectively.
Moreover, the proposed method is a $k$-means-like simple algorithm, and it is expected to analyze relatively large datasets within a reasonable amount of time.
Therefore, we evaluate the execution time of the proposed method.
Unless otherwise indicated, all analyses were performed using \texttt{R} version 4.3.3 \citep{Rmanual}.


We used the following data-generating process.
This simulation deals with the two-cluster setting, i.e., $K=2$.
The number of measurements $N_{i}$ of subject $i$ was randomly drawn from the binomial distribution with expected value $N_{tp}$. 
Then, the time points $t_{ij}$ ($j=1,\dots,N_{i}$) were randomly drawn from the uniform distribution on $[0,1]$.
If the generated number of measurements was less than 2, it was set to 2.
\mymemo{
	【提案手法は時点数が1でも利用可能であるが、他の方法（特に、distclust）を適用可能とするために、最小の時点数を2に設定した。】
}
Next, measurements on $t_{ij}$ of subject $i$ who belonged to cluster $k$ were generated through the following equation
\begin{align}
	X_{i}(t_{ij})
	=\sum_{u=1}^{40}(u^{-1}(Z_{i}-1)+\mu_{ku})\sqrt{2}\sin(\pi{}ut_{ij})+\epsilon_{ij}
	\label{eq:simulation_data}
\end{align}
where $Z_{i}$ was randomly drawn from the exponential distribution with rate $\lambda=1$ and $\epsilon_{ij}$ was randomly drawn from the normal distribution $N(0,\sigma^{2})$.
The equation of data generation is similar to that of Delaigle and Hall (2012).
The difference between clusters was expressed in $\mu_{ku}$, $u=1,\dots,40$, which were determined as
\begin{align*}
	\bs{\mu}_{1}
	 & =(0.5, -0.2, 1, -0.5, 0, -0.7, \bs{0}_{34}^{\top})^{\top},     \\
	\bs{\mu}_{2}
	 & =(0, -0.75, 0.75, -0.15, 1.4, 0.1, \bs{0}_{34}^{\top})^{\top}.
\end{align*}

Three experimental factors are considered: sample size $n$, the expected number of measurements $N_{tp}$, and the size of the error variance $\sigma^{2}$.
The sample size was set at $n=50, 100, 200$, and $400$.
Note that the balanced case between clusters was considered; that is, the sample size belonging to each cluster was $n/2$.
We considered three levels of expected measurements $N_{tp}=3$, $5$, and $10$, in which the smaller the value, the more sparse the condition is considered to be.
Finally, the size of error variances was set at three levels: $\sigma=0.1$, $1.0$, and $2.0$.
The experimental design was fully crossed, with 100 replications per cell, yielding $4\times3\times3\times100=3600$ simulated data sets.
Examples of the generated observations are shown in Figure \ref{fig:simulation_data_example}.
Population optimal cluster centers in the figure were calculated by the $k$-means clustering for the dense equispaced measurements that were generated through (\ref{eq:simulation_data}) without the error term $\epsilon_{ij}$ when $n=\text{10,000}$ and the number of measurements was 400 for all subjects.

\begin{figure}[!tb]
	\begin{center}
		\renewcommand{\arraystretch}{1}
		\vspace{0.5cm}
		\begin{tabular}{c}
			\includegraphics[width=17cm]{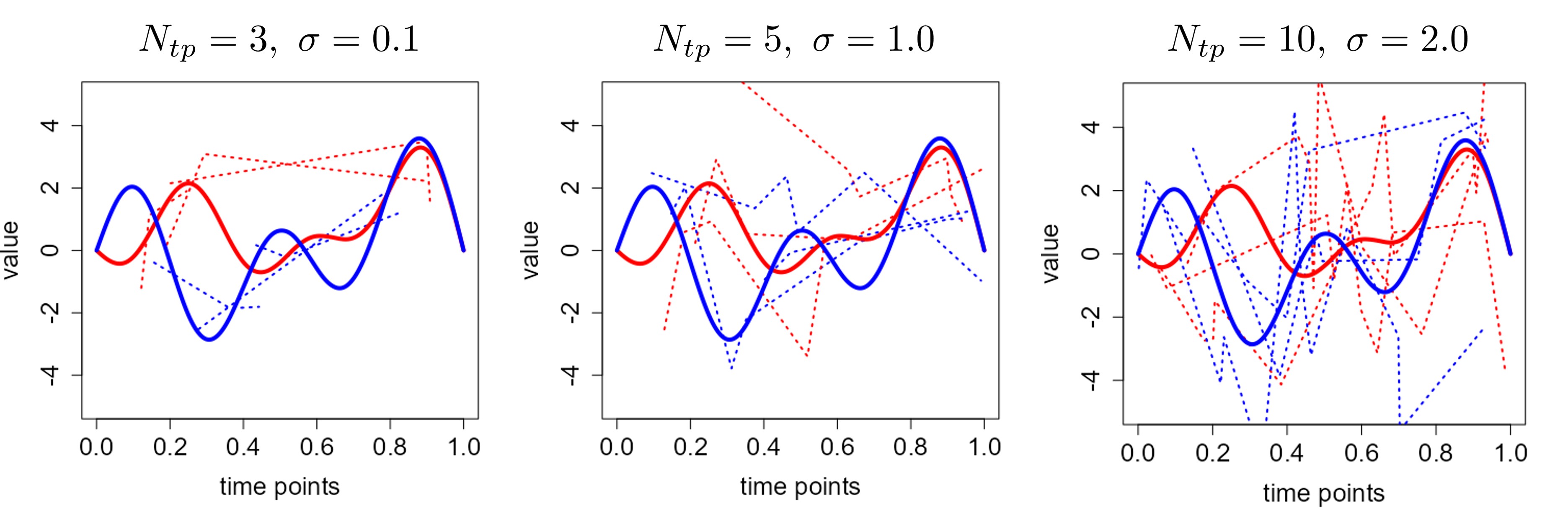}
		\end{tabular}
		\caption{
            Generated observations for $(N_{tp}, \sigma) = (3, 0.1), (5, 1.0), (10, 2.0)$ (from left to right); 
            each panel representing optimal cluster centers (thick lines) and three subject curves per cluster connecting observed points (thin dotted lines);
            clusters in red and blue.
  }
  		\label{fig:simulation_data_example}
	\end{center}
\end{figure}

We used the \texttt{R} package \texttt{fkms} to run FKM.
In these experiments, we used Fourier and B-spline basis functions as those required for FKM.
Those are referred to as FKM-f and FKM-b, respectively.
The number of basis functions was fixed at fifteen so that it was sufficient to represent the mean functions well enough.
As in Section \ref{sec:theoretical-results}, the smoothing parameter in Step 2 of Algorithm 1 was set to 0 for this simulation to simplify the discussion.
As described above, the results of FKM depend on the initial setting of the cluster assignment.
Thus, we used a hundred random initial values for the first cluster assignment in Step 1 of Algorithm \ref{alg:proposed-algorithm}.
The result with the smallest empirical loss \eqref{eq:empirical-loss} was adopted as the final result.

\begin{figure}[!tb]
	\begin{center}
		\renewcommand{\arraystretch}{1}
		\vspace{0.5cm}
		\begin{tabular}{c}
			\includegraphics[width=18cm]{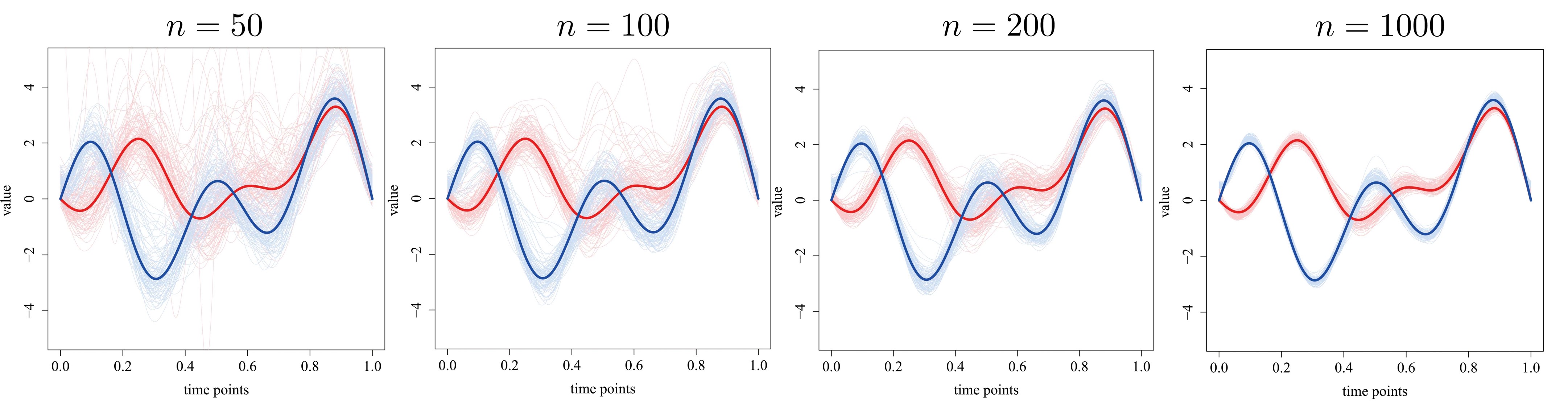}
		\end{tabular}
   \caption{
   Estimated cluster centers (light lines) across 100 replications and 
   optimal cluster centers (bold lines) at the expected time points $N_{tp}=10$ and error variance $\sigma=0.1$; 
   panels representing sample sizes $n=50$, 100, 200, and 1000 (from left to right); clusters in red and blue.
   }
		\label{fig:simulation_convergence}
	\end{center}
\end{figure}

\vspace{10pt}
First, we show the results of the relatively large sample behavior of the proposed estimator.
Figure \ref{fig:simulation_convergence} shows the estimated cluster centers by FKM with Fourier basis functions for 100 repetitions at each combination of conditions.
This experiment was conducted with a maximum sample size of 1000.
Four cases of $n$ are shown when the expected number of time points and the error variance are set at $N_{tp}=10$ and $\sigma=0.1$.
As the sample size increased, 
the accuracy of the estimation of cluster centers improved as expected.
In other words, as stated in Theorem \ref{Thm:consistency-estimator}, the estimator of the proposed method converges to the truly optimal cluster center when the number of bases and the sample size are sufficiently large.
This trend was also true when the B-spline basis functions were used; see \textit{Supplementary Information} \ref{SI-subsec:cluster-centers-Bspline} for the results of the B-spline basis.
For reference, results under all the combinations of conditions are shown in \textit{Supplementary Information} \ref{SI-subsec:cluster-centers-Fourier}.

Next, we move to the finite sample property of cluster recovery compared with existing methods, FCM and distclust.
When implementing FCM, we used the clustering likelihood \citep{james2003clustering} and B-spline basis functions.
Here, we used the same number of basis functions as FKM.
The results of FCM depend on the first cluster assignment in the EM algorithm; hence, we used a hundred random initial starts to obtain the result of FCM, and the final estimate was selected as one that gave the largest likelihood among the hundred estimates.

For conducting distclust, the number of principal components used to estimate distances among subjects has to be determined.
We used the same number of principal components as that of the basis functions of FKM and FCM.
The clustering method used in distclust depends on the initial values for conducting the algorithm; hence, we used a hundred random initial starts for distclust, in which the final estimate was selected as one that gave the smallest within-cluster variance among the hundred estimates.
We conducted distclust using \texttt{R} package \texttt{funcy} with \texttt{R} version 3.6.3, as the package is incompatible with later versions.



\vspace{5pt}
Clustering performance was evaluated by correct classification rate (CCR) and adjusted Rand index (ARI; \citealp{hubert1985comparing}).
In clustering, cluster labels are indeterminate; hence, the error rate is defined by a minimum error mapping between the predicted classification and the known true clusters, and the CCR is defined as one minus the error rate.
The ARI has a maximum value of 1 in the case of a perfect cluster recovery and a value of 0 when the true cluster membership and the estimated cluster membership do not coincide more than would be expected by chance.

\begin{table}[!tb]
	\caption{Correct classification rates in percentages averaged over 100 replications under the combinations of conditions $(N_{tp},\sigma,n)$; $N_{tp}$, $\sigma$, and $n$ denote the expected number of time points, the standard deviation of the error term, and sample size, respectively.}
	\label{tab:simulation-results}
	\centering
	\begin{tabular}{cccccccccccccc}
		\toprule
		                            &           &                         & \multicolumn{3}{c}{$\sigma$=0.1} & \multicolumn{4}{c}{$\sigma$=1.0} & \multicolumn{4}{c}{$\sigma$=2.0}                                                         \\
		\cmidrule{3-14}    $N_{tp}$ & Methods   & \multicolumn{4}{c}{$n$} & \multicolumn{4}{c}{$n$}          & \multicolumn{4}{c}{$n$}                                                                                                     \\
		                            &           & 50                      & 100                              & 200                              & 400                              & 50   & 100  & 200  & 400  & 50   & 100  & 200  & 400  \\
		\midrule
		3                           & FKM-f     & 66.3                    & 68.3                             & 72.0                             & 73.3                             & 63.8 & 65.4 & 67.6 & 69.3 & 59.4 & 58.7 & 59.9 & 62.6 \\
		                            & FKM-b     & 67.1                    & 68.3                             & 72.2                             & 73.6                             & 64.5 & 65.8 & 67.8 & 69.4 & 59.4 & 60.2 & 60.2 & 62.1 \\
		                            & FCM       & 66.8                    & 68.7                             & 72.7                             & 75.9                             & 64.1 & 64.8 & 67.7 & 69.5 & 59.3 & 59.8 & 60.6 & 61.9 \\
		                            & distclust & 59.3                    & 60.5                             & 61.9                             & 63.6                             & 58.1 & 58.3 & 58.5 & 60.9 & 57.2 & 56.1 & 55.6 & 56.6 \\
		\midrule
		5                           & FKM-f     & 71.1                    & 79.1                             & 81.0                             & 82.4                             & 66.5 & 72.2 & 75.7 & 77.7 & 61.6 & 64.3 & 66.7 & 68.8 \\
		                            & FKM-b     & 71.2                    & 79.1                             & 81.3                             & 82.4                             & 67.4 & 73.3 & 75.2 & 77.7 & 61.8 & 64.7 & 66.8 & 68.5 \\
		                            & FCM       & 74.9                    & 82.2                             & 84.4                             & 85.9                             & 65.8 & 74.8 & 78.7 & 81.0 & 62.1 & 65.0 & 66.8 & 70.0 \\
		                            & distclust & 64.0                    & 66.8                             & 70.3                             & 70.4                             & 62.0 & 64.3 & 67.1 & 67.0 & 58.4 & 58.1 & 59.8 & 60.8 \\
		\midrule
		10                          & FKM-f     & 82.9                    & 86.4                             & 89.0                             & 90.0                             & 79.7 & 82.6 & 85.2 & 86.1 & 72.2 & 74.7 & 77.1 & 78.3 \\
		                            & FKM-b     & 83.8                    & 86.2                             & 89.0                             & 90.0                             & 79.9 & 82.6 & 85.7 & 86.4 & 72.3 & 74.5 & 77.4 & 77.9 \\
		                            & FCM       & 90.9                    & 93.7                             & 94.4                             & 95.0                             & 84.7 & 87.8 & 90.4 & 91.1 & 71.9 & 76.3 & 80.1 & 82.8 \\
		                            & distclust & 77.8                    & 80.2                             & 80.8                             & 81.7                             & 73.4 & 75.3 & 78.9 & 78.4 & 66.6 & 67.7 & 68.8 & 71.3 \\
		\bottomrule
	\end{tabular}%
\end{table}%

Table \ref{tab:simulation-results} shows CCRs averaged over 100 replications under the combinations of conditions.
First, cluster recovery is better for all methods as the sample size or the expected number of measurements increases, and it is worse as the error variance increases.
When the number of measurements is small, the proposed method and FCM are equally good, and as the number of measurements increases, FCM gives slightly better results.
This result might be reasonable given the fact that FCM is a parametric method with a larger number of parameters to be estimated.
The proposed methods and FCM performed better than distclust in recovering clusters under all conditions.
Additionally, there was no significant difference between FCM-f and FCM-b for all conditions.
This indicates that the choice of basis functions did not affect the clustering results for the datasets used in this experiment.
For ARI, the same trend was observed  (see \ref{SI-subsec:ARI} in the \textit{Supplementary Information}).

Next, we compared the computational time required to perform FKM and FCM.
We set the sample sizes to $n=50$, 100, 200, 400, 600, 800, and 1000; we created ten samples for each combination of conditions.
Using only one initial value, we analyzed these ten samples with FKM using Fourier basis functions and FCM.
The settings are the same as those in the experiments above.
The numerical experiments were conducted on a standard desktop PC equipped with an Intel(R) Core(TM) i9-10980XE CPU.
Furthermore, to improve computation time, we utilized the \texttt{RCpp} package in the implementation of both the proposed method and FCM.

Figure \ref{fig:implementation-median-time} shows the results when the error variance is $\sigma = 1.0$ (see \ref{SI-subsec:implementation-median-time} in \textit{Supplementary Information} for results of all $\sigma$ cases).
The median time (in seconds) taken for a single analysis using only one initial value is shown for each sample size, with error bars representing the maximum and minimum values across ten samples.
From left to right, the results are for the expected number of measurements $N_{tp} = 3$, 5, and 10.
The black dashed line represents the results of FCM, while the red solid line represents the results of FKM-f.
As the sample size increases, the computational time for FCM increases linearly.
In contrast, the computational time increase for FKM-f is minimal compared to FCM.
Additionally, there is a tendency for the computational time to increase as the expected number of measurements increases.

In general, to obtain stable results using FCM and FKM, a large number of initial values are required.
For example, \citet{steinley2006k} mentions performing thousands of random restarts in the context of the standard $k$-means method.
In the experiments shown in Table \ref{tab:simulation-results}, a hundred random initial values were necessary for each method to obtain reasonably good results; the required number of initial values may increase with the sample size and the number of measurements.
Furthermore, practical analyses require the selection of many tuning parameters, such as the number of clusters, the type of basis functions, and the number of bases, necessitating a significant amount of trial and error.
In scenarios where datasets include large sample sizes or many measurement points, the use of FCM becomes impractical.
However, the proposed method can be executed without issues with computational time under those conditions.

\begin{figure}[!tb]
	\begin{center}
		\renewcommand{\arraystretch}{1}
		\vspace{0.5cm}
		\begin{tabular}{c}
			\includegraphics[width=16cm]{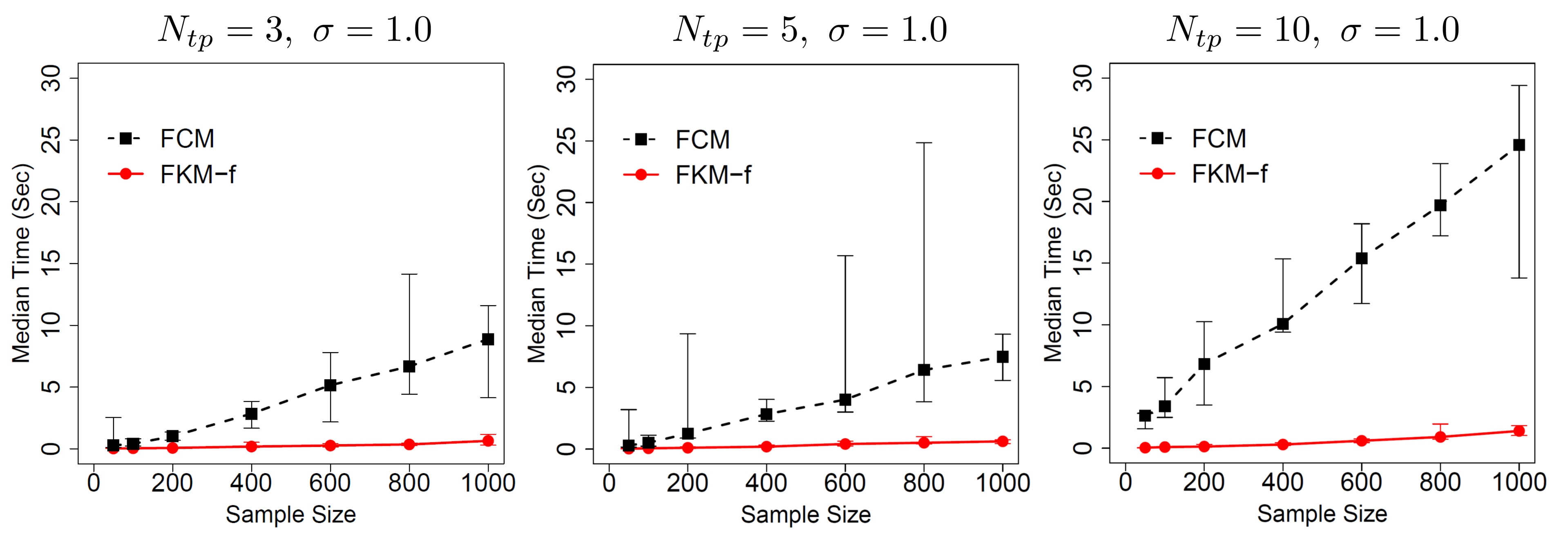}
		\end{tabular}
		\caption{
            Median analysis time (in seconds) with error bars showing the range (min-max) for each sample size at error variance $\sigma=1.0$ (black dashed line: FCM, red solid line: FKM-f);
            panels representing results for expected time points $N_{tp}=3$, 5, and 10 (from left to right);
            }
		\label{fig:implementation-median-time}
	\end{center}
\end{figure}

\section{Real data example}
\label{sec:real-data}


For illustration purposes, we applied the proposed method to the spinal bone mineral density data shown in Section \ref{sec:Introduction}.
The dataset is from \citet{bachrach1999bone}, consisting of relative spinal bone mineral density (BMD) measurements on 261 North American adolescents, with 116 males and 145 females.
Each value represents the difference in spinal BMD taken on two consecutive visits, divided by the average, and the age is the average age over the two visits.
As seen in Figure \ref{fig:example-sparse-data}, this dataset is quite sparse regarding longitudinal measurements, with the average number of measurement points being less than two.

Since bone density growth differs between males and females, gender can be considered a clustering factor.
The analysis aims to recover the clusters based on gender and estimate the cluster center functions. 

In this analysis, we used B-spline basis functions with the number of clusters set to two.
The number of basis functions was fixed at ten, and the smoothing parameters were determined by using the cross-validation (CV) with stability \citep{wang2010consistent} as mentioned in Section \ref{sec:theoretical-results}.
Note that the smoothing parameter was set to be the same across clusters and set to 75 based on the CV using candidate values of 0.01, 25, 50, 75, 100, 125, and 150.
We used a hundred random initial starts and selected the solution that minimized the empirical loss function as the final estimate.

\begin{figure}[!tb]
	\begin{center}
		\renewcommand{\arraystretch}{1}
		\vspace{0.5cm}
		\begin{tabular}{c}
			\includegraphics[width=7cm]{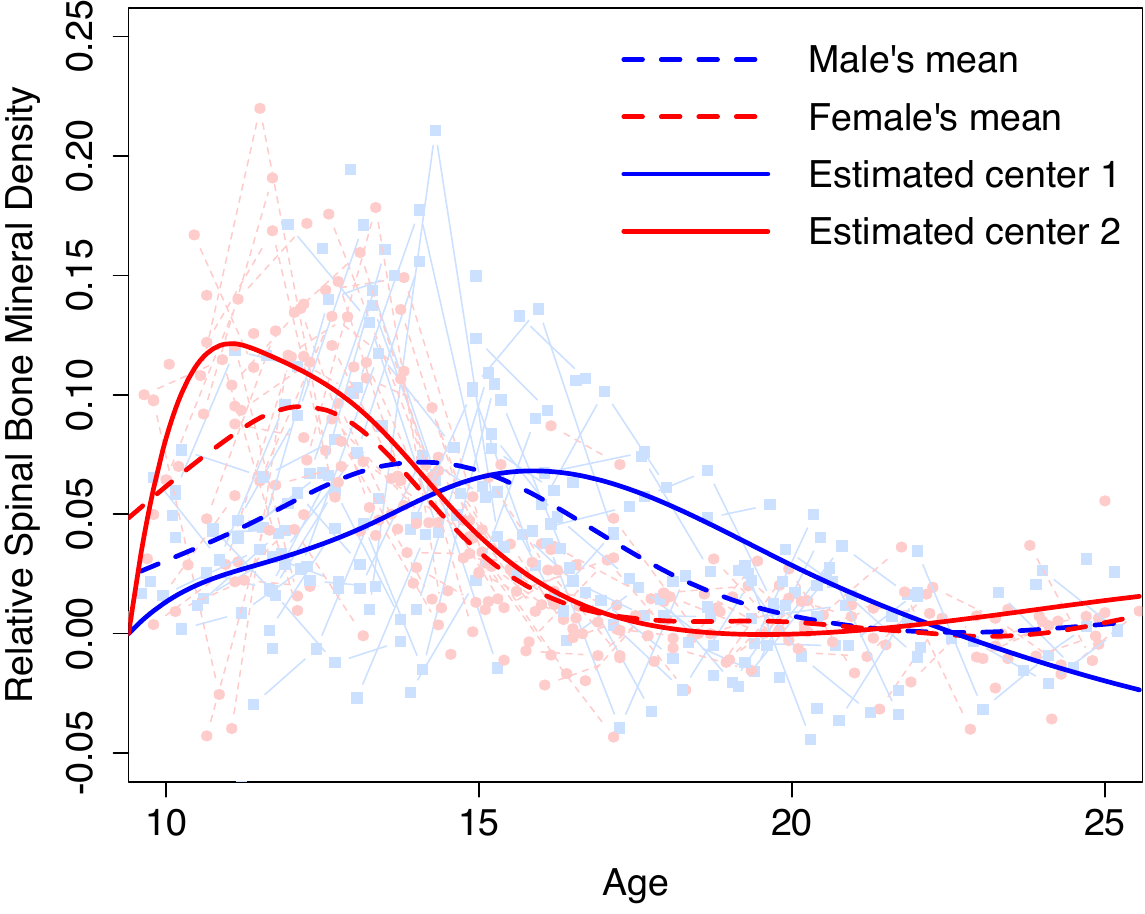}
		\end{tabular}
		\caption{
            Results of the proposed method on spinal bone mineral density data;
            light-colored lines connect values for each subject; 
            dashed lines: smoothing splines for each gender group (blue: male, red: female); 
            solid lines: the two estimated cluster centers.}
		\label{fig:bmd-estimates}
	\end{center}
\end{figure}

Figure \ref{fig:bmd-estimates} shows the estimated cluster centers as solid lines.
The figure also shows the cluster centers obtained by smoothing splines (dashed lines) for each gender using true labels, i.e., gender information.
The two sets of cluster centers are pretty similar, indicating that the proposed method estimates the relative bone density curves reasonably well, even for such sparse data.
From the estimates obtained by the proposed method, it can be inferred that the increase in bone density for females occurs around the age of 12, whereas for males, the growth in bone density occurs slightly later. 
This conclusion aligns with findings reported in studies such as \citet{bachrach1999bone}, which noted that the tempo of gains in BMD varied by gender.
It can be observed that, on the right tail, the estimated values by the proposed method are slightly higher or lower than the mean curves for each gender based on the true label. 
This issue is likely due to the smaller number of data points available in the right tail.

The correct classification rate (CCR) for the proposed method was $65.9\%$; when using the same number of basis functions, the CCRs for FCM and distclust were $59.0\%$ and $64.0\%$, respectively.
Therefore, the proposed method yielded slightly higher CCR values for this dataset.

\section{Conclusion}
\label{sec:concluding-remarks}

We have proposed a simple clustering method for sparsely observed longitudinal data.
This method can be seen as a direct extension of the classical $k$-means clustering to the situation with sparse observations.
The advantages of our proposed method include not only its consistent estimation of cluster centers and good clustering performance but also its low computational cost, making it suitable for large-scale data due to the simplicity of the objective function and algorithm.

Although this study focuses on longitudinal data defined on a subset of $\mathbb{R}$, it can be similarly applied to spatial data defined on subsets of $\mathbb{R}^2$
and even higher-dimensional domains, such as video data.
Additionally, while we have only considered univariate functional data, extending the method to deal with multivariate functional data is straightforward.

One limitation of the proposed method, similar to the classical $k$-means and many functional clustering methods, is its dependency on initial values (see Section \ref{sec:proposed-method}).
The simplest way to address this issue is to use a large number of initial values and select the best result.
However, the required number of initial values may depend on the data, suggesting a need for more robust methods against initial value sensitivity.
For classical $k$-means, the $k$-means++ algorithm \citep{arthur2007k} addresses this issue, but this algorithm is not directly applicable to sparsely observed data where each subject cannot be considered as a cluster centroid.

Furthermore, like the classical $k$-means clustering, the proposed method may be sensitive to outliers.
Robust clustering methods such as trimmed $k$-means \citep{cuesta1997trimmed} and robust $k$-means \citep{gallegos2005robust} have been proposed for classical $k$-means clustering, and similar approaches could be used to robustify the proposed method.

Although we have demonstrated the consistency of the cluster centers as a theoretical result, we have not yet investigated their asymptotic distribution. 
For the classical $k$-means, \citet{pollard1982central} proved the asymptotic normality of cluster centers.
Providing uniform confidence bands for the cluster centers would be helpful in practical clustering applications.
It seems feasible to construct confidence bands using bootstrap methods, but further examination of their asymptotic properties is necessary.
This aspect remains an important subject for future research.

\appendix

\renewcommand{\theequation}{A.\arabic{equation}}
\setcounter{equation}{0} 

\section{Appendix}

\subsection{Proof of Theorem \ref{Thm:consistency-loss}}

The empirical loss function when we would observe functional data completely, is written as
\ba
\Psi_{n}(\bs{f})
=\frac{1}{n}\sum_{i=1}^n\min_{k=1,\dots,K}\|X_i-f_k\|_{P_T}^2.
\ea
We give an upper bound for the difference in expected square losses between the estimated and truly optimal cluster centers.
Note that, by the definition of $\tilde{\bs{f}_n^*}$, $\tilde{\Psi}_n(\tilde{\bs{f}}_n^*)\le\tilde{\Psi}_n(\bs{f}_{G_n}^*)$ holds.
Therefore, for some $\bs{f}^{*}\in{}Q$, $\bs{f}_{G_n}^{*}\in{}Q(\mac{F}_{G_n(M)})$, 
and $\tilde{\bs{f}}_{n}^{*}\in{}\widetilde{Q}_n(\mac{F}_{G_n(M)})$, we have
\begin{align}
	 & \Psi(\tilde{\bs{f}}_{n}^{*})-\Psi(\bs{f}^{*})\notag       \\
	& \le\left[
		\Psi_{n}(\tilde{\bs{f}}_{n}^{*})-\tilde{\Psi}_{n}(\tilde{\bs{f}}_{n}^{*})+\tilde{\Psi}_{n}(\bs{f}_{G_n}^{*})-\Psi_{n}(\bs{f}_{G_n}^{*})
	\right]
	+	\left[
		\Psi(\tilde{\bs{f}}_{n}^{*})-\Psi_{n}(\tilde{\bs{f}}_{n}^{*})+\Psi_{n}(\bs{f}_{G_n}^{*})-\Psi(\bs{f}_{G_n}^{*})
		\right]
	+\Psi(\bs{f}_{G_n}^{*})-\Psi(\bs{f}^{*})
	\notag                                                       \\
	 & \le
	2\sup_{\bs{f}\in\mac{F}_{G_n(M)}}|\tilde{\Psi}_{n}(\bs{f})-\Psi_{n}(\bs{f})|
	+2\sup_{\bs{f}\in\mac{F}_{G_n(M)}}|\Psi_{n}(\bs{f})-\Psi(\bs{f})|
	+\{\Psi(\bs{f}_{G_n}^{*})-\Psi(\bs{f}^{*})\}.
	\label{eq:thm1-proof-bound}
\end{align}

From this expansion, we can see that if the three terms on the last line converge almost surely to 0, respectively, then the difference between the two expected square losses also converges almost surely to 0.
From Lemmas \ref{Lemm:diff_A1}, \ref{Lemm:diff_A2} and \ref{Lemm:diff_A3} in \textit{Supplementary Information} \ref{SI-sec:lemmas}, if the conditions described in the theorem hold, the three terms in \eqref{eq:thm1-proof-bound} converge almost surely to 0, which completes the proof.

\subsection{Proof of Theorem \ref{Thm:consistency-estimator}}

First, we provide the lemmas needed to prove the theorem.
The proofs of all lemmas are provided in \textit{Supplementary Information} \ref{SI-sec:proofs}.

For a constant $C>0$, we define
\begin{align*}
	\mac{F}_{(C)}
	=\{\bs{f}=(f_1,\dots,f_K)\in\mac{F}\mid\|f_k\|_{P_T}\le C,k=1,\dots,K\}.
\end{align*}
Then, it can be shown that $\Psi(\cdot)$ is uniformly continuous on $\mac{F}_{(M)}$.
\begin{Lemm}
	\label{Lemm:continuity-of-loss}
The expected loss $\Psi(\cdot)$ is uniformly continuous on $\mac{F}_{(M)}$.
\end{Lemm}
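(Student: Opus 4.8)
The plan is to show that $\Psi$ is uniformly continuous on $\mac{F}_{(M)}$ by directly estimating $|\Psi(\bs{f})-\Psi(\bs{g})|$ in terms of the Hausdorff-type distance between the two families of cluster centers. The key elementary fact is that for any point $x \in L_2(P_T)$ and any two finite collections $\bs{f}=(f_1,\dots,f_K)$, $\bs{g}=(g_1,\dots,g_K)$, the map $\bs{f}\mapsto \min_k \|x-f_k\|_{P_T}^2$ is Lipschitz-like: if we pair indices so that $\|f_k-g_k\|_{P_T}$ is small for each $k$, then $\min_k\|x-f_k\|_{P_T}$ and $\min_k\|x-g_k\|_{P_T}$ differ by at most $\max_k \|f_k-g_k\|_{P_T}$. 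This is because $\min_k\|x-f_k\|_{P_T} \le \|x-f_{k_0}\|_{P_T} \le \|x-g_{k_0}\|_{P_T} + \|f_{k_0}-g_{k_0}\|_{P_T}$ where $k_0$ achieves the minimum for $\bs{g}$, and symmetrically.

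First I would fix $\bs{f},\bs{g}\in\mac{F}_{(M)}$ and set $\delta = \max_{k}\|f_k-g_k\|_{P_T}$ (relabeling is not even needed if we just take the common indexing; but to be safe one takes the best pairing, which only makes $\delta$ smaller). For each realization $X$, using $\|X\|_{P_T}\le \|X\|_\infty \le M$ a.s.\ and $\|f_k\|_{P_T},\|g_k\|_{P_T}\le M$, both $\min_k\|X-f_k\|_{P_T}$ and $\min_k\|X-g_k\|_{P_T}$ are bounded by $2M$. Then
\begin{align*}
\left|\min_k\|X-f_k\|_{P_T}^2 - \min_k\|X-g_k\|_{P_T}^2\right|
&= \left|\min_k\|X-f_k\|_{P_T} - \min_k\|X-g_k\|_{P_T}\right|\cdot\left(\min_k\|X-f_k\|_{P_T}+\min_k\|X-g_k\|_{P_T}\right)\\
&\le \delta \cdot 4M.
\end{align*}
Taking expectations gives $|\Psi(\bs{f})-\Psi(\bs{g})| \le 4M\,\delta$. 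Since $\delta$ can be made arbitrarily small by controlling $\max_k\|f_k-g_k\|_{P_T}$, and the Hausdorff metric $d_H(\bs{f},\bs{g})$ is bounded below by (a quantity comparable to) the best-pairing $\max_k$ distance up to relabeling, uniform continuity follows: given $\varepsilon>0$, take any $\bs{f},\bs{g}$ with $d_H(\bs{f},\bs{g}) < \varepsilon/(4M)$, and after choosing the pairing realizing the Hausdorff distance we get $|\Psi(\bs{f})-\Psi(\bs{g})|<\varepsilon$.

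The one point needing slight care — and the only place where I expect any friction — is the bookkeeping relating the Hausdorff metric $d_H$ on unordered sets to the ordered $\max_k\|f_k-g_k\|_{P_T}$ used in the Lipschitz estimate. One must observe that $\min_k\|X-f_k\|_{P_T}$ depends only on the \emph{set} $\{f_1,\dots,f_K\}$, so it suffices to bound the difference after choosing, for each $f_k$, a nearest $g_{\sigma(k)}$; the resulting per-index distances are all at most $d_H(\bs{f},\bs{g})$, and repeating the argument above with this (possibly non-bijective) assignment still yields $|\Psi(\bs{f})-\Psi(\bs{g})|\le 4M\, d_H(\bs{f},\bs{g})$. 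So in fact $\Psi$ is globally Lipschitz on $\mac{F}_{(M)}$ with constant $4M$, which is stronger than uniform continuity and immediately gives the claim.
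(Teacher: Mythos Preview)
Your argument is correct and follows essentially the same route as the paper: use the Hausdorff distance to match each center in one family to a nearby center in the other, apply the triangle inequality to control $\min_k\|X-f_k\|_{P_T}$, and then bound the resulting squared difference. The only real difference is cosmetic: the paper expands $(\|x-g_\ell\|_{P_T}+\delta)^2-\|x-g_\ell\|_{P_T}^2$ and sums over $\ell$ to obtain the bound $K\delta^2+4KM\delta$, whereas your factorization $|a^2-b^2|=|a-b|\,|a+b|$ together with $|a-b|\le d_H(\bs{f},\bs{g})$ and $a+b\le 4M$ gives the sharper Lipschitz bound $4M\,d_H(\bs{f},\bs{g})$, avoiding the extraneous factor $K$ and the quadratic term. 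Your handling of the non-bijective matching is fine, since $\min_k\|X-f_k\|_{P_T}$ depends only on the underlying set and the two directions of the Hausdorff definition supply exactly the two inequalities you need.
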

This result is the direct extension of the ordinary $k$-means clustering case, which is shown in \citet{terada2015strong}, to the functional setting.
Next, we state the continuity lemma for $\tilde{\Psi}_n$.
\begin{Lemm}
	\label{Lemm:continuity-of-empirical-loss}
	For any fixed $n$, the empirical loss $\tilde{\Psi}_n(\cdot)$ is uniformly continuous on $\mac{F}_{G_n(M)}$ almost surely.
\end{Lemm}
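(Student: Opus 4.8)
The plan is to fix $n$ and show that the map
$\bs{f}\mapsto\tilde{\Psi}_n(\bs{f})$ is Lipschitz, or at least uniformly continuous, on the compact-looking set $\mac{F}_{G_n(M)}$ using nothing but the triangle inequality and the elementary fact that $|a^2-b^2|=|a-b|\,|a+b|$ applied pointwise at the finitely many observed times. First I would note that $\tilde{\Psi}_n(\bs{f})=\frac1n\sum_{i=1}^n \min_{k}\,g_{i,k}(f_k)$, where $g_{i,k}(f)=\frac1{N_i}\sum_{j=1}^{N_i}\{X_i(T_{ij})-f(T_{ij})\}^2$ depends on $f$ only through its values at the points $T_{ij}$. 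Since the minimum of finitely many uniformly continuous functions is uniformly continuous, and an average of finitely many such functions is too, it suffices to show that each $g_{i,k}$ is uniformly continuous in $f$ on $G_n(M)$ with respect to the $\|\cdot\|_{P_T}$-norm.

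For that, fix $i,j$ and $f,h\in G_n(M)$, and write
\begin{align*}
|g_{i,k}(f)-g_{i,k}(h)|
&\le \frac1{N_i}\sum_{j=1}^{N_i}\big|\{X_i(T_{ij})-f(T_{ij})\}^2-\{X_i(T_{ij})-h(T_{ij})\}^2\big|\\
&=\frac1{N_i}\sum_{j=1}^{N_i}|f(T_{ij})-h(T_{ij})|\cdot|2X_i(T_{ij})-f(T_{ij})-h(T_{ij})|.
\end{align*}
The second factor is bounded by $2M+|f(T_{ij})|+|h(T_{ij})|$. To control $|f(T_{ij})|$ for $f\in G_n(M)$, I would expand $f=\sum_{l=1}^{m_n}\beta_l\phi_l$, so $\|f\|_{P_T}\le M$ gives $\sum_l\beta_l^2\le M^2$ (since $\{\phi_l\}$ is a CONS), hence by Cauchy--Schwarz and the uniform bound $\|\phi_l\|_\infty\le C_\phi$,
\[
|f(t)|\le \sum_{l=1}^{m_n}|\beta_l|\,|\phi_l(t)|\le C_\phi\sqrt{m_n}\Big(\sum_{l=1}^{m_n}\beta_l^2\Big)^{1/2}\le C_\phi\sqrt{m_n}\,M=:B_n.
\]
The same argument applied to $f-h$ (whose coefficient vector has norm $\|f-h\|_{P_T}$) gives $|f(T_{ij})-h(T_{ij})|\le C_\phi\sqrt{m_n}\,\|f-h\|_{P_T}$. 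Combining, $|g_{i,k}(f)-g_{i,k}(h)|\le C_\phi\sqrt{m_n}(2M+2B_n)\|f-h\|_{P_T}$, a bound uniform in $i,j$. Averaging over $i$ and using that $\min$ and finite sums preserve this Lipschitz estimate yields that $\tilde{\Psi}_n$ is Lipschitz on $\mac{F}_{G_n(M)}$ with a (random, but finite a.s.) constant depending on $M$, $C_\phi$, and $m_n$; in particular it is uniformly continuous.

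The only subtlety — and the place I would be most careful — is that the Lipschitz constant above depends on $n$ through $m_n$ (and, if one wanted a deterministic bound, one would need the a.s. boundedness $\|X_i\|_\infty\le M$, which is assumed), so the claim really is only pointwise-in-$n$ uniform continuity, which is exactly what the lemma states ("for any fixed $n$"). There is also the trivial measurability/almost-sure caveat: the bound $|X_i(T_{ij})|\le M$ and the finiteness of $N_i$ hold almost surely, so the argument is valid on a probability-one event; this is why the statement carries the qualifier "almost surely." I do not expect any genuine obstacle here — it is a routine consequence of working on a fixed finite-dimensional space and evaluating at finitely many points — but the bookkeeping of keeping the $\min_k$ and the two levels of averaging ($1/N_i$ within a subject, $1/n$ across subjects) straight is the one thing to execute cleanly.
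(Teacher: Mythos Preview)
Your argument is correct and uses the same key estimate as the paper: for $f\in G_n$, the pointwise bound $|f(t)|\le C_\phi\sqrt{m_n}\,\|f\|_{P_T}$ obtained by Cauchy--Schwarz on the coefficient vector. The paper packages this slightly differently: it introduces the empirical semi-norm $\|f\|_{N_i}^2:=N_i^{-1}\sum_{j}f(T_{ij})^2$, proves $\|f-g\|_{N_i}^2\le m_nC_\phi^2\|f-g\|_{P_T}^2$, and then rewrites $\tilde\Psi_n(\bs f)=n^{-1}\sum_i\min_k\|X_i-f_k\|_{N_i}^2$ so that the remainder of the argument is a verbatim copy of the proof of Lemma~\ref{Lemm:continuity-of-loss} with $\|\cdot\|_{P_T}$ replaced by $\|\cdot\|_{N_i}$. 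This buys structural parallelism (one proof pattern serves both lemmas); your direct $|a^2-b^2|=|a-b|\,|a+b|$ route instead yields an explicit Lipschitz constant, which is arguably more informative.

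One point to tighten: the metric on $\mac{F}_{G_n(M)}$ in this paper is the Hausdorff metric $d_H$, not the componentwise product metric you implicitly use when you say ``$\min$ preserves the Lipschitz estimate.'' The paper handles this explicitly: given $d_H(\bs f,\bs g)<\delta$, each $g_\ell$ has a nearby $f_{k(g_\ell)}$, and then $\min_k\|X_i-f_k\|_{N_i}^2\le\|X_i-f_{k(g_\ell)}\|_{N_i}^2\le(\|X_i-g_\ell\|_{N_i}+\sqrt{m_n}C_\phi\delta)^2$, from which the one-sided bound follows. Your Lipschitz bound for $g_{i}$ plugs directly into this same chain, so the fix is one line; just be sure to state it, since Lipschitz in the product metric does not formally imply Lipschitz in $d_H$ without using the permutation invariance of $\tilde\Psi_n$.
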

This fact can be proved	similarly to the case of $\Psi(\cdot)$, but we need to reformulate $\tilde{\Psi}_n(\cdot)$ using a semi-norm.
See \ref{subsec:proof-lemma-continuity-empirical-loss} in \textit{Supplementary Information}.

Furthermore, we describe the results on the relationship between the empirical square loss values evaluated with the optimal estimator on $\mac{F}_{G_n(M)}$ and the population expected square loss values evaluated with the truly optimal estimator.
\begin{Lemm}
	\label{lemm:lim-tilde-Psi-n}
        Under the assumptions in Theorem \ref{Thm:consistency-estimator},
	for any $\bs{f}_{G_n}^*\in Q(\mac{F}_{G_n(M)})$ and $\bs{f}^*\in Q$, as $n\ra\infty$,
	\begin{align*}
		\tilde{\Psi}_n(\bs{f}_{G_n}^*)\ra\Psi(\bs{f}^*)\ \
		\text{a.s.}
	\end{align*}
\end{Lemm}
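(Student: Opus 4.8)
The plan is to recycle the error decomposition that already appears in the proof of Theorem~\ref{Thm:consistency-loss} (inequality~\eqref{eq:thm1-proof-bound}), but now evaluated at the restricted population optimizer $\bs{f}_{G_n}^*$ instead of at the empirical optimizer $\tilde{\bs{f}}_n^*$. Since $\bs{f}_{G_n}^*\in\mac{F}_{G_n(M)}\subseteq\mac{F}$ and $\bs{f}^*\in Q$ minimizes $\Psi$ over all of $\mac{F}$, I would first write the telescoping identity
\begin{align*}
	\tilde{\Psi}_n(\bs{f}_{G_n}^*)-\Psi(\bs{f}^*)
	=\left[\tilde{\Psi}_n(\bs{f}_{G_n}^*)-\Psi_n(\bs{f}_{G_n}^*)\right]
	+\left[\Psi_n(\bs{f}_{G_n}^*)-\Psi(\bs{f}_{G_n}^*)\right]
	+\left[\Psi(\bs{f}_{G_n}^*)-\Psi(\bs{f}^*)\right],
\end{align*}
where $\Psi_n$ is the idealized empirical loss based on fully observed curves introduced at the beginning of the Appendix. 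The last bracket is nonnegative because $\mac{F}_{G_n(M)}\subseteq\mac{F}$.

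Next I would bound the three brackets one by one, passing to uniform suprema over $\mac{F}_{G_n(M)}$ for the first two. For the first bracket, $|\tilde{\Psi}_n(\bs{f}_{G_n}^*)-\Psi_n(\bs{f}_{G_n}^*)|\le\sup_{\bs{f}\in\mac{F}_{G_n(M)}}|\tilde{\Psi}_n(\bs{f})-\Psi_n(\bs{f})|$, which tends to $0$ almost surely by Lemma~\ref{Lemm:diff_A1}. For the second bracket, $|\Psi_n(\bs{f}_{G_n}^*)-\Psi(\bs{f}_{G_n}^*)|\le\sup_{\bs{f}\in\mac{F}_{G_n(M)}}|\Psi_n(\bs{f})-\Psi(\bs{f})|$, which tends to $0$ almost surely by Lemma~\ref{Lemm:diff_A2}. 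For the third bracket, $0\le\Psi(\bs{f}_{G_n}^*)-\Psi(\bs{f}^*)\ra 0$ by Lemma~\ref{Lemm:diff_A3}. Each of these conclusions holds under the single hypothesis $D_n^2 m_n^2\log n\ra 0$, so adding them gives $\tilde{\Psi}_n(\bs{f}_{G_n}^*)\ra\Psi(\bs{f}^*)$ a.s., completing the proof.

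The argument itself is short, since the analytic weight is carried entirely by Lemmas~\ref{Lemm:diff_A1}--\ref{Lemm:diff_A3}; the only points needing care are (i) that $\bs{f}_{G_n}^*$ genuinely lies in $\mac{F}_{G_n(M)}$ so that the uniform bounds apply, which is immediate from the definition of $Q(\mac{F}_{G_n(M)})$, and (ii) that the various growth conditions on $m_n$ and the $N_i$ required by the three lemmas are all subsumed by $D_n^2 m_n^2\log n\ra 0$, already checked in the proof of Theorem~\ref{Thm:consistency-loss}. The only genuine obstacle lies upstream, in Lemma~\ref{Lemm:diff_A1}: controlling the discretization gap $\sup_{\bs{f}\in\mac{F}_{G_n(M)}}|\tilde{\Psi}_n(\bs{f})-\Psi_n(\bs{f})|$ uniformly is where the interplay between $m_n$, the per-subject sample sizes $N_i$, and $n$ (through $D_n$) enters—but that lemma is available here and need not be reproven.
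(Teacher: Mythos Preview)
Your proposal is correct and follows essentially the same telescoping decomposition and invocation of Lemmas~\ref{Lemm:diff_A1}--\ref{Lemm:diff_A3} that the paper uses. The only minor difference is that the paper inserts an additional centroid-condition argument to verify $\bs{f}_{G_n}^*\in\mac{F}_{G_n(M)}$, but as you note this is already immediate from the definition of $Q(\mac{F}_{G_n(M)})$.
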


Finally, we provide a result on the optimality of the truly optimal estimator on $\mac{F}_{G_n}$.
We define
\begin{align*}
    Q(\mac{F}_{G_n}\backslash B^o(Q,\epsilon))
	 & :=\{\bs{f}\in\mac{F}_{G_n}\mid\inf_{\bs{f}'\in\mac{F}_{G_n}\backslash B^o(Q,\epsilon)}\Psi(\bs{f}')=\Psi(\bs{f})\}
\end{align*}
where for $\bs{a}\in\mac{F}$, $B^o(\bs{a},\epsilon)$ denotes the open ball with center $\bs{a}$ and radius $\epsilon$, and for a subset $A\subset\mac{F}$ , $\cup_{\bs{a}\in A}B^o(\bs{a},\epsilon)$ is denoted by $B^o(A,\epsilon)$.
Note that, since $\Psi(\bs{f}_{G_n,\epsilon}^*)$ is monotonically decreasing with respect to the number of bases $m_n$ (and thus with respect to $n$) and bounded below, $\lim_{n\ra\infty}\Psi(\bs{f}_{G_n,\epsilon}^*)$ exists.

\begin{Lemm}
	\label{Lemm:inequality-eps}
        For any $\bs{f}_{G_n,\epsilon}^*\in 
        Q(\mac{F}_{G_n}\backslash B^o(Q,\epsilon))$ and $\bs{f}^*\in Q$,
	\begin{align*}
		\lim_{n\ra\infty}\Psi(\bs{f}_{G_n,\epsilon}^*)>\Psi(\bs{f}^*).
	\end{align*}
\end{Lemm}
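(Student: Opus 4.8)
The plan is to argue by contradiction, exploiting the fact that removing an open neighbourhood of the global optimizer set $Q$ forces a strict loss penalty, and that this penalty cannot be erased in the limit as the approximating spaces $\mac{F}_{G_n}$ grow. First I would record the monotonicity observation already stated before the lemma: since $G_n\subseteq G_{n+1}$, the constrained optimum $\Psi(\bs{f}_{G_n,\epsilon}^*)$ over $\mac{F}_{G_n}\backslash B^o(Q,\epsilon)$ is nonincreasing in $n$ and bounded below by $\inf_{\bs{f}}\Psi(\bs{f})$, so $L_\epsilon := \lim_{n\ra\infty}\Psi(\bs{f}_{G_n,\epsilon}^*)$ exists. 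I want to show $L_\epsilon > \Psi(\bs{f}^*)$.

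Suppose for contradiction that $L_\epsilon = \Psi(\bs{f}^*)$ (it cannot be smaller, since each $\bs{f}_{G_n,\epsilon}^*$ is a feasible point and $\bs{f}^*$ is a global minimizer). Then I would extract, for each $n$, the point $\bs{f}_{G_n,\epsilon}^*$ and seek a convergent subsequence in an appropriate sense. The natural device is that all these cluster centers lie in $G_n^K(M')$ for a suitable radius — one needs first to check that the constrained minimizers stay in a bounded ball, which follows because constants (or more precisely $\mathbb{E}X$, projected onto $G_n$) are competitors with loss bounded by $\mathbb{E}\|X\|_{P_T}^2 \le M^2$, forcing $\|f_k\|_{P_T}\le$ const. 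Since $X$ is bounded a.s.\ and the $\phi_j$ are a bounded CONS, the relevant feasible sets are totally bounded in $L_2(P_T)$, so along a subsequence $\bs{f}_{G_{n_\ell},\epsilon}^* \to \bar{\bs{f}}$ in the $L_2(P_T)$-product norm for some $\bar{\bs{f}}\in\mac{F}_{(M')}$. By uniform continuity of $\Psi$ on bounded sets (Lemma~\ref{Lemm:continuity-of-loss}), $\Psi(\bar{\bs{f}}) = L_\epsilon = \Psi(\bs{f}^*)$, so $\bar{\bs{f}}\in Q$.

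Now I derive the contradiction from the constraint $\bs{f}_{G_{n_\ell},\epsilon}^* \notin B^o(Q,\epsilon)$, i.e.\ $d(\bs{f}_{G_{n_\ell},\epsilon}^*, Q)\ge\epsilon$ for every $\ell$. The distance here is measured through the Hausdorff metric $d_H$ on the $K$-point configurations; I would note that $d_H$ is continuous with respect to the $L_2(P_T)$-product norm convergence (indeed $d_H(\bs{f},\bs{g})\le\max_k\|f_k-g_k\|_{P_T}$ after a suitable matching, up to relabeling, which is enough for a lower bound), so $d(\bar{\bs{f}},Q) = \lim_\ell d(\bs{f}_{G_{n_\ell},\epsilon}^*,Q)\ge\epsilon > 0$. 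But $\bar{\bs{f}}\in Q$ gives $d(\bar{\bs{f}},Q)=0$, a contradiction. Hence $L_\epsilon > \Psi(\bs{f}^*)$, as claimed.

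The main obstacle I anticipate is the compactness/convergence step: the points $\bs{f}_{G_n,\epsilon}^*$ live in spaces $G_n$ of growing dimension, so there is no single finite-dimensional ball containing all of them, and one must instead invoke that the bounded ball of $L_2(P_T)$ intersected with $\{\|f\|_\infty \text{ controlled}\}$ — or, more carefully, that $\Psi$ only depends on $\bs{f}$ through its behaviour against the bounded random element $X$ — yields the needed sequential compactness. A clean way is to note $\Psi(\bs{f})$ depends on $f_k$ only through $\|X - f_k\|_{P_T}^2 = \|X\|_{P_T}^2 - 2\langle X, f_k\rangle_{P_T} + \|f_k\|_{P_T}^2$, and weak $L_2(P_T)$-sequential compactness of norm-bounded sets together with the fact that $\mathbb{E}[\min_k\|X-f_k\|^2]$ is weakly lower semicontinuous (it is a sup of affine-plus-convex functionals over measurable partitions) suffices to produce a weak limit $\bar{\bs f}$ with $\Psi(\bar{\bs f})\le L_\epsilon$; combined with $\Psi(\bar{\bs f})\ge\Psi(\bs f^*)=L_\epsilon$ we again get $\bar{\bs f}\in Q$, and then one argues that weak convergence cannot be reconciled with staying $\epsilon$-away from $Q$ — here a little care is needed since the Hausdorff distance is not weakly continuous, so I would instead run the argument so that the extracted limit contradicts the \emph{definition} of $L_\epsilon$ directly, e.g.\ by projecting $\bs{f}^*$ onto $G_n$ and showing those projections eventually enter $B^o(Q,\epsilon)$ while having loss tending to $\Psi(\bs f^*) < L_\epsilon$, contradicting that $\bs{f}_{G_n,\epsilon}^*$ is the constrained optimum outside the ball only if that optimum were itself below $L_\epsilon$ — which is the cleanest route and avoids compactness entirely.
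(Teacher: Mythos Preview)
Your compactness-based contradiction is the natural first instinct, but the obstacles you flag are real and you do not actually close them. Strong sequential compactness fails: the minimizers $\bs{f}_{G_n,\epsilon}^*$ sit in a bounded $L_2(P_T)$-ball, but there is no equicontinuity or smoothness control on $G_n(M)$ (indeed $\|f\|_\infty$ can be of order $C_\phi M\sqrt{m_n}$), so no strongly convergent subsequence need exist. Your weak-compactness route does produce a weak limit $\bar{\bs f}$ with $\Psi(\bar{\bs f})\le L_\epsilon$ via lower semicontinuity, but the constraint set $\{d(\cdot,Q)\ge\epsilon\}$ is only strongly closed, not weakly closed, so you cannot conclude $d(\bar{\bs f},Q)\ge\epsilon$ and the contradiction evaporates --- a point you yourself note but do not repair. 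Your final ``cleanest route'' does not work either: the projections of $\bs{f}^*$ onto $G_n$ eventually lie \emph{inside} $B^o(Q,\epsilon)$, hence are not competitors for the constrained problem on $\mac{F}_{G_n}\backslash B^o(Q,\epsilon)$ and say nothing about $\Psi(\bs{f}_{G_n,\epsilon}^*)$.

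The paper's argument sidesteps compactness altogether by comparing with the \emph{full-space} constrained problem. Since $\mac{F}_{G_n}\subset\mac{F}$, one has for every $n$
\[
\Psi(\bs{f}_{G_n,\epsilon}^*)\;\ge\;\inf_{\bs{f}\in\mac{F}\backslash B^o(Q,\epsilon)}\Psi(\bs{f})\;=:\;\Psi(\bs{f}_\epsilon^*),
\]
so $L_\epsilon\ge\Psi(\bs{f}_\epsilon^*)$. The only nontrivial ingredient is that this right-hand infimum is \emph{attained} at some $\bs{f}_\epsilon^*\in\mac{F}_{(M+\epsilon)}\backslash B^o(Q,\epsilon)$; the paper imports this from Lemma~4.4 of \citet{levrard2015nonasymptotic}. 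Once the minimizer exists, $\bs{f}_\epsilon^*\notin Q$ forces $\Psi(\bs{f}_\epsilon^*)>\Psi(\bs{f}^*)$ by definition of $Q$, and the lemma is immediate. The moral: the delicate existence/compactness work has already been done once at the level of the full Hilbert space, and one should lean on that rather than try to redo it inside the nested finite-dimensional spaces.
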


\vspace{10pt}

Let's proceed to the proof of  Theorem \ref{Thm:consistency-estimator}.
First, for any $n$, consider some $\bs{f}_{G_n}^*\in Q(\mac{F}_{G_n(M)})$.
Then, we have
\begin{align*}
	\tilde{\Psi}_{n}(\tilde{\bs{f}}_{n}^{*})
	\le\tilde{\Psi}_n(\bs{f}_{G_n}^*)\ \
	\text{a.s.}
\end{align*}
Therefore,
\begin{align*}
	\limsup_{n\ra\infty}\tilde{\Psi}_{n}(\tilde{\bs{f}}_{n}^{*})
	\le\limsup_{n\ra\infty}\tilde{\Psi}_n(\bs{f}_{G_n}^*)\ \
	\text{a.s.}
\end{align*}
is established.
Furthermore, as inferred from Lemma \ref{lemm:lim-tilde-Psi-n}, we have
\begin{align*}
	\limsup_{n\ra\infty}\tilde{\Psi}_n(\bs{f}_{G_n}^*)
	=\Psi(\bs{f}^*)\ \
	\text{a.s.}
\end{align*}
Consequently,
\begin{align}
	\limsup_{n\ra\infty}\tilde{\Psi}_n(\tilde{\bs{f}}_n^*)
	\le\Psi(\bs{f}^*)\ \
	\text{a.s.}
	\label{eq:limsup-tilde-Psi-n_inequality}
\end{align}

Here, for any $\epsilon > 0$, let $\mac{F}_{G_n(M),\epsilon} := \{\bs{f} \in \mac{F}_{G_n(M)} \mid d(\bs{f},Q) \ge \epsilon\}$.
Now, according to Lemma \ref{Lemm:continuity-of-loss}, $\Psi$ is a continuous function on $\mac{F}_{(M)}$, hence it attains a minimum value on the compact subset $\mac{F}_{G_n(M),\epsilon}$.
Let us denote the point where this minimum is attained as 
$\bs{f}_{G_n,\epsilon}^*$.
Similarly, from Lemma \ref{Lemm:continuity-of-empirical-loss}, $\tilde{\Psi}_n$ is continuous on $\mac{F}_{G_n(M)}$, and therefore, it also attains a minimum value on the compact subset $\mac{F}_{G_n(M),\epsilon}$.
Let the point where this minimum is achieved be denoted as 
$\tilde{\bs{f}}_{n,\epsilon}^*$
.

Now, we can expand the expression as follows:
\begin{align}
\inf_{\bs{f}\in\mac{F}_{G_n(M),\epsilon}}\tilde{\Psi}_n(\bs{f})
	&=\{\tilde{\Psi}_n(\tilde{\bs{f}}_{n,\epsilon}^*)
	-\Psi(\tilde{\bs{f}}_{n,\epsilon}^*)\}
	+\{\Psi(\tilde{\bs{f}}_{n,\epsilon}^*)-\Psi(\bs{f}_{G_n,\epsilon}^*)\}
	+\{\Psi(\bs{f}_{G_n,\epsilon}^*)-\Psi(\bs{f}^*)\}
	+\Psi(\bs{f}^*).
	\label{eq:inf-tilde-Psi-n_expansion}
\end{align}
Considering that $\tilde{\bs{f}}_{n,\epsilon}^*$ belongs to $\mac{F}_{G_n(M)}$, and based on Lemmas \ref{Lemm:diff_A1} and \ref{Lemm:diff_A2}, as $n$ approaches infinity, we obtain
\begin{align*}
	 \tilde{\Psi}_n(\tilde{\bs{f}}_{n,\epsilon}^*)
	-\Psi(\tilde{\bs{f}}_{n,\epsilon}^*)
	 &\le\sup_{\bs{f}\in\mac{F}_{G_n(M)}}|\tilde{\Psi}_n(\bs{f})-\Psi_n(\bs{f})|
	+\sup_{\bs{f}\in\mac{F}_{G_n(M)}}|\Psi_n(\bs{f})-\Psi(\bs{f})|
	 \ra0\ \ \mrm{a.s.}
\end{align*}
By the optimality of $\bs{f}_{G_n,\epsilon}^*$, we have 
$\Psi(\tilde{\bs{f}}_{n,\epsilon}^*)-\Psi(\bs{f}_{G_n,\epsilon}^*)\ge0$. 
Therefore, employing \eqref{eq:limsup-tilde-Psi-n_inequality}, \eqref{eq:inf-tilde-Psi-n_expansion}, and Lemma \ref{Lemm:inequality-eps}, it can be inferred that
\begin{align}
	\liminf_{n\ra\infty}\inf_{\bs{f}\in\mac{F}_{G_n(M),\epsilon}}\tilde{\Psi}_n(\bs{f})
	>\Psi(\bs{f}^*)
	\ge\limsup_{n\ra\infty}\tilde{\Psi}_n(\tilde{\bs{f}}_n^*)
	\ \ \mrm{a.s.}
	\label{eq:Thm2-liminf-limsup}
\end{align}

In \eqref{eq:Thm2-liminf-limsup}, by the definition of the limit inferior, for sufficiently large $n$, we have
\begin{align*}
	\inf_{\bs{f}\in\mac{F}_{G_n(M),\epsilon}}\tilde{\Psi}_{n}(\bs{f})
	>\limsup_{n\ra\infty}\tilde{\Psi}_{n}(\tilde{\bs{f}}_{n}^{*})\ \
	\text{a.s.}
\end{align*}
Furthermore, by the definition of the limit superior, for sufficiently large $n$, we have
\begin{align*}
	\inf_{\bs{f}\in\mac{F}_{G_n(M),\epsilon}}\tilde{\Psi}_{n}(\bs{f})
	>\tilde{\Psi}_{n}(\tilde{\bs{f}}_{n}^{*})\ \
	\text{a.s.}
\end{align*}
Thus, if we take $\Omega_{1}$ as a subset of the sample space $\Omega$ excluding the null set, for any $\omega\in\Omega_{1}$, there exists some $n_{0}\in\mbb{N}$ such that for all $n>n_{0}$,
\begin{align}
	\inf_{\bs{f}\in\mac{F}_{G_n(M),\epsilon}}\tilde{\Psi}_{n}(\bs{f})
	>\tilde{\Psi}_{n}(\tilde{\bs{f}}_{n}^{*}).
	\label{eq:Thm2-G}
\end{align}

On the other hand, contrary to what we want to claim, assume that for some $n>n_{0}$, $d(\tilde{\bs{f}}_{n}^{*},Q)\ge\epsilon$.
In this case, by $\tilde{\bs{f}}_{n}^{*}\in\mac{F}_{G_n(M)}$ and the definition of $\tilde{\bs{f}}_n^{*}$, we have
\begin{align*}
	\inf_{\bs{f}\in\mac{F}_{G_n(M),\epsilon}}\tilde{\Psi}_{n}(\bs{f})
	=\tilde{\Psi}_{n}(\tilde{\bs{f}}_{n}^{*}).
\end{align*}
This contradicts the inequality~\eqref{eq:Thm2-G}.
Therefore,
\begin{align*}
	\forall\omega\in\Omega_{1};
	\forall\epsilon>0;
	\exists{}n_{0}\in\mbb{N};
	\forall{}n>n_{0};
	d(\tilde{\bs{f}}_{n}^{*},Q)<\epsilon.
\end{align*}
In other words, $d(\tilde{\bs{f}_{n}^{*}},Q)\ra0$ a.s. has been proved.

\section*{Acknowledgment}

This work was supported by JSPS KAKENHI Grant Numbers JP20K19756, 21H03402, 21K11787, 23H03352, and JP24K14855.

\bibliographystyle{apalike}
\bibliography{myrefs}

\begin{thebibliography}{}

\bibitem[Abraham et~al., 2003]{abraham2003unsupervised}
Abraham, C., Cornillon, P.-A., Matzner-L{\o}ber, E., and Molinari, N. (2003).
\newblock Unsupervised curve clustering using b-splines.
\newblock {\em Scandinavian journal of statistics}, 30(3):581--595.

\bibitem[Antoniadis et~al., 2013]{antoniadis2013clustering}
Antoniadis, A., Brossat, X., Cugliari, J., and Poggi, J.-M. (2013).
\newblock Clustering functional data using wavelets.
\newblock {\em International Journal of Wavelets, Multiresolution and
  Information Processing}, 11(01):1350003.

\bibitem[Arthur et~al., 2007]{arthur2007k}
Arthur, D., Vassilvitskii, S., et~al. (2007).
\newblock k-means++: The advantages of careful seeding.
\newblock In {\em Soda}, volume~7, pages 1027--1035.

\bibitem[Bachrach et~al., 1999]{bachrach1999bone}
Bachrach, L.~K., Hastie, T., Wang, M.-C., Narasimhan, B., and Marcus, R.
  (1999).
\newblock Bone mineral acquisition in healthy asian, hispanic, black, and
  caucasian youth: a longitudinal study.
\newblock {\em The journal of clinical endocrinology \& metabolism},
  84(12):4702--4712.

\bibitem[Banfield and Raftery, 1993]{banfield1993model}
Banfield, J.~D. and Raftery, A.~E. (1993).
\newblock Model-based gaussian and non-gaussian clustering.
\newblock {\em Biometrics}, pages 803--821.

\bibitem[Biau et~al., 2008]{biau2008performance}
Biau, G., Devroye, L., and Lugosi, G. (2008).
\newblock On the performance of clustering in hilbert spaces.
\newblock {\em IEEE Transactions on Information Theory}, 54(2):781--790.

\bibitem[Bouveyron et~al., 2015]{bouveyron2015discriminative}
Bouveyron, C., C{\^o}me, E., and Jacques, J. (2015).
\newblock The discriminative functional mixture model for a comparative
  analysis of bike sharing systems.
\newblock {\em The Annals of Applied Statistics}, pages 1726--1760.

\bibitem[Bouveyron and Jacques, 2011]{bouveyron2011model}
Bouveyron, C. and Jacques, J. (2011).
\newblock Model-based clustering of time series in group-specific functional
  subspaces.
\newblock {\em Advances in Data Analysis and Classification}, 5:281--300.

\bibitem[Cali{\'n}ski and Harabasz, 1974]{calinski1974dendrite}
Cali{\'n}ski, T. and Harabasz, J. (1974).
\newblock A dendrite method for cluster analysis.
\newblock {\em Communications in Statistics-theory and Methods}, 3(1):1--27.

\bibitem[Chiou and Li, 2007]{chiou2007functional}
Chiou, J.-M. and Li, P.-L. (2007).
\newblock Functional clustering and identifying substructures of longitudinal
  data.
\newblock {\em Journal of the Royal Statistical Society Series B: Statistical
  Methodology}, 69(4):679--699.

\bibitem[Chiou and Li, 2008]{chiou2008correlation}
Chiou, J.-M. and Li, P.-L. (2008).
\newblock Correlation-based functional clustering via subspace projection.
\newblock {\em Journal of the American Statistical Association},
  103(484):1684--1692.

\bibitem[Cl{\'e}men{\c{c}}on, 2014]{clemenccon2014statistical}
Cl{\'e}men{\c{c}}on, S. (2014).
\newblock A statistical view of clustering performance through the theory of
  u-processes.
\newblock {\em Journal of Multivariate Analysis}, 124:42--56.

\bibitem[Cuesta-Albertos et~al., 1997]{cuesta1997trimmed}
Cuesta-Albertos, J.~A., Gordaliza, A., and Matr{\'a}n, C. (1997).
\newblock Trimmed $ k $-means: an attempt to robustify quantizers.
\newblock {\em The Annals of Statistics}, 25(2):553--576.

\bibitem[Delaigle et~al., 2019]{delaigle2019clustering}
Delaigle, A., Hall, P., and Pham, T. (2019).
\newblock Clustering functional data into groups by using projections.
\newblock {\em Journal of the Royal Statistical Society Series B: Statistical
  Methodology}, 81(2):271--304.

\bibitem[Fang and Wang, 2012]{fang2012selection}
Fang, Y. and Wang, J. (2012).
\newblock Selection of the number of clusters via the bootstrap method.
\newblock {\em Computational Statistics \& Data Analysis}, 56(3):468--477.

\bibitem[Fischer, 2010]{fischer2010quantization}
Fischer, A. (2010).
\newblock Quantization and clustering with bregman divergences.
\newblock {\em Journal of Multivariate Analysis}, 101(9):2207--2221.

\bibitem[Gallegos and Ritter, 2005]{gallegos2005robust}
Gallegos, M.~T. and Ritter, G. (2005).
\newblock A robust method for cluster analysis.
\newblock {\em The Annals of Statistics}, 33(1):347--380.

\bibitem[Garcia-Escudero and Gordaliza, 2005]{garcia2005proposal}
Garcia-Escudero, L.~A. and Gordaliza, A. (2005).
\newblock A proposal for robust curve clustering.
\newblock {\em Journal of classification}, 22(2):185--201.

\bibitem[Gersho and Gray, 2012]{gersho2012vector}
Gersho, A. and Gray, R.~M. (2012).
\newblock {\em Vector quantization and signal compression}, volume 159.
\newblock Springer Science \& Business Media.

\bibitem[Giacofci et~al., 2013]{giacofci2013wavelet}
Giacofci, M., Lambert-Lacroix, S., Marot, G., and Picard, F. (2013).
\newblock Wavelet-based clustering for mixed-effects functional models in high
  dimension.
\newblock {\em Biometrics}, 69(1):31--40.

\bibitem[Hall and Hosseini-Nasab, 2006]{hall2006properties}
Hall, P. and Hosseini-Nasab, M. (2006).
\newblock On properties of functional principal components analysis.
\newblock {\em Journal of the Royal Statistical Society Series B: Statistical
  Methodology}, 68(1):109--126.

\bibitem[Hartigan and Wong, 1979]{hartigan1979algorithm}
Hartigan, J.~A. and Wong, M.~A. (1979).
\newblock Algorithm as 136: A k-means clustering algorithm.
\newblock {\em Journal of the royal statistical society. series c (applied
  statistics)}, 28(1):100--108.

\bibitem[He et~al., 2010]{he2010functional}
He, G., M{\"u}ller, H.-G., Wang, J.-L., and Yang, W. (2010).
\newblock Functional linear regression via canonical analysis.
\newblock {\em Bernoulli}, 16(3):705--729.

\bibitem[Horv{\'a}th and Kokoszka, 2012]{horvath2012inference}
Horv{\'a}th, L. and Kokoszka, P. (2012).
\newblock {\em Inference for functional data with applications}, volume 200.
\newblock Springer Science \& Business Media.

\bibitem[Hsing and Eubank, 2015]{hsing2015theoretical}
Hsing, T. and Eubank, R. (2015).
\newblock {\em Theoretical foundations of functional data analysis, with an
  introduction to linear operators}, volume 997.
\newblock John Wiley \& Sons.

\bibitem[Huang et~al., 2002]{huang2002varying}
Huang, J.~Z., Wu, C.~O., and Zhou, L. (2002).
\newblock Varying-coefficient models and basis function approximations for the
  analysis of repeated measurements.
\newblock {\em Biometrika}, 89(1):111--128.

\bibitem[Hubert and Arabie, 1985]{hubert1985comparing}
Hubert, L. and Arabie, P. (1985).
\newblock Comparing partitions.
\newblock {\em Journal of classification}, 2:193--218.

\bibitem[Jacques and Preda, 2013]{jacques2013funclust}
Jacques, J. and Preda, C. (2013).
\newblock Funclust: A curves clustering method using functional random
  variables density approximation.
\newblock {\em Neurocomputing}, 112:164--171.

\bibitem[James and Hastie, 2001]{james2001functional}
James, G.~M. and Hastie, T.~J. (2001).
\newblock Functional linear discriminant analysis for irregularly sampled
  curves.
\newblock {\em Journal of the Royal Statistical Society Series B: Statistical
  Methodology}, 63(3):533--550.

\bibitem[James and Sugar, 2003]{james2003clustering}
James, G.~M. and Sugar, C.~A. (2003).
\newblock Clustering for sparsely sampled functional data.
\newblock {\em Journal of the American Statistical Association},
  98(462):397--408.

\bibitem[Kaufman and Rousseeuw, 2009]{kaufman2009finding}
Kaufman, L. and Rousseeuw, P.~J. (2009).
\newblock {\em Finding groups in data: an introduction to cluster analysis}.
\newblock John Wiley \& Sons.

\bibitem[Koner and Staicu, 2023]{koner2023second}
Koner, S. and Staicu, A.-M. (2023).
\newblock Second-generation functional data.
\newblock {\em Annual Review of Statistics and Its Application}, 10:547--572.

\bibitem[Ledoux and Talagrand, 2013]{ledoux2013probability}
Ledoux, M. and Talagrand, M. (2013).
\newblock {\em Probability in Banach Spaces: isoperimetry and processes}.
\newblock Springer Science \& Business Media.

\bibitem[Levrard, 2015]{levrard2015nonasymptotic}
Levrard, C. (2015).
\newblock Nonasymptotic bounds for vector quantization in hilbert spaces.
\newblock {\em The Annals of Statistics}, 43(2):592--619.

\bibitem[Li et~al., 2020]{li2020fast}
Li, C., Xiao, L., and Luo, S. (2020).
\newblock Fast covariance estimation for multivariate sparse functional data.
\newblock {\em Stat}, 9(1):e245.

\bibitem[Lim et~al., 2019]{lim2019multiscale}
Lim, Y., Oh, H.-S., and Cheung, Y.~K. (2019).
\newblock Multiscale clustering for functional data.
\newblock {\em Journal of Classification}, 36(2):368--391.

\bibitem[Linder, 2002]{linder2002learning}
Linder, T. (2002).
\newblock Learning-theoretic methods in vector quantization.
\newblock In {\em Principles of nonparametric learning}, pages 163--210.
  Springer.

\bibitem[Linder et~al., 1994]{linder1994rates}
Linder, T., Lugosi, G., and Zeger, K. (1994).
\newblock Rates of convergence in the source coding theorem, in empirical
  quantizer design, and in universal lossy source coding.
\newblock {\em IEEE Transactions on Information Theory}, 40(6):1728--1740.

\bibitem[McLachlan et~al., 2019]{mclachlan2019finite}
McLachlan, G.~J., Lee, S.~X., and Rathnayake, S.~I. (2019).
\newblock Finite mixture models.
\newblock {\em Annual review of statistics and its application}, 6:355--378.

\bibitem[Ng et~al., 2006]{ng2006mixture}
Ng, S.-K., McLachlan, G.~J., Wang, K., Ben-Tovim~Jones, L., and Ng, S.-W.
  (2006).
\newblock A mixture model with random-effects components for clustering
  correlated gene-expression profiles.
\newblock {\em Bioinformatics}, 22(14):1745--1752.

\bibitem[Park et~al., 2022]{park2022sparse}
Park, J., Ahn, J., and Jeon, Y. (2022).
\newblock Sparse functional linear discriminant analysis.
\newblock {\em Biometrika}, 109(1):209--226.

\bibitem[Paul and Peng, 2009]{paul2009consistency}
Paul, D. and Peng, J. (2009).
\newblock Consistency of restricted maximum likelihood estimators of principal
  components.
\newblock {\em The Annals of Statistics}, 37(3):1229--1271.

\bibitem[Peng and M{\"u}ller, 2008]{peng2008distance}
Peng, J. and M{\"u}ller, H.-G. (2008).
\newblock Distance-based clustering of sparsely observed stochastic processes,
  with applications to online auctions.
\newblock {\em The Annals of Applied Statistics}, 2(3):1056--1077.

\bibitem[Pollard, 1981]{pollard1981strong}
Pollard, D. (1981).
\newblock Strong consistency of k-means clustering.
\newblock {\em The annals of statistics}, pages 135--140.

\bibitem[Pollard, 1982]{pollard1982central}
Pollard, D. (1982).
\newblock A central limit theorem for $ k $-means clustering.
\newblock {\em The Annals of Probability}, 10(4):919--926.

\bibitem[{R Core Team}, 2024]{Rmanual}
{R Core Team} (2024).
\newblock {\em R: A Language and Environment for Statistical Computing}.
\newblock R Foundation for Statistical Computing, Vienna, Austria.

\bibitem[Ramsay and Silverman, 2005]{citeulike:11611491}
Ramsay, J.~O. and Silverman, B.~W. (2005).
\newblock {\em {Functional Data Analysis}}.
\newblock Springer.

\bibitem[Rivera-Garc{\'\i}a et~al., 2019]{rivera2019robust}
Rivera-Garc{\'\i}a, D., Garc{\'\i}a-Escudero, L.~A., Mayo-Iscar, A., and
  Ortega, J. (2019).
\newblock Robust clustering for functional data based on trimming and
  constraints.
\newblock {\em Advances in Data Analysis and Classification}, 13(1):201--225.

\bibitem[{\c{S}}ent{\"u}rk and M{\"u}ller, 2008]{csenturk2008generalized}
{\c{S}}ent{\"u}rk, D. and M{\"u}ller, H.-G. (2008).
\newblock Generalized varying coefficient models for longitudinal data.
\newblock {\em Biometrika}, 95(3):653--666.

\bibitem[Smith, 2015]{smith2015essential}
Smith, G. (2015).
\newblock {\em Essential statistics, regression, and econometrics}.
\newblock Academic press.

\bibitem[Steinley, 2006]{steinley2006k}
Steinley, D. (2006).
\newblock K-means clustering: a half-century synthesis.
\newblock {\em British Journal of Mathematical and Statistical Psychology},
  59(1):1--34.

\bibitem[Sugar and James, 2003]{sugar2003finding}
Sugar, C.~A. and James, G.~M. (2003).
\newblock Finding the number of clusters in a dataset: An information-theoretic
  approach.
\newblock {\em Journal of the American Statistical Association},
  98(463):750--763.

\bibitem[Terada, 2015]{terada2015strong}
Terada, Y. (2015).
\newblock Strong consistency of factorial k-means clustering.
\newblock {\em Annals of the Institute of Statistical Mathematics},
  67(2):335--357.

\bibitem[Tibshirani et~al., 2001]{tibshirani2001estimating}
Tibshirani, R., Walther, G., and Hastie, T. (2001).
\newblock Estimating the number of clusters in a data set via the gap
  statistic.
\newblock {\em Journal of the Royal Statistical Society: Series B (Statistical
  Methodology)}, 63(2):411--423.

\bibitem[Wang, 2010]{wang2010consistent}
Wang, J. (2010).
\newblock Consistent selection of the number of clusters via crossvalidation.
\newblock {\em Biometrika}, 97(4):893--904.

\bibitem[Wang et~al., 2016]{wang2016functional}
Wang, J.-L., Chiou, J.-M., and M{\"u}ller, H.-G. (2016).
\newblock Functional data analysis.
\newblock {\em Annual Review of Statistics and its application}, 3:257--295.

\bibitem[Yao and Lee, 2006]{yao2006penalized}
Yao, F. and Lee, T.~C. (2006).
\newblock Penalized spline models for functional principal component analysis.
\newblock {\em Journal of the Royal Statistical Society Series B: Statistical
  Methodology}, 68(1):3--25.

\bibitem[Yao et~al., 2005a]{yao2005functional}
Yao, F., M{\"u}ller, H.-G., and Wang, J.-L. (2005a).
\newblock Functional data analysis for sparse longitudinal data.
\newblock {\em Journal of the American statistical association},
  100(470):577--590.

\bibitem[Yao et~al., 2005b]{yao2005functionalregresstion}
Yao, F., M{\"u}ller, H.-G., and Wang, J.-L. (2005b).
\newblock Functional linear regression analysis for longitudinal data.
\newblock {\em The Annals of Statistics}, 33(6):2873--2903.

\bibitem[Yassouridis et~al., 2018]{yassouridis2018generalization}
Yassouridis, C., Ernst, D., and Leisch, F. (2018).
\newblock Generalization, combination and extension of functional clustering
  algorithms: the r package funcy.
\newblock {\em Journal of Statistical Software}, 85:1--25.

\bibitem[Zhang and Wang, 2016]{zhang2016sparse}
Zhang, X. and Wang, J.-L. (2016).
\newblock From sparse to dense functional data and beyond.
\newblock {\em The Annals of Statistics}, 44(5):2281--2321.

\bibitem[Zhou et~al., 2008]{zhou2008joint}
Zhou, L., Huang, J.~Z., and Carroll, R.~J. (2008).
\newblock Joint modelling of paired sparse functional data using principal
  components.
\newblock {\em Biometrika}, 95(3):601--619.

\end{thebibliography}


\begin{thebibliography}{99}
    \bibitem[Biaou et al. (2008)]{BiauDevroye2008} Biau, G., Devroye, L., and Lugosi, G. (2008). On the performance of clustering in Hilbert spaces. \textit{IEEE Transactions on Information Theory}, 54(2):781--790.
    \bibitem[Gersho and Gray (2012)]{GershoGray2012} Gersho, A. and Gray, R. M. (2012). \textit{Vector quantization and signal compression}, volume 159. Springer Science \& Business Media.
    \bibitem[Ledoux and Talagrand (2013)]{LedouxTalagrand2013} Ledoux, M. and Talagrand, M. (2013). \textit{Probability in Banach Spaces: isoperimetry and processes}. Springer Science \& Business Media.
    \bibitem[Levrard (2015)]{Levrard2015} Levrard, C. (2015). Nonasymptotic bounds for vector quantization in Hilbert Spaces. \textit{The Annals of Statistics}, 43(2):592--619.
    \bibitem[McDiarmid (1989)]{McDiarmid1989} McDiarmid, C. (1989). On the method of bounded differences. \textit{Surveys in combinatorics}, 141(1):148--188.
\end{thebibliography}

\clearpage

\renewcommand{\theequation}{S.\arabic{equation}}
\setcounter{equation}{0} 

\renewcommand{\thesection}{S.\arabic{section}}
\setcounter{section}{0} 

\renewcommand{\theTheo}{S.\arabic{Theo}}
\setcounter{Theo}{0} 

\renewcommand{\thefigure}{S.\arabic{figure}}
\setcounter{figure}{0} 

\renewcommand{\thetable}{S.\arabic{table}}
\setcounter{table}{0} 

\begin{center}
    {\Large $K$-means clustering for sparsely observed longitudinal data}
\end{center}
\begin{center}
    {\Large Supplementary Information}
\end{center}
\begin{center}
    {\large Michio Yamamoto, Yoshikazu Terada}
\end{center}

\section{Lemmas}
\label{SI-sec:lemmas}

All the symbols such as $X_i$, $f_{k}$, $P_{T}$, and $T_{ij}$ are used with the same meanings as in the main text.
We consider the assumptions same as those of theorems.




\begin{Lemm}
	\label{Lemm:diff_A1}
	As $n\ra\infty$, 
	\begin{align*}
		\sup_{\bs{f}\in\mac{F}_{G_n(M)}}\left|\tilde{\Psi}_{n}(\bs{f})-\Psi_{n}(\bs{f})\right|
		\ra0\ a.s.
	\end{align*}
\end{Lemm}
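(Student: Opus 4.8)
The plan is to control the difference between the weighted discrete loss $\tilde{\Psi}_n$ and the ``idealized'' continuous loss $\Psi_n$ uniformly over $\bs{f}\in\mac{F}_{G_n(M)}$, by expanding the pointwise discrepancy subject by subject. Fix $i$ and $\bs{f}\in\mac{F}_{G_n(M)}$, and let $k(i,\bs{f})$ be the index achieving $\min_k\|X_i-f_k\|_{P_T}^2$ while $\hat k(i,\bs{f})$ achieves the discrete minimum $\min_k N_i^{-1}\sum_j\{X_i(T_{ij})-f_k(T_{ij})\}^2$. A standard ``$\min$ of close quantities is close'' argument gives
\begin{align*}
    \left|\min_k \frac1{N_i}\sum_{j=1}^{N_i}\{X_i(T_{ij})-f_k(T_{ij})\}^2 - \min_k\|X_i-f_k\|_{P_T}^2\right|
    \le \max_{k=1,\dots,K}\left|\frac1{N_i}\sum_{j=1}^{N_i}g_{i,k}(T_{ij}) - \int_{\mac{T}} g_{i,k}\,dP_T\right|,
\end{align*}
where $g_{i,k}(t)=\{X_i(t)-f_k(t)\}^2$. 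So it suffices to bound, uniformly over $\bs{f}\in\mac{F}_{G_n(M)}$ and $k$, the empirical-process term $\frac1{N_i}\sum_j g_{i,k}(T_{ij}) - \mathbb{E}_{T}[g_{i,k}(T)\mid X_i]$, then average the $n$ subject-level bounds with weights $1/n$ and take a supremum.

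The key steps, in order: (1) Write each $f_k=\bs\beta_k^\top\bs\phi$ with $\|f_k\|_{P_T}\le M$, which by orthonormality forces $\|\bs\beta_k\|_2\le M$; together with $\|X_i\|_\infty\le M$ a.s.\ and $\|\phi_j\|_\infty\le C_\phi$ this gives a uniform bound on $\|g_{i,k}\|_\infty$ of order $M^2 + M^2 C_\phi^2 m_n^2$ (the $m_n^2$ coming from $\|f_k\|_\infty\le \sum_l|\beta_{kl}|\,\|\phi_l\|_\infty\le C_\phi\sqrt{m_n}\,\|\bs\beta_k\|_2$, squared). (2) For the stochastic term, condition on $X_1,\dots,X_n$ and treat the $T_{ij}$ as the only randomness; the class $\{g_{i,k}:\bs f\in\mac F_{G_n(M)},\,k\le K\}$ is a bounded subset of a space of functions depending on at most $m_n+1$ ``coordinates'' (the coefficients of $X_i$ projected onto $G_n$ plus the parametrization of $f_k$), so its covering number at scale $\delta$ is polynomial in $1/\delta$ with exponent $O(m_n)$. (3) Apply a Hoeffding/Bernstein bound for each fixed $(i,k,\bs\beta)$ combined with a chaining or finite-net union bound: conditionally, $\sup_{k,\bs\beta}|\frac1{N_i}\sum_j g_{i,k}(T_{ij})-\mathbb E g_{i,k}|$ is of order $\|g_{i,k}\|_\infty \sqrt{m_n\log N_i/N_i}$ up to logarithmic factors, with exponential tail. (4) Average over $i$: the bound on $\sup_{\bs f}|\tilde\Psi_n(\bs f)-\Psi_n(\bs f)|$ becomes, up to constants and logs, $\frac1n\sum_i N_i^{-1/2} \cdot m_n^2\sqrt{\log n}\,\asymp\, D_n\, m_n^2 \sqrt{\log n}$ on an event of probability $1-o(n^{-2})$ (choosing the confidence level to make the union bound over $i=1,\dots,n$ summable). (5) Conclude by Borel--Cantelli: since $D_n^2 m_n^4 \log n \to 0$ is implied by the hypothesis $D_n^2 m_n^2 \log n\to 0$ only if $m_n$ is controlled --- here one should be careful and re-examine whether the correct power is $m_n$ or $m_n^2$; in the regime where $f_k$ is evaluated pointwise the relevant sup-norm bound uses $\|f_k\|_\infty = O(\sqrt{m_n})$, so $\|g_{i,k}\|_\infty = O(m_n)$ and the final rate is $D_n m_n \sqrt{\log n}$, matching the stated condition after squaring --- the series $\sum_n \Pr(\sup_{\bs f}|\tilde\Psi_n-\Psi_n|>\varepsilon)$ converges, hence the sup tends to $0$ a.s.

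The main obstacle I anticipate is the uniformity over the $m_n$-dimensional parameter $\bs\beta$ combined with the subject-specific, possibly heavily dependent sampling design: the $T_{ij}$ for a fixed $i$ are i.i.d.\ (good), but $N_i$ is random and can be as small as $2$, so the per-subject fluctuation $N_i^{-1/2}$ has no uniform lower bound, and one must verify that averaging these over $i$ genuinely produces the quantity $D_n = n^{-1}\sum_i N_i^{-1/2}$ rather than something worse like $\max_i N_i^{-1/2}$; this forces the union bound over subjects to be done at a confidence level fine enough ($\sim n^{-3}$, say) that the $\log$ factors stay under control, which is exactly where the $\log n$ in the hypothesis is consumed. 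A secondary subtlety is pinning down the correct polynomial power of $m_n$ in the covering-number and sup-norm estimates, since that determines whether the hypothesis $D_n^2 m_n^2 \log n\to 0$ is exactly what is needed or merely sufficient.
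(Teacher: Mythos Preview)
Your high-level strategy matches the paper's exactly: reduce to a per-subject uniform deviation $\sup_{f\in G_n(M)}\bigl|\tfrac1{N_i}\sum_j g_f(X_i,T_{ij})-\mathbb{E}_T g_f(X_i,T)\bigr|$, control it with a concentration-plus-complexity argument, take a union bound over $i=1,\dots,n$ so that the averaged bound scales like $D_n$, and finish with Borel--Cantelli. The paper also uses McDiarmid for the concentration step and lands on precisely the tail bound $\exp\bigl(-c\,\epsilon^2/(K^2 D_n^2 m_n^2)+\log n+O(1)\bigr)$, whence the hypothesis $D_n^2 m_n^2\log n\to0$.

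The one substantive difference is in your step (3). You propose covering numbers plus Hoeffding on a net; the paper instead bounds the Rademacher complexity of $\mac{Z}$ directly, splitting $(X-f)^2=X^2-2Xf+f^2$ and handling the quadratic piece $f^2$ via the Ledoux--Talagrand contraction inequality. This is exactly what resolves the ambiguity you flag about the power of $m_n$: contraction gives $\hat R_T(\mac{Z})\lesssim C_\phi^2 M^2\sqrt{m_n^2/N_i}$, i.e.\ the per-subject rate $m_n/\sqrt{N_i}$, not $m_n^{3/2}/\sqrt{N_i}$. A naive covering argument with envelope $\|g\|_\infty=O(m_n)$ and metric entropy $\asymp m_n\log(1/\delta)$ would produce the extra $\sqrt{m_n}$ and force the stronger condition $D_n^2 m_n^3\log n\to0$; to recover the paper's rate by your route you would still need to decompose $(X-f)^2$ and treat the linear and quadratic parts separately (the linear part is trivially $O(\sqrt{m_n/N_i})$ by Cauchy--Schwarz, the quadratic part needs contraction or an equivalently sharp $L^2$-chaining). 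So your outline is sound, but the clean resolution of ``$m_n$ versus $m_n^2$'' that you correctly isolate as the crux is supplied in the paper by the contraction lemma rather than by a covering bound.
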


\begin{proof}

We observe that
\begin{align*}
	&\sup_{\bs{f}\in\mac{F}_{G_n(M)}}
	|\tilde{\Psi}_{n}(\bs{f})-\Psi_{n}(\bs{f})|\\
	&\le\frac{1}{n}\sum_{i=1}^{n}\sup_{\bs{f}\in\mac{F}_{G_n(M)}}\left|
	\min_{k=1,\dots,K}\frac{1}{N_{i}}\sum_{j=1}^{N_{i}}
	\{X_{i}(T_{ij})-f_{k}(T_{ij})\}^{2}
	-\min_{k=1,\dots,K}\|X_{i}-f_{k}\|_{P_{T}}^{2}
	\right|\\
	&\le\frac{1}{n}\sum_{i=1}^{n}\sup_{\bs{f}\in\mac{F}_{G_n(M)}}\sum_{k=1}^{K}\left|
	\frac{1}{N_{i}}\sum_{j=1}^{N_{i}}\{X_{i}(T_{ij})-f_{k}(T_{ij})\}^{2}
	-\|X_{i}-f_{k}\|_{P_{T}}^{2}
	\right|\\
	&\le\frac{1}{n}\sum_{i=1}^{n}\sup_{\bs{f}\in\mac{F}_{G_n(M)}}K\max_{k=1,\dots,K}\left|
	\frac{1}{N_{i}}\sum_{j=1}^{N_{i}}\{X_{i}(T_{ij})-f_{k}(T_{ij})\}^{2}
	-\|X_{i}-f_{k}\|_{P_{T}}^{2}
	\right|\\
	&\le{}K\frac{1}{n}\sum_{i=1}^{n}\sup_{f\in{}G_n(M)}\left|
	\frac{1}{N_{i}}\sum_{j=1}^{N_{i}}\{X_{i}(T_{ij})-f(T_{ij})\}^{2}
	-\|X_{i}-f\|_{P_{T}}^{2}
	\right|\\
	&=K\frac{1}{n}\sum_{i=1}^{n}\sup_{g_{f}\in\mac{Z}}
	\left|
	\frac{1}{N_{i}}\sum_{j=1}^{N_{i}}g_{f}(X_{i},T_{ij})
	-\mbb{E}_{T}[g_{f}(X_{i},T)]
	\right|,
\end{align*}
where the set $\mac{X}$ consists of real-valued functions on $\mac{T}$ whose supremum norm is bounded by $M$, and $\mac{Z}:=\{g_{f}:\mac{X}\times\mac{T}\ra\mbb{R}\mid{}g_{f}(x,t)=\{x(t)-f(t)\}^{2},\;f\in{}G_n(M)\}.$

In the following, as we consider each $i$ ($i=1,\dots,n$), we will temporarily omit the index $i$ from $X_{i}$ and $T_{ij}$ for simplicity of notation, representing them as $X$ and $T_j$, respectively.
From 
McDiarmid's inequality (McDiarmid, 1989),
for $\delta:0<\delta<1$, with probability at least $1-\delta$,
\mymemo{
    金森(2015)の定理2.7を使うが，結局McDiarmidの不等式を使うだけなのでMcDiarmidを引用しておく．
}
\begin{align*}
	\sup_{g_{f}\in\mac{Z}}\left|
	\frac{1}{N}\sum_{j=1}^{N}g_{f}(X,T_{j})
	-\mbb{E}_{T}[g_{f}(X,T)]
	\right|
	\le2R_{N}(\mac{Z})+C_{1}m_n\sqrt{\frac{\log(2/\delta)}{2N}}
\end{align*}
where
\begin{align*}
	R_{N}(\mac{Z})
	=\mbb{E}_{T}\left[\hat{R}_{T}(\mac{Z})\right]
	=\mbb{E}_{T}\left[\mbb{E}_{\sigma}\left(
	\sup_{g_{f}\in\mac{Z}}\frac{1}{N}\sum_{j=1}^{N}\sigma_{j}g_{f}(X,T_{j})
	\right)\right],
\end{align*}
$\sigma_{1},\dots,\sigma_{N}$ are independent random variables that take values $+1$ and $-1$ with equal probability, and $C_1$ is a constant that is independent of $n$.
In the above, we define the empirical Rademacher complexity as follows:
\begin{align*}
	\hat{R}_{T}(\mac{Z})
	&=\mbb{E}_{\sigma}\left(
	\sup_{g_{f}\in\mac{Z}}\frac{1}{N}\sum_{j=1}^{N}\sigma_{j}g_{f}(X,T_{j})
	\right).
\end{align*}
Then,
\begin{align*}
	\hat{R}_{T}(\mac{Z})
	&=\mbb{E}_{\sigma}\left[\sup_{f\in{}G_n(M)}\frac{1}{N}\sum_{j=1}^{N}\sigma_{j}(X(T_j)-f(T_j))^2\right]\notag\\
	&\le\mbb{E}_{\sigma}\left[\frac{1}{N}\sum_{j=1}^{N}\sigma_{j}\{X(T_{j})\}^{2}
	+\sup_{f\in{}G_n(M)}\frac{1}{N}\sum_{j=1}^{N}2\sigma_{j}(-1)X(T_{j})f(T_j)
	+\sup_{f\in{}G_n(M)}\frac{1}{N}\sum_{j=1}^{N}\sigma_{j}\{f(T_{j})\}^{2}\right]\notag\\
	&=2\mbb{E}_{\sigma}\left[\sup_{f\in{}G_n(M)}\frac{1}{N}\sum_{j=1}^{N}\sigma_{j}X(T_{j})f(T_{j})\right]
	+\mbb{E}_{\sigma}\left[\sup_{f\in{}G_n(M)}\frac{1}{N}\sum_{j=1}^{N}\sigma_{j}\{f(T_{j})\}^{2}\right].
\end{align*}

First, let's consider the first term on the right-hand side of the last line.
Let $\|\cdot\|_2$ denote the Euclidean norm.
By the Cauchy–Schwarz inequality and the assumption $\sup_{t}|X(t)|\le M$ a.s.,
\ba
&\mbb{E}_{\sigma}\left[\sup_{f\in{}G_n(M)}\frac{1}{N}\sum_{j=1}^{N}\sigma_{j}X(T_{j})f(T_{j})\right]
=
\mbb{E}_{\sigma}\left[
\sup_{\bm{\beta}:\|\bm{\beta}\|_2 \le M} \bigg\{\frac{1}{N}\sum_{j=1}^{N}\sigma_{j}X(T_j)\bm{\phi}(T_j)\bigg\}^\top\bm{\beta}
\right]\\
&\le
\mbb{E}_{\sigma}\left[\sup_{\bm{\beta}:\|\bm{\beta}\|_2 \le M} \|\bm{\beta}\|_2 \left\|\frac{1}{N}\sum_{j=1}^{N}\sigma_{j}X(T_j)\bm{\phi}(T_j) \right\|_2 \right]
\le
M\mbb{E}_{\sigma}\left[\left\| \frac{1}{N}\sum_{j=1}^{N}\sigma_{j}X(T_j)\bm{\phi}(T_j) \right\|_2 \right]\\
&\le
M\sqrt{\mbb{E}_{\sigma}\left[\left\| \frac{1}{N}\sum_{j=1}^{N}\sigma_{j}X(T_j)\bm{\phi}(T_j) \right\|_2^2 \right]}
=
M\sqrt{\frac{1}{N^2}\sum_{j=1}^{N}\mbb{E}_{\sigma}\big[X^2(T_j)\|\bm{\phi}(T_j)\|_2^2 \big]}
\le
C_\phi M^2\sqrt{\frac{m_n}{N}}.
\ea

We recall that $|f(t)| \le C_\phi M \sqrt{m_n}$ for all $f \in G_n(M)$.
For the second term, similar to the first term, 
the comparison inequality \citep[Theorem 4.12]{ledoux2013probability}
implies
\ba
    &\mbb{E}_{\sigma}\left[\sup_{f\in{}G_n(M)}\frac{1}{N}\sum_{j=1}^{N}\sigma_{j}\{f(T_{j})\}^{2}\right]
    \le 
    4C_\phi^2 M^2 m_n\mbb{E}_{\sigma}\left[\sup_{f\in{}G_n(M)}\left| \frac{1}{N}\sum_{j=1}^{N}\sigma_{j}\left\{\frac{f(T_{j})}{2C_\phi M \sqrt{m_n}}\right\}^{2}\right|\right]\\
    &\le
    8C_\phi^2 M^2 m_n\mbb{E}_{\sigma}\left[\sup_{f\in{}G_n(M)}\left| \frac{1}{N}\sum_{j=1}^{N}\sigma_{j}\frac{f(T_{j})}{2C_\phi M \sqrt{m_n}}\right|\right]
    =
    4C_\phi M \sqrt{m_n}\;\mbb{E}_{\sigma}\left[\sup_{f\in{}G_n(M)}\left| \frac{1}{N}\sum_{j=1}^{N}\sigma_{j}f(T_{j})\right|\right]\\
    &\le 4C_\phi^2 M^2 \sqrt{\frac{m_n^2}{N}}.
    \ea

Considering this simultaneously for all $i=1,\dots,n$, we have
\begin{align*}
	&\mrm{P}\left[
	K\frac{1}{n}\sum_{i=1}^{n}\sup_{g_{f}\in\mac{Z}}\left|
	\frac{1}{N_{i}}\sum_{j=1}^{N_{i}}g_{f}(X_{i},T_{ij})-\mbb{E}_{T}\left[g_{f}(X,T)\right]
	\right|
	\le{}K\frac{1}{n}\sum_{i=1}^{n}
	\left(C_{2}\sqrt{\frac{m_{n}^{2}}{N_{i}}}+C_{1}m_n\sqrt{\frac{\log(2\delta^{-1})}{2N_{i}}}
	\right)
	\right]\\
	&\ge(1-\delta)^{n}
\end{align*}
where $C_2$ is a constant independent of $n$.
By Bernoulli's inequality, we have $(1-\delta)^n \ge 1- n\delta$.
Let $D_{n}:=n^{-1}\sum_{i=1}^{n}N_i^{-1/2}$.
Then, 
\begin{align*}
	K\frac{1}{n}\sum_{i=1}^{n}\left(C_{2}\sqrt{\frac{m_{n}^{2}}{N_{i}}}+C_{1}m_n\sqrt{\frac{\log(2/\delta)}{2N_{i}}}\right)
	&\le C_{3}Km_n\left(1+\sqrt{\log(2/\delta)}\right)D_{n}\\
	&\le{}C_{4}Km_n\sqrt{1+\log(2/\delta)}D_{n}
	=:\epsilon,
\end{align*}
where $C_{4}:=C_{3}\sqrt{2}$.

Here, expressing $\delta$ in terms of $\epsilon$, we have
\ba
\delta = 2\exp\left( - \frac{\epsilon^2}{C_4^2K^2D_n^2m_n^2} + 1 \right).
\ea
When $\mrm{P}(A\ge{}a)\le\delta n$, if $a\le\epsilon$, then $\mrm{P}(A\ge\epsilon)\le\delta n$, and thus,
\ba
	\mrm{P}\left[
	\sup_{\bs{f}\in\mac{F}_{G_n(M)}}|\tilde{\Psi}_{n}(\bs{f})-\Psi_{n}(\bs{f})|
	\ge\epsilon
	\right]
	&\le{}2n\exp\left(
		-\frac{\epsilon^2}{C_4^2K^2D_n^2m_n^2} + 1 
	\right)\\
	&=\exp\left(-C_{5}\frac{\epsilon^{2}}{K^{2}D_{n}^{2}m_n^2}+1+\log2+\log n\right)
\ea
where $C_{5}:=1/C_{4}^{2}$.

\mymemo{
    XXX This part should be reconsidered as follows. The quantity $a_n$ is not necessary. XXX
    For any $\epsilon>0$, we require that
    \[
    C \times \frac{\epsilon}{D_n^2m_n^2} - \log(n) - 1
    =
    \left( C \times \frac{\epsilon}{D_n^2m_n^2\log n} - 1 - \frac{1+\log2}{\log n} \right)
    \log n
    \rightarrow \infty
    \]
    If $D_n^2m_n^2\log n \rightarrow 0$, we can ensure the convergence.
}

From the Borel-Cantelli lemma, for $n\ra\infty$, if $D_n^2m_n^2\log n \rightarrow 0$, then
\begin{align*}
	\sup_{\bs{f}\in\mac{F}_{G_n(M)}}|\tilde{\Psi}_{n}(\bs{f})-\Psi_{n}(\bs{f})|
	\ra0\ \
	\text{a.s.}
\end{align*}
is obtained.

 \end{proof}


The following lemma is obtained by the proof of Corollary 3.1 of \citet{biau2008performance}.
For the sake of completeness, we provide the proof of the lemma.
\begin{Lemm}
	\label{Lemm:diff_A2}
	As $n\ra\infty$, 
	\begin{align*}
		\sup_{\bs{f}\in\mac{F}_{G_n(M)}}\left|\Psi_{n}(\bs{f})-\Psi(\bs{f})\right|
		\ra0\ a.s.
	\end{align*}
\end{Lemm}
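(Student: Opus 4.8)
The plan is to prove this uniform law of large numbers by the dimension-free empirical-process argument of \citet{biau2008performance} (the one behind their Corollary~3.1). The crucial observation is that $\Psi_n$ and $\Psi$ depend on the centers only through $L_2(P_T)$-distances, so the relevant function class lives on the Hilbert space $L_2(P_T)$ and its complexity is governed by $\|X\|_\infty\le M$ and $K$, not by $m_n$; this is precisely why, unlike Lemma~\ref{Lemm:diff_A1}, the statement carries no growth condition on $m_n$.

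First I would discard the subspace constraint. Since $\mac{F}_{G_n(M)}\subseteq\mac{F}_{(M)}$, it suffices to show $Z_n:=\sup_{\bs{f}\in\mac{F}_{(M)}}|\Psi_n(\bs{f})-\Psi(\bs{f})|\ra 0$ a.s. For $\bs{f}\in\mac{F}_{(M)}$ and $\|X_i\|_{P_T}\le\|X_i\|_\infty\le M$, the summand $\ell_{\bs{f}}(X_i):=\min_{k}\|X_i-f_k\|_{P_T}^2$ lies in $[0,4M^2]$, so replacing a single $X_i$ changes $Z_n$ by at most $4M^2/n$. McDiarmid's inequality then gives, for every $t>0$,
\begin{align*}
\mrm{P}\big(Z_n > \mbb{E}[Z_n] + t\big)\le\exp\!\big(-nt^2/(8M^4)\big).
\end{align*}

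Next I would control $\mbb{E}[Z_n]$ by symmetrization and a Rademacher bound that is free of $m_n$. By the symmetrization inequality, $\mbb{E}[Z_n]\le 2\,\mbb{E}_\sigma\big[\sup_{\bs{f}\in\mac{F}_{(M)}}\tfrac1n\sum_{i=1}^n\sigma_i\ell_{\bs{f}}(X_i)\big]$. Writing $\|X_i-f_k\|_{P_T}^2=\|X_i\|_{P_T}^2-2\langle X_i,f_k\rangle_{P_T}+\|f_k\|_{P_T}^2$ and discarding the $\|X_i\|_{P_T}^2$ term (constant in $\bs{f}$, hence contributing nothing under $\mbb{E}_\sigma$), the problem reduces to the Rademacher complexity of the class $\{x\mapsto\min_{k\le K}(\|f_k\|_{P_T}^2-2\langle x,f_k\rangle_{P_T}):\|f_k\|_{P_T}\le M\}$, which contains the zero function. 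Exactly as in the treatment of the first term in the proof of Lemma~\ref{Lemm:diff_A1}, Cauchy--Schwarz yields the dimension-free bound $\mbb{E}_\sigma\big\|\tfrac1n\sum_i\sigma_iX_i\big\|_{P_T}\le\big(\tfrac1{n^2}\sum_i\|X_i\|_{P_T}^2\big)^{1/2}\le M/\sqrt n$; combined with $\mbb{E}_\sigma|\tfrac1n\sum_i\sigma_i|\le n^{-1/2}$ this bounds the complexity of the single affine class $\{x\mapsto\|f\|_{P_T}^2-2\langle x,f\rangle_{P_T}\}$ by $cM^2/\sqrt n$, and iterating a Rademacher contraction argument to pass from one center to the minimum over the $K$ centers costs only a factor of order $K$. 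Hence $\mbb{E}[Z_n]\le c'KM^2/\sqrt n\ra 0$.

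Finally, fix $\varepsilon>0$; for all large $n$ we have $\mbb{E}[Z_n]<\varepsilon/2$, so the McDiarmid bound gives $\mrm{P}(Z_n>\varepsilon)\le\exp(-n\varepsilon^2/(32M^4))$, which is summable in $n$. Borel--Cantelli then yields $Z_n\ra 0$ a.s., and therefore $\sup_{\bs{f}\in\mac{F}_{G_n(M)}}|\Psi_n(\bs{f})-\Psi(\bs{f})|\ra 0$ a.s. I expect the only delicate points to be the bookkeeping of the contraction step for the minimum over the $K$ centers and, conceptually, making sure the Rademacher bound is genuinely independent of $m_n$ --- in contrast to Lemma~\ref{Lemm:diff_A1}, where replacing $\|\cdot\|_{P_T}$ by the $N_i$-point empirical norm destroys this structure and reintroduces the $\sqrt{m_n^2/N_i}$ terms.
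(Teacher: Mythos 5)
Your proposal is correct and follows essentially the same route as the paper: the paper also proves this lemma by combining the expectation bound $\mbb{E}\sup_{\bs{f}}|\Psi_n(\bs{f})-\Psi(\bs{f})|\le 8KM^2/\sqrt{n}$ (cited as Lemma 4.3 of Biau et al., whose symmetrization/Rademacher proof you sketch) with McDiarmid's inequality applied to the bounded-difference functional $A(x_1,\dots,x_n)$ and then Borel--Cantelli. The only cosmetic differences are that you work on the larger class $\mac{F}_{(M)}$ and re-derive the Rademacher bound rather than citing it.
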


\begin{proof}
From Lemma 4.3 of Biau et al. (2008), we have
\begin{align*}
	\mbb{E}\sup_{\bs{f}\in\mac{F}_{G_n(M)}}(\Psi_{n}(\bs{f})-\Psi(\bs{f}))
	\le 8K\frac{M^{2}}{\sqrt{n}}.
\end{align*}
Additionally, we can obtain the same upper bound by replacing with $\Psi(\bs{f})-\Psi_{n}(\bs{f})$, and thus,
\begin{align*}
	\mbb{E}\sup_{\bs{f}\in\mac{F}_{G_n(M)}}|\Psi_{n}(\bs{f})-\Psi(\bs{f})|
	\le8K\frac{M^{2}}{\sqrt{n}}.
\end{align*}

We will consider the set $\mac{G}$ of functions defined as
\begin{align*}
	\mac{G}
	=\left\{g:\mac{X}\ra\mbb{R}\ \Big|\ {}g(x)=\min_{k=1,\dots,K}\|x-f_{k}\|_{P_{T}}^{2},
	\{f_{1},\dots,f_{K}\}\in\mac{F}_{G(M)}\right\}.
\end{align*}
Note that for $g\in\mac{G}$,
\begin{align*}
	|g(x)|
	=\min_{k}\left|
	\|x\|_{P_{T}}^{2}+\|f_{k}\|_{P_{T}}^{2}-2\lan{}x,f_{k}\ran_{P_{T}}
	\right|
	\le\|x\|_{P_{T}}^{2}+\|f_{1}\|_{P_{T}}^{2}+2\left|\lan{}x,f_{1}\ran_{P_{T}}\right|
	\le 4M^{2}.
\end{align*}
Furthermore, for $x_{1},\dots,x_{n}\in\mac{X}$, define a function $A(x_{1},\dots,x_{n})$ as follows:
\begin{align*}
	A(x_{1},\dots,x_{n})
	&=\sup_{\bs{f}\in\mac{F}_{G_n(M)}}\left\{
            \Psi_{n}(\bs{f})-\Psi(\bs{f})
        \right\}
	=\sup_{g\in\mac{G}}
        \left\{
            \frac{1}{n}\sum_{i=1}^{n}g(x_{i})-\mbb{E}g(X)
        \right\}.
\end{align*}
Then, for any $x'\in\mac{X}$, the following inequality holds:
\begin{align*}
	&\left|A(x_{1},\dots,x_{n-1},x_{n})
	-A(x_{1},\dots,x_{n-1},x')
        \right|\\
	&=\left|\sup_{g\in\mac{G}}\inf_{f\in\mac{G}}\left\{
	\frac{1}{n}\sum_{i=1}^{n}g(x_{i})-\mbb{E}g(X)-\frac{1}{n}\sum_{i=1}^{n-1}f(x_{i})-\frac{1}{n}f(x')+\mbb{E}f(X)
	\right\}\right|\\
	&\le\left|\sup_{g\in\mac{G}}\left\{
	\frac{1}{n}\sum_{i=1}^{n}g(x_{i})-\mbb{E}g(X)-\frac{1}{n}\sum_{i=1}^{n-1}g(x_{i})-\frac{1}{n}g(x')+\mbb{E}g(X)
	\right\}\right|\\
	&\le\sup_{g\in\mac{G}}\frac{|g(x_{n})|+|g(x')|}{n}
	\le\frac{8M^{2}}{n}.
\end{align*}
\mymemo{
	【3行目の不等号は$f$を$g\in\mac{G}$に置き換えた方が等しいか大きくなる事から成立】
}
Therefore, from McDiarmid's inequality, for any $\delta\in(0,1)$, we have
\begin{align*}
\mrm{P}\left(
	A(X_{1},\dots,X_{n})-\mbb{E}A(X_{1},\dots,X_{n})
	\le8M^{2}\sqrt{\frac{\log(1/\delta)}{2n}}
	\right)
	\ge1-\delta.
\end{align*}
Similarly, for $A'(x_{1},\dots,x_{n}) = \sup_{\bs{f}\in\mac{F}_{G_n(M)}}\left\{\Psi(\bs{f})-\Psi_{n}(\bs{f})\right\}$,
we have the same inequality.
Thus, with probability at least $1-2\delta$, for sufficiently large $n$, we get
\mymemo{
    $P(X\le a)\ge1-\delta$ならば$P(X>a)\le\delta$なので，$P(X>a\vee Y>a)\le P(X>a)+P(Y>a)\le2\delta$である．
    ゆえに，$P(X\le a\wedge Y\le a)=1-P(X>a\vee Y>a)\ge1-2\delta$となる．
}
\begin{align*}
	&\sup_{\bs{f}\in\mac{F}_{G_n(M)}}|\Psi_{n}(\bs{f})-\Psi(\bs{f})|
    = \sup_{\bs{f}\in\mac{F}_{G_n(M)}}\{\Psi_{n}(\bs{f})-\Psi(\bs{f})\} \vee \sup_{\bs{f}\in\mac{F}_{G_n(M)}}\{\Psi(\bs{f})-\Psi_{n}(\bs{f})\}\\
	&\le\mbb{E}\left[\sup_{\bs{f}\in\mac{F}_{G_n(M)}}\{\Psi_{n}(\bs{f})-\Psi(\bs{f})\}\right] \vee \mbb{E}\left[\sup_{\bs{f}\in\mac{F}_{G_n(M)}}\{\Psi(\bs{f})-\Psi_n(\bs{f})\}\right]
	+8M^{2}\sqrt{\frac{\log(1/\delta)}{2n}}\\
	&\le\frac{8KM^{2}}{\sqrt{n}}
	+\frac{4\sqrt{2}M^{2}}{\sqrt{n}}\log(1/\delta).
\end{align*}

Therefore, we have
\begin{align*}
	\mrm{P}\left(
	\sup_{\bs{f}\in\mac{F}_{G_n(M)}}|\Psi_{n}(\bs{f})-\Psi(\bs{f})|>\epsilon
	\right)
	\le
	2\exp\left(
	-\frac{\sqrt{n}}{4\sqrt{2}M^{2}}\epsilon
	+\sqrt{2}K
	\right).
\end{align*}
Consequently, from the Borel-Cantelli lemma, as $n\ra\infty$, $\sup_{\bs{f}\in\mac{F}_{G_n(M)}}|\Psi_{n}(\bs{f})-\Psi(\bs{f})|$ converges to zero alomost surely.

\end{proof}





\vspace{10pt}
\begin{Lemm}
	\label{Lemm:diff_A3}
	As $n\ra\infty$, 
	$\Psi(\bs{f}_{G_n}^{*})$ converges to $\Psi(\bs{f}^{*})$.
\end{Lemm}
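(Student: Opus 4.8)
The statement is purely deterministic (no ``a.s.''), so the plan is simply to squeeze $\Psi(\bs{f}_{G_n}^{*})$ between $\Psi(\bs{f}^{*})$ from below and the loss at a well-chosen feasible competitor from above.

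\textbf{Lower bound.} Since $\bs{f}^{*}\in Q$ minimizes $\Psi$ over all of $\mac{F}$ and $\mac{F}_{G_n(M)}\subseteq\mac{F}$, we have $\Psi(\bs{f}_{G_n}^{*})=\inf_{\bs{f}\in\mac{F}_{G_n(M)}}\Psi(\bs{f})\ge\Psi(\bs{f}^{*})$ for every $n$.

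\textbf{Upper bound.} First I would reduce to a global optimizer of bounded norm: one may assume $\|f_k^{*}\|_{P_T}\le M$ for each $k$, because replacing any coordinate $f_k^{*}$ by its metric projection onto the closed ball $\{g\in L_2(P_T):\|g\|_{P_T}\le M\}$ does not increase $\|X-f_k^{*}\|_{P_T}$ on the event $\{\|X\|_{P_T}\le M\}$, which has probability one since $\|X\|_\infty\le M$ a.s.; hence it does not increase $\Psi$, and all elements of $Q$ share the same value $\Psi(\bs{f}^{*})=\inf_{\mac{F}}\Psi=\inf_{\mac{F}_{(M)}}\Psi$. Fixing such a $\bs{f}^{*}\in Q\cap\mac{F}_{(M)}$, let $P_n$ denote the orthogonal projection of $L_2(P_T)$ onto $G_n=\Span\{\phi_{1},\dots,\phi_{m_n}\}$ and set $\bs{f}^{(n)}:=(P_n f_1^{*},\dots,P_n f_K^{*})$. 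Because $\{\phi_j\}_{j\ge1}$ is a CONS in $L_2(P_T)$ and $m_n\to\infty$, we have $\|P_n f_k^{*}-f_k^{*}\|_{P_T}\to0$ for each $k$; since $P_n$ is a contraction, $\|P_n f_k^{*}\|_{P_T}\le\|f_k^{*}\|_{P_T}\le M$, so $\bs{f}^{(n)}\in\mac{F}_{G_n(M)}\subseteq\mac{F}_{(M)}$. By Lemma~\ref{Lemm:continuity-of-loss}, $\Psi$ is uniformly continuous on $\mac{F}_{(M)}$, hence $\Psi(\bs{f}^{(n)})\to\Psi(\bs{f}^{*})$ as $\bs{f}^{(n)}\to\bs{f}^{*}$ coordinatewise in $\|\cdot\|_{P_T}$. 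Finally, optimality of $\bs{f}_{G_n}^{*}$ over $\mac{F}_{G_n(M)}$ gives $\Psi(\bs{f}_{G_n}^{*})\le\Psi(\bs{f}^{(n)})$, so $\limsup_{n\to\infty}\Psi(\bs{f}_{G_n}^{*})\le\Psi(\bs{f}^{*})$. Combining the two bounds yields $\Psi(\bs{f}_{G_n}^{*})\to\Psi(\bs{f}^{*})$.

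\textbf{Main obstacle.} Everything here is squeeze/monotonicity bookkeeping plus the elementary CONS approximation property; the one point deserving care is the reduction to $\bs{f}^{*}\in\mac{F}_{(M)}$ together with the contraction estimate $\|P_n f_k^{*}\|_{P_T}\le M$, which is precisely what guarantees the projected competitor $\bs{f}^{(n)}$ stays inside the feasible set $\mac{F}_{G_n(M)}$ (and inside the domain $\mac{F}_{(M)}$ on which $\Psi$ is known to be continuous). If $m_n$ is taken nondecreasing one additionally gets that $n\mapsto\Psi(\bs{f}_{G_n}^{*})$ is monotone nonincreasing, but this is not needed for the conclusion.
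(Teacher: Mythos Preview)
Your proposal is correct and follows essentially the same route as the paper: project the true optimum $\bs{f}^{*}$ orthogonally onto $G_n$, invoke the CONS property to get $\|P_n f_k^{*}-f_k^{*}\|_{P_T}\to0$, and conclude via the continuity of $\Psi$ (Lemma~\ref{Lemm:continuity-of-loss}) together with the optimality inequality $\Psi(\bs{f}_{G_n}^{*})\le\Psi(\bs{f}^{(n)})$. You are in fact more careful than the paper on two points it leaves implicit: (i) the lower bound $\Psi(\bs{f}_{G_n}^{*})\ge\Psi(\bs{f}^{*})$, and (ii) the reduction to $\|f_k^{*}\|_{P_T}\le M$ plus the contraction estimate $\|P_nf_k^{*}\|_{P_T}\le M$, which is exactly what guarantees the competitor $\bs{f}^{(n)}$ is feasible in $\mac{F}_{G_n(M)}$ and lies in the domain where Lemma~\ref{Lemm:continuity-of-loss} applies.
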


\begin{proof}
From the projection theorem, there exists $\tilde{f}_{k}\in{}G_n(M)$ such that $\|\tilde{f}_{k}-f_{k}^{*}\|_{P_{T}}=\inf_{g\in{}G_n(M_1)}\|g-f_{k}^{*}\|_{P_{T}}$ with $\bs{f}^*=(f_1^*,\dots,f_K^*)^\top$.
Now, since $\{\phi_j\}_{j=1,\dots,\infty}$ is a CONS in $L_2(P_T)$, 
$f_k^*$ can be expressed as $f_k^*=\sum_{j=1}^{\infty}a_{kj}^*\phi_j$.
Given the orthogonality of $\phi_l$, $\tilde{f}_k$ can be represented as $\tilde{f}_k=\sum_{j=1}^{m_n}a_{kj}^*\phi_j$.
Thus, for $k=1,\dots,K$,
$\|\tilde{f}_k-f_k^*\|_{P_T}^2 = \sum_{j=m_n+1}^{\infty}{a_{kj}^*}^2\rightarrow 0$ as $m_n \rightarrow \infty$.
By the continuity of the loss function $\Psi$ (Lemma~3),
we obtain
$
\Psi(\bs{f}_{G_n}^{*})-\Psi(\bs{f}^{*})
\le
\Psi(\tilde{\bs{f}})-\Psi(\bs{f}^{*}) \rightarrow 0
$
as
$m_n \rightarrow \infty.$

\end{proof}

\section{Proofs}
\label{SI-sec:proofs}

\subsection{Proof of Lemma \ref{Lemm:continuity-of-loss}}

We can express $\Psi(\bs{f}) =\mbb{E}\min_{k}\|X-f_{k}\|_{P_{T}}^{2}=\int\min_{k}\|x-f_{k}\|_{P_{T}}^{2}dP_{X}(x)$.
For any $\delta > 0$, let $\bs{f},\bs{g}$ are elements in $\mac{F}_{(M)}$ such that $d_{H}(\bs{f},\bs{g})<\delta$.
From the definition of $d_H$, for each $g_{\ell}\in\bs{g}$, there exists $f_{k}(g_{\ell})\in\bs{f}$ with $\|g_{\ell}-f_{k}(g_{\ell})\|_{P_{T}}<\delta$.
Then, we have
\begin{align*}
	\Psi(\bs{f})-\Psi(\bs{g})
	&=\int\left[
	\min_{k}\|x-f_{k}\|_{P_{T}}^{2}-\min_{\ell}\|x-g_{\ell}\|_{P_{T}}^{2}
	\right]dP_{X}(x)\\
	&\le\int\left[
	\min_{\ell'}\|x-f_{k}(g_{\ell'})\|_{P_{T}}^{2}-\min_{\ell}\|x-g_{\ell}\|_{P_{T}}^{2}
	\right]dP_{X}(x)\\
	&\le\int\max_{\ell}\left[
	\|x-f_{k}(g_{\ell})\|_{P_{T}}^{2}-\|x-g_{\ell}\|_{P_{T}}^{2}
	\right]dP_{X}(x)\\
	&\le\int\max_{\ell}\left[
	(\|x-g_{\ell}\|_{P_{T}}+\delta)^{2}-\|x-g_{\ell}\|_{P_{T}}^{2}
	\right]dP_{X}(x)
	\\
	&\le\sum_{\ell=1}^{K}\int\left[
	(\|x-g_{\ell}\|_{P_{T}}+\delta)^{2}-\|x-g_{\ell}\|_{P_{T}}^{2}
	\right]dP_{X}(x)\\
        &\le K\delta^2+4KM\delta.
\end{align*}

Taking $\delta$ sufficiently small, the upper bound of the above equations can be made arbitrarily small.
Similarly, for $\Psi(\bs{g})-\Psi(\bs{f})$, interchanging the roles of $\bs{f}$ and $\bs{g}$ yields the same result.
Therefore, the uniform continuity of $\Psi(\cdot)$ on $\mac{F}_{(M)}$ has been proven.

\subsection{Proof of Lemma \ref{Lemm:continuity-of-empirical-loss}}
\label{subsec:proof-lemma-continuity-empirical-loss}

Suppose that we have a sample $\{T_{ij},X_i(T_{ij});i=1,\dots,n,j=1,\dots,N_i\}$.
At this time, let us express $\tilde{\Psi}_n$ using the semi-norm $\|\cdot\|_{N_i}$ defined below.
For each $i=1,\dots,n$, we define the semi-inner product $\lan \cdot, \star \ran_{N_i}$ on $L_2(P_T)$ as
\ba
	\lan f,g\ran_{N_i}
	:=\sum_{j=1}^{N_i}\frac{1}{N_i}f(T_{ij})g(T_{ij})
	\ \ \text{for}\ f,g\in L_2(P_T).
\ea
\mymemo{
	It can be easily verified that this function $\lan\cdot,\cdot\ran$ satisfies the definition of a semi-inner product.
}
Note that, for $f\in L_2(P_T)$, $f$ is not necessarily zero even if $\lan f,f \ran_{N_i}=0$, and thus, $\lan\cdot,\star\ran_{N_i}$ is not an inner product.
This semi-inner product defines the semi-norm $\|\cdot\|_{N_i}:=\lan\cdot,\cdot\ran_{N_i}^{1/2}$.
Using this semi-norm, the empirical loss function can be reexpressed as
\begin{align*}
	\tilde{\Psi}_n(\bs{f})
	=\frac{1}{n}\sum_{i=1}^n\min_{k=1,\dots,K}\|X_i-f_k\|_{N_i}^2,
 \quad \bs{f}=(f_1,\dots,f_K)\in\mac{F}.
\end{align*}

For any $f,g\in G_n(M)$, let $f=\bs{\beta}^{\top}\bs{\phi}$, $g=\bs{\gamma}^{\top}\bs{\phi}$.
Then, 
we have $\|\bs{\beta}-\bs{\gamma}\|_2=\|f-g\|_{P_T}$.
Therefore, for any $\delta>0$, if $\|f-g\|_{P_T}\le\delta$, then using the assumption $\|\phi_\ell\|_{\infty}\le C_\phi$, we have
\begin{align*}
	\|f-g\|_{N_i}^2
	&=\sum_{j=1}^{N_i}\frac{1}{N_i}(f(T_{ij})-g(T_{ij}))^2
	=\sum_{j=1}^{N_i}\frac{1}{N_i}\{(\bs{\beta}-\bs{\gamma})^{\top}\bs{\phi}(T_{ij})\}^2\\
	&\le\sum_{j=1}^{N_i}\frac{1}{N_i}\|\bs{\beta}-\bs{\gamma}\|_2^2\cdot\|\bs{\phi}(T_{ij})\|_2^2
	=\sum_{j=1}^{N_i}\frac{1}{N_i}\delta^2m_n C_\phi^2
	=m_n C_\phi^2\delta^2.
\end{align*}

For any $\delta > 0$, choose $\bs{f},\bs{g}\in\mac{F}_{G(M)}$ such that $d_{H}(\bs{f},\bs{g})<\delta$.
From the definition of $d_H$, for each $g_{\ell}\in\bs{g}$, there exists $f(g_{\ell})\in\bs{f}$ satisfying $\|g_{\ell}-f(g_{\ell})\|_{P_{T}}<\delta$.
In this case, we obtain
\begin{align*}
	\tilde{\Psi}_n(\bs{f})-\tilde{\Psi}_n(\bs{g})
	&=\frac{1}{n}\sum_{i=1}^n\left[
	\min_{k}\|X_i-f_k\|_{N_i}^2-\min_{\ell}\|X_i-g_\ell\|_{N_i}^2
	\right]\\
	&\le\frac{1}{n}\sum_{i=1}^n\left[
	\min_{\ell'}\|X_i-f(g_{\ell'})\|_{N_i}^2-\min_{\ell}\|X_i-g_\ell\|_{N_i}^2
	\right]\\
	&\le\frac{1}{n}\sum_{i=1}^n\max_{\ell}\left[
	\|X_i-f(g_{\ell})\|_{N_i}^2-\|X_i-g_\ell\|_{N_i}^2
	\right]\\
	&\le\frac{1}{n}\sum_{i=1}^n\max_{\ell}\left[
	(\|X_i-g_\ell\|_{N_i}+\sqrt{m_n}C_\phi\delta)^2-\|X_i-g_\ell\|_{N_i}^2
	\right]
	\\
	&\le\frac{1}{n}\sum_{i=1}^n\sum_{\ell=1}^K\left[
	(\|X_i-g_\ell\|_{N_i}+\sqrt{m_n}C_\phi\delta)^2-\|X_i-g_\ell\|_{N_i}^2
	\right].
\end{align*}


Thus, if $\delta$ is taken sufficiently small, the upper bound of $\tilde{\Psi}_n(\bs{f})-\tilde{\Psi}_n(\bs{g})$ can be made arbitrarily small.
Similarly, for $\tilde{\Psi}_n(\bs{g})-\tilde{\Psi}_n(\bs{f})$, interchanging the roles of $\bs{f}$ and $\bs{g}$ yields the same result.
This demonstrates the uniform continuity of $\tilde{\Psi}_n(\cdot)$ on $\mac{F}_{G_n(M)}$.

\subsection{Proof of Lemma \ref{lemm:lim-tilde-Psi-n}}

We have
\begin{align*}
	\tilde{\Psi}_n(\bs{f}_{G_n}^*)
	&=\{\tilde{\Psi}_n(\bs{f}_{G_n}^*)-\Psi_n(\bs{f}_{G_n}^*)\}
	+\{\Psi_n(\bs{f}_{G_n}^*)-\Psi(\bs{f}_{G_n}^*)\}
	+\{\Psi(\bs{f}_{G_n}^*)-\Psi(\bs{f}^*)\}
	+\Psi(\bs{f}^*).
\end{align*}

Here, we note that for $f \in G_n$,
\[
\|x - f\|_{P_T}^2 
=
\|x - P_{G_n}x + P_{G_n}x -  f\|_{P_T}^2 
=
\|x - P_{G_n}x\|^2 + \|P_{G_n}x -  f\|_{P_T}^2,
\]
where $P_{G_n}$ is the projection operator onto the subspace $G_n$.
Thus, 
\[
\bs{f}_{G_n}^* \in 
\mathop{\arg\min}_{\bs{f} \in \mac{F}_{G_n}}
\mathbb{E}\left[ \min_{1\le k\le K}\|P_{G_n}X -  f_k\|_{P_T}^2 \right].
\]
From the so-called centroid condition (see, for example, Section 6.2 of \citet{gersho2012vector}), $\bs{f}_{G_n}^{*}\in\mac{F}_{G_n(M)}$
since $\|P_{G_n}X\|_{P_T}\le \|X\|_{P_T} \le M$ a.s.
Then, from Lemma \ref{Lemm:diff_A1} and Lemma \ref{Lemm:diff_A2}, as $n\ra\infty$, we have
\begin{align*}
	\tilde{\Psi}_n(\bs{f}_{G_n}^*)-\Psi_n(\bs{f}_{G_n}^*)\ra0\ \ 
	\text{a.s.} \;\text{ and }\;
	\Psi_n(\bs{f}_{G_n}^*)-\Psi(\bs{f}_{G_n}^*)\ra0\ \ 
	\text{a.s.},
\end{align*}
respectively.
Therefore, from Lemma \ref{Lemm:diff_A3}, we obtain
$\tilde{\Psi}_n(\bs{f}_{G_n}^*)\ra\Psi(\bs{f}^*)$ a.s. 
as $n\ra\infty$.

\subsection{Proof of Lemma \ref{Lemm:inequality-eps}}

Similar to $Q(\mac{F}_{G_n}\backslash B^o(Q,\epsilon))$, we define
\begin{align*}
    Q(\mac{F}\backslash B^o(Q,\epsilon))
	 & :=\left\{\bs{f}\in\mac{F}\;\big|\;\inf_{\bs{f}'\in\mac{F}\backslash B^o(Q,\epsilon)}\Psi(\bs{f}')=\Psi(\bs{f})
  \right\}.
\end{align*}
According to Lemma 4.4 of \citet{levrard2015nonasymptotic}, for any $n$, elements achieving the two lower bounds denoted in the definitions of $Q(\mac{F}\backslash B^o(Q,\epsilon))$ and $Q(\mac{F}_{G_n}\backslash B^o(Q,\epsilon))$ exist in $\mac{F}_{(M+\epsilon)}\backslash B^o(Q,\epsilon)$ and $\mac{F}_{G_n(M+\epsilon)}\backslash B^o(Q,\epsilon)$, respectively.
Since $\mac{F}_{G_n}\subset\mac{F}$, we have 
$
	\lim_{n\ra\infty}\Psi(\bs{f}_{G_n,\epsilon}^*)\ge\Psi(\bs{f}_{\epsilon}^*)
$
for any $\bs{f}_{\epsilon}^*\in Q(\mac{F}\backslash B^o(Q,\epsilon))$.
Therefore, we obtain
$
    \lim_{n\ra\infty}\Psi(\bs{f}_{G_n,\epsilon}^*)
    \ge\Psi(\bs{f}_{\epsilon}^*)
    >\Psi(\bs{f}^*)
$, which demonstrates the lemma.

\clearpage

\section{Details on experiments}
\label{SI-sec:details-experiments}



\subsection{Cluster centers estimated by FKM with Fourier basis functions in the artificial experiment}
\label{SI-subsec:cluster-centers-Fourier}

\begin{figure}[!h]
	\begin{center}
		\renewcommand{\arraystretch}{1}
		\vspace{0.5cm}
		\begin{tabular}{cccc}
			$(n,N_{tp})=(50,3)$ & $(n,N_{tp})=(100,3)$ & $(n,N_{tp})=(200,3)$ & $(n,N_{tp})=(1000,3)$\\
			\includegraphics[width=4cm]{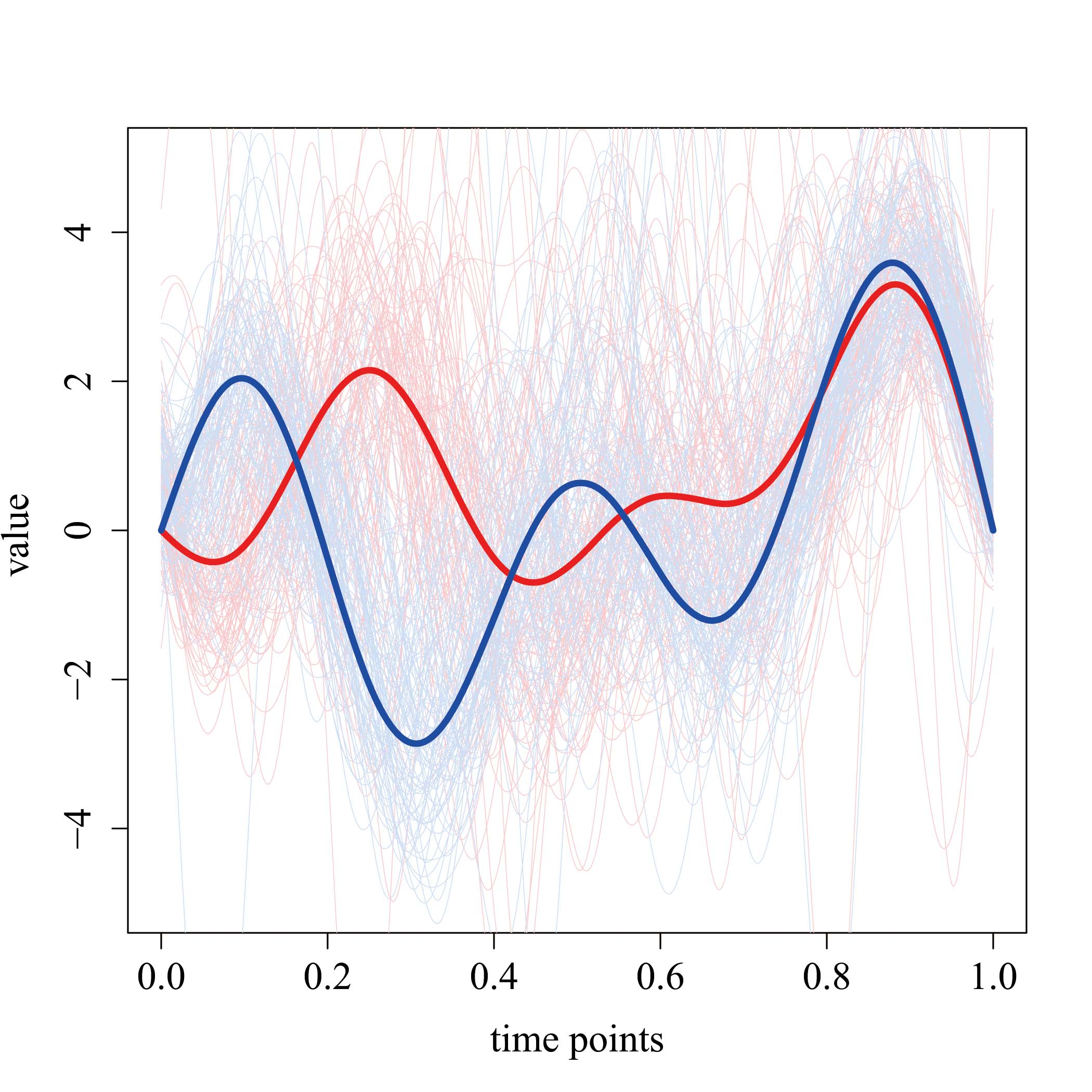} & 	
			\includegraphics[width=4cm]{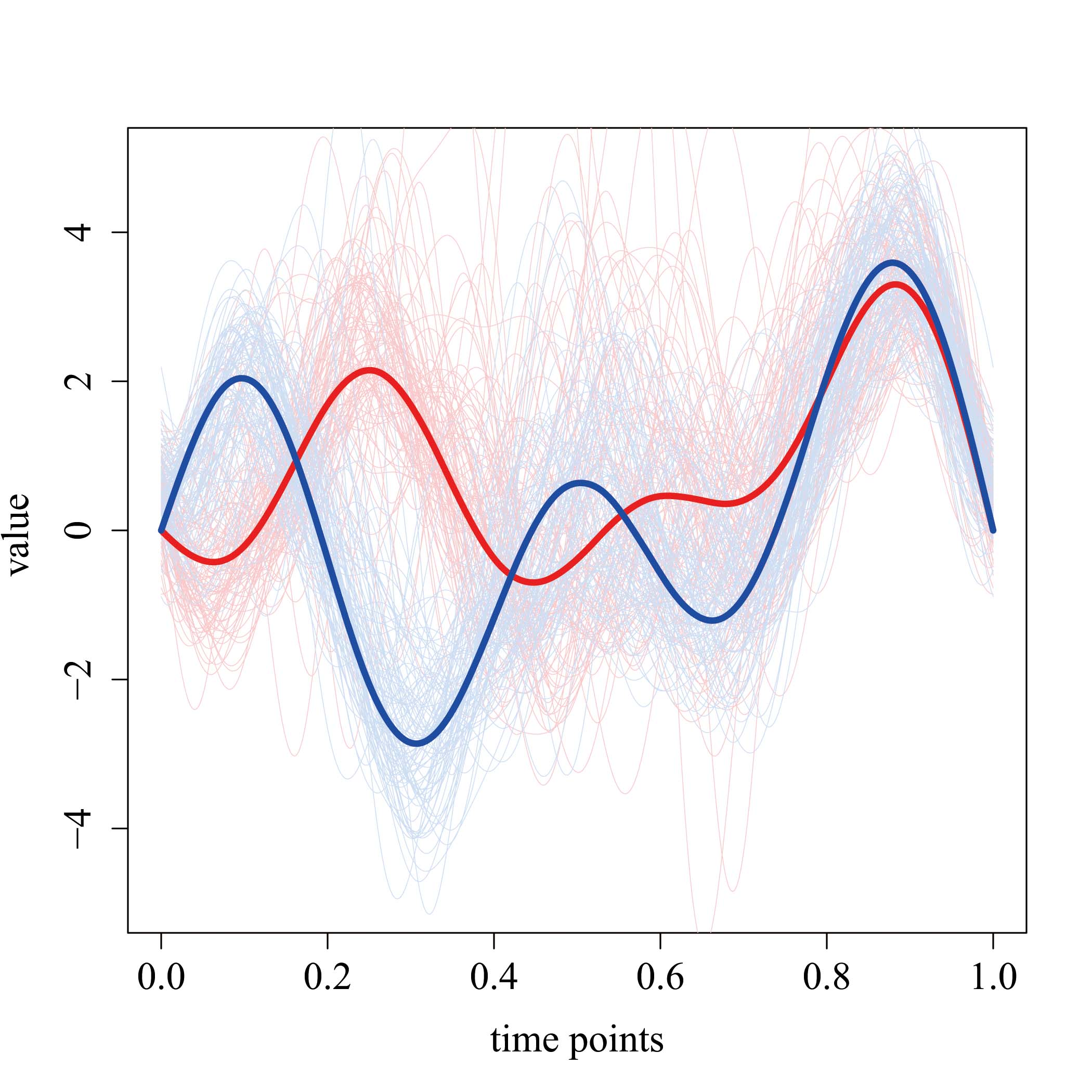} & 	
			\includegraphics[width=4cm]{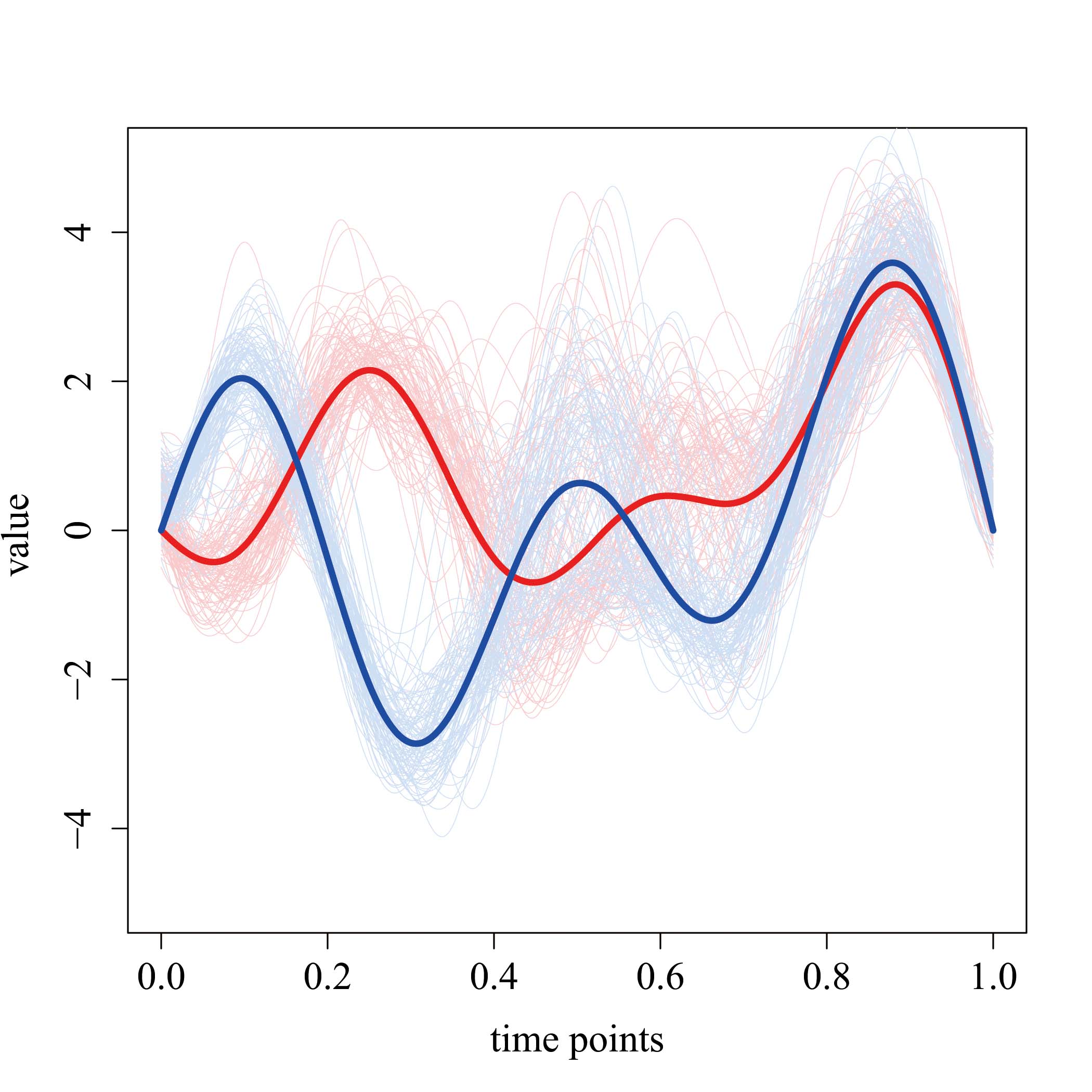} & 	
			\includegraphics[width=4cm]{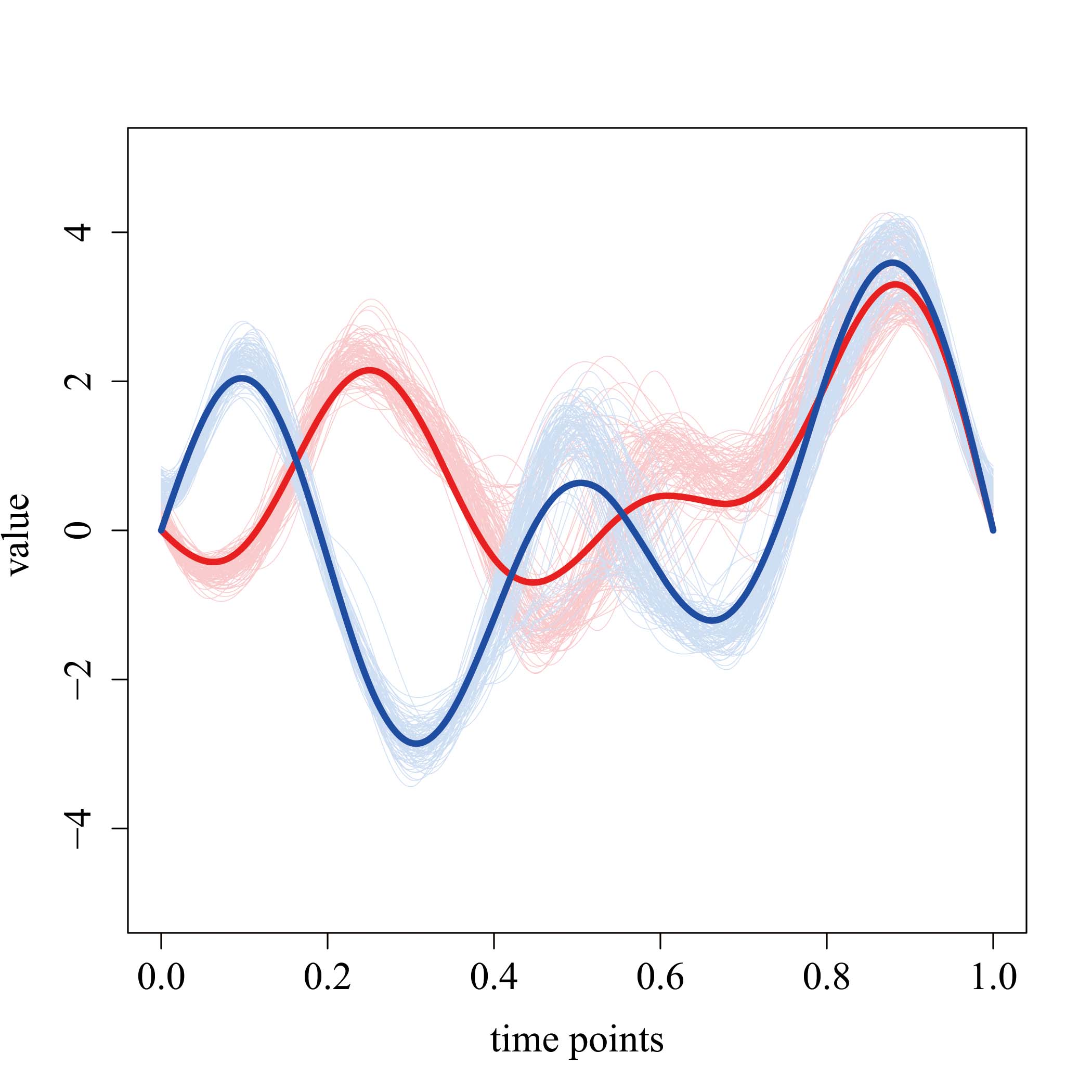}\\\\
			$(n,N_{tp})=(50,5)$ & $(n,N_{tp})=(100,5)$ & $(n,N_{tp})=(200,5)$ & $(n,N_{tp})=(1000,5)$\\
			\includegraphics[width=4cm]{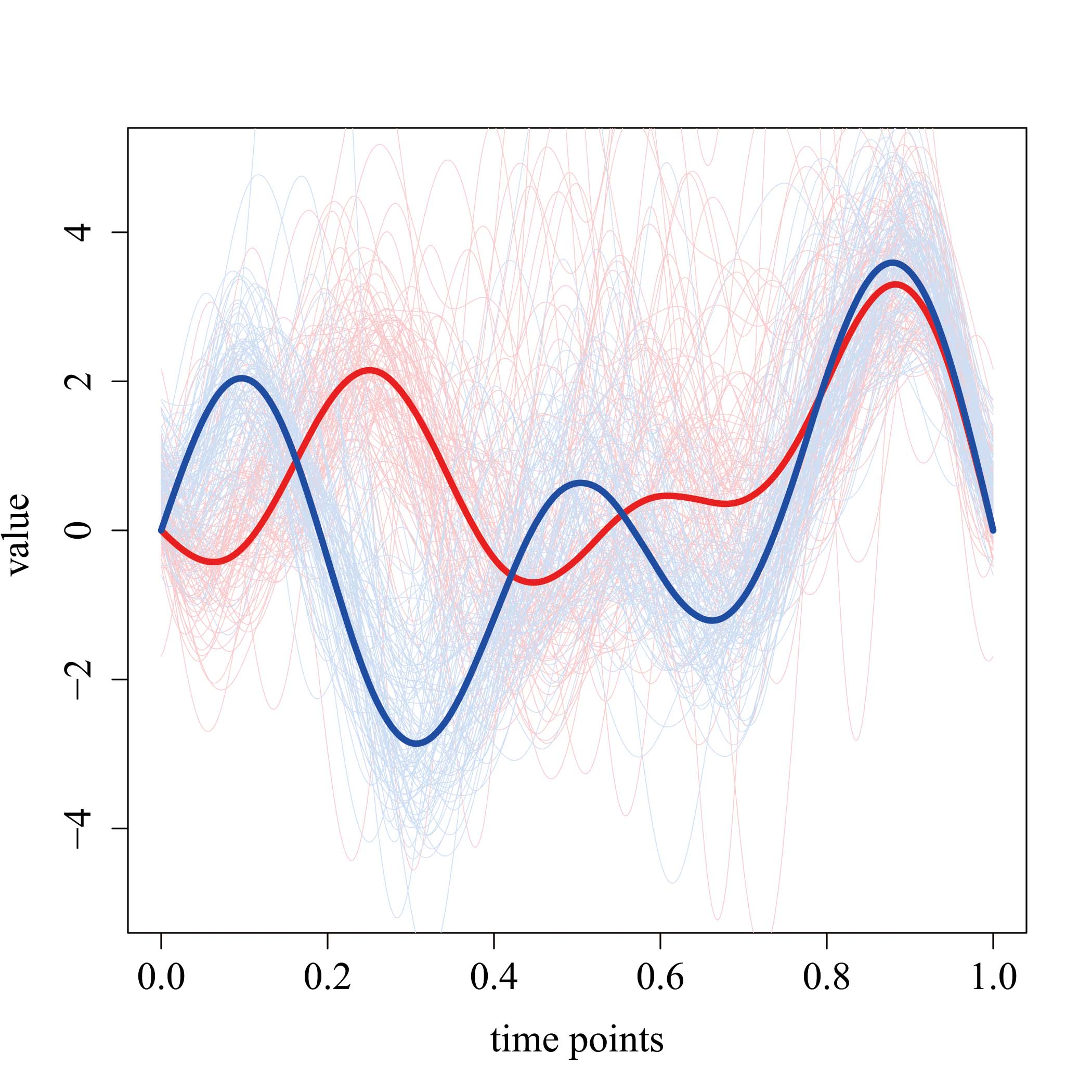} & 	
			\includegraphics[width=4cm]{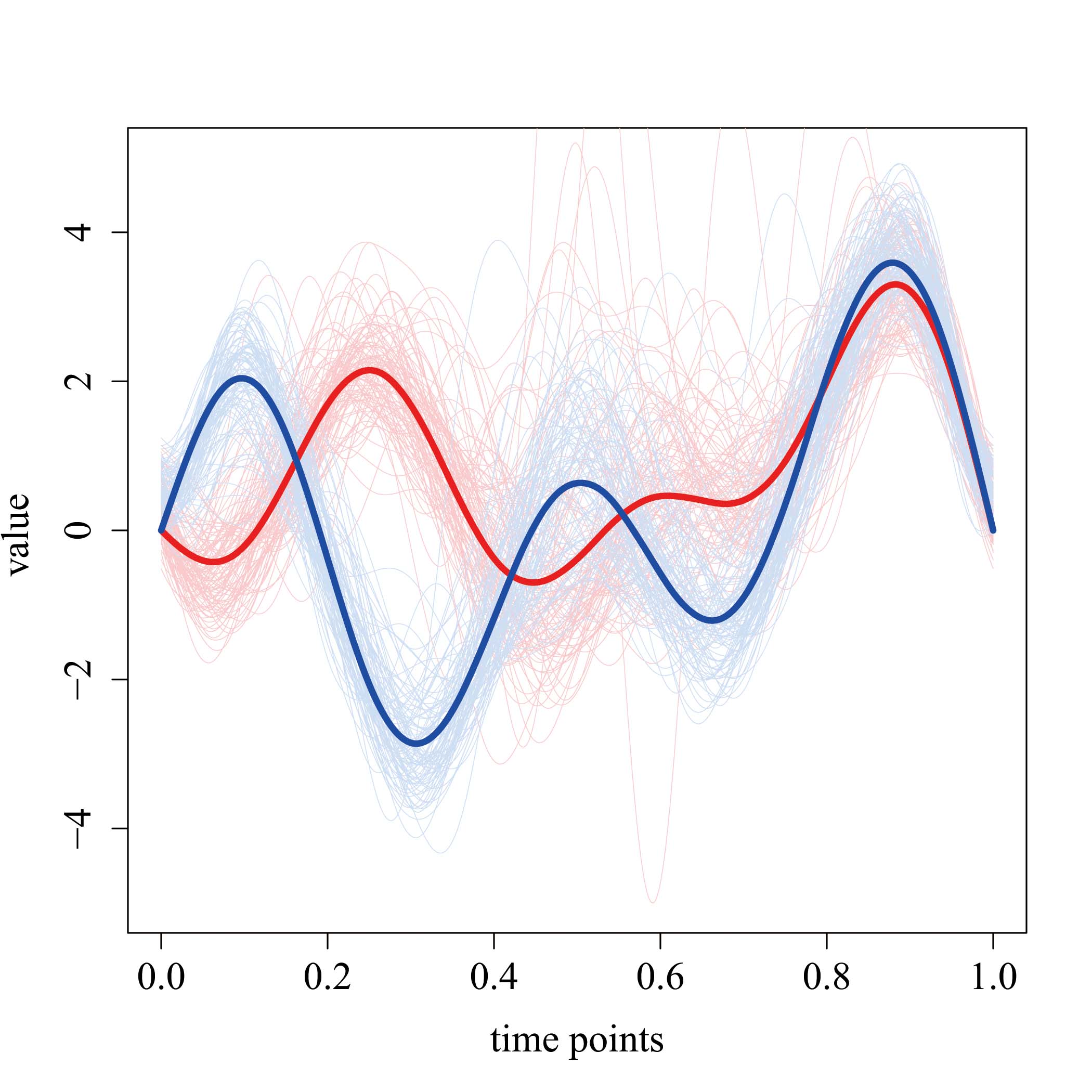} & 	
			\includegraphics[width=4cm]{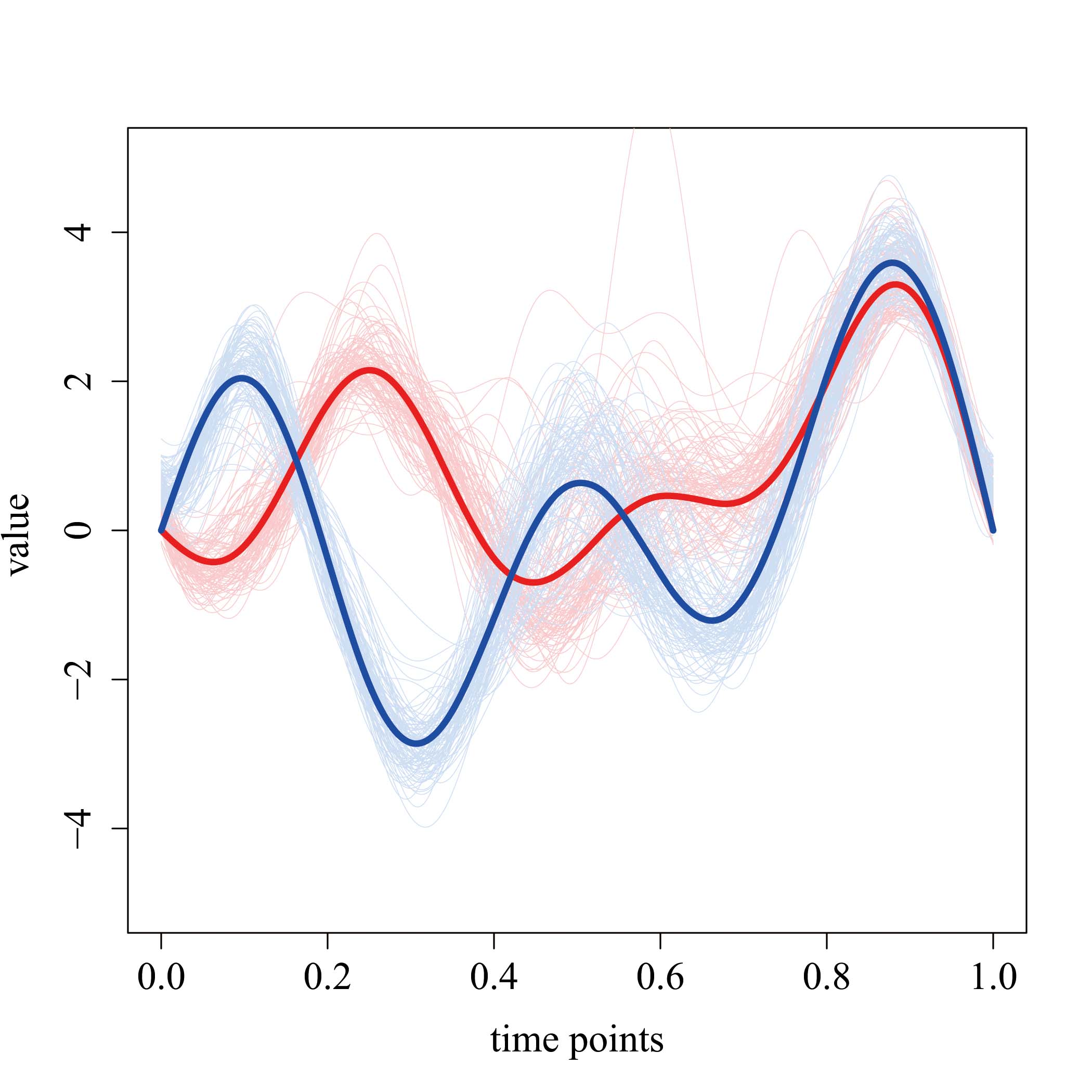} & 	
			\includegraphics[width=4cm]{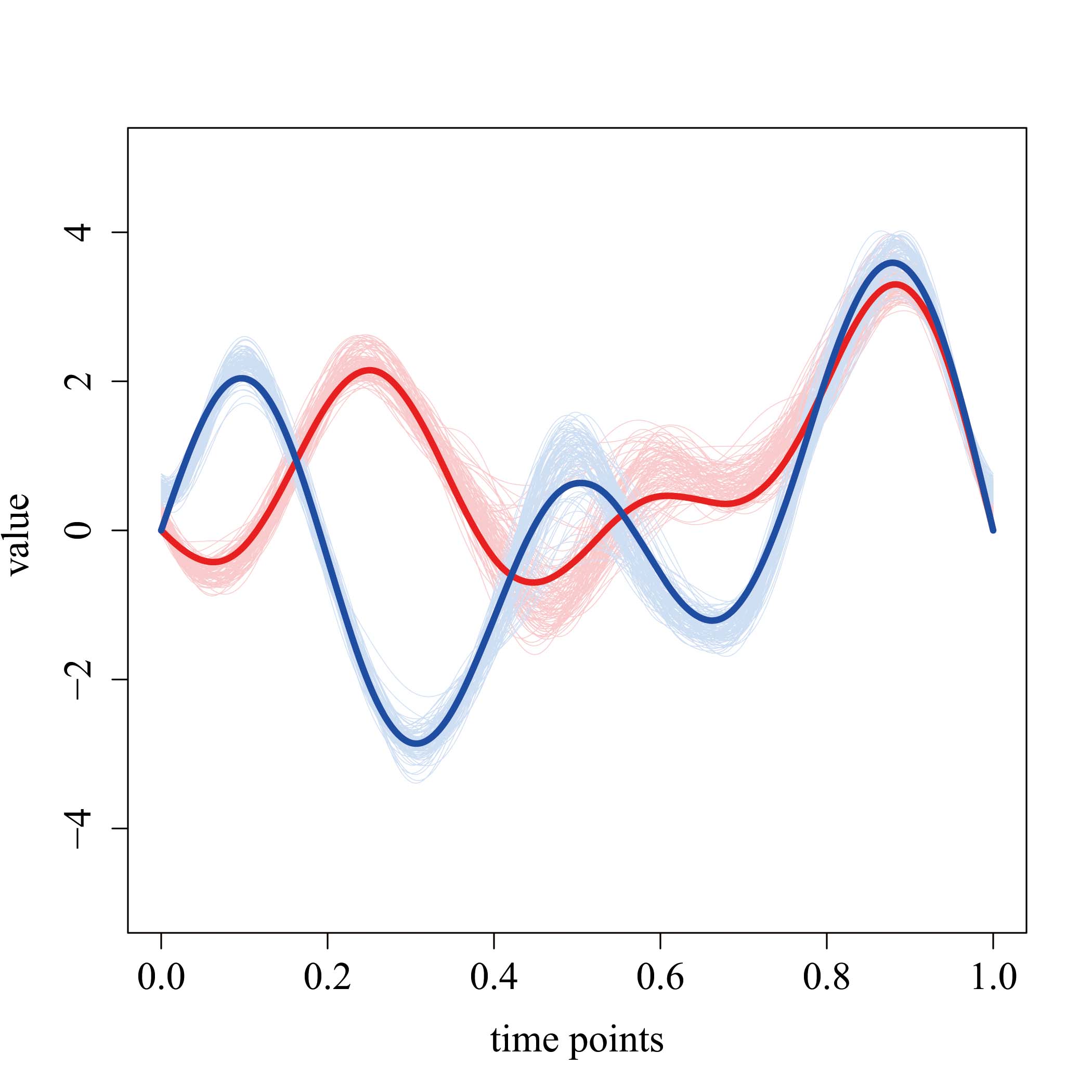}\\\\
			$(n,N_{tp})=(50,10)$ & $(n,N_{tp})=(100,10)$ & $(n,N_{tp})=(200,10)$ & $(n,N_{tp})=(1000,10)$\\
			\includegraphics[width=4cm]{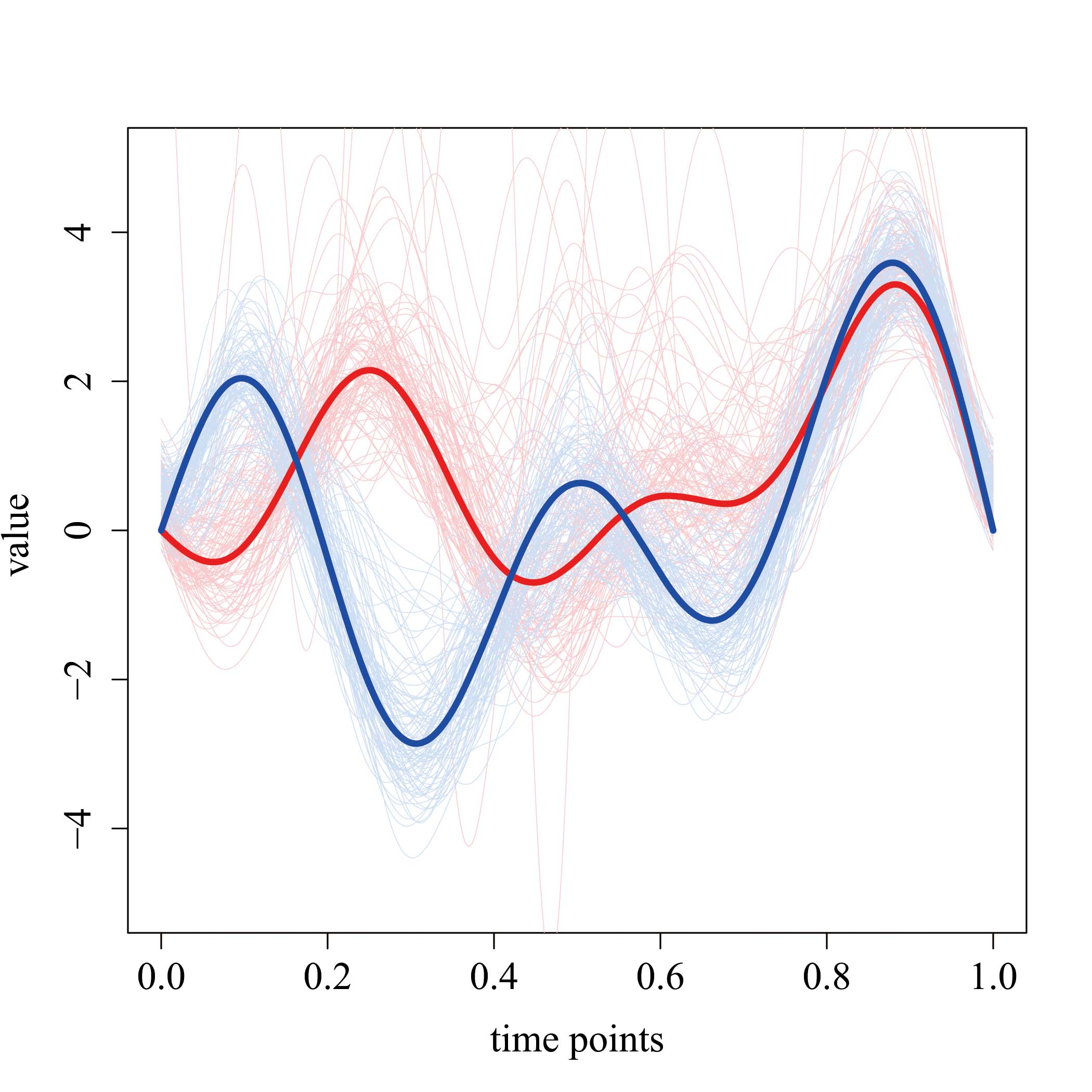} & 	
			\includegraphics[width=4cm]{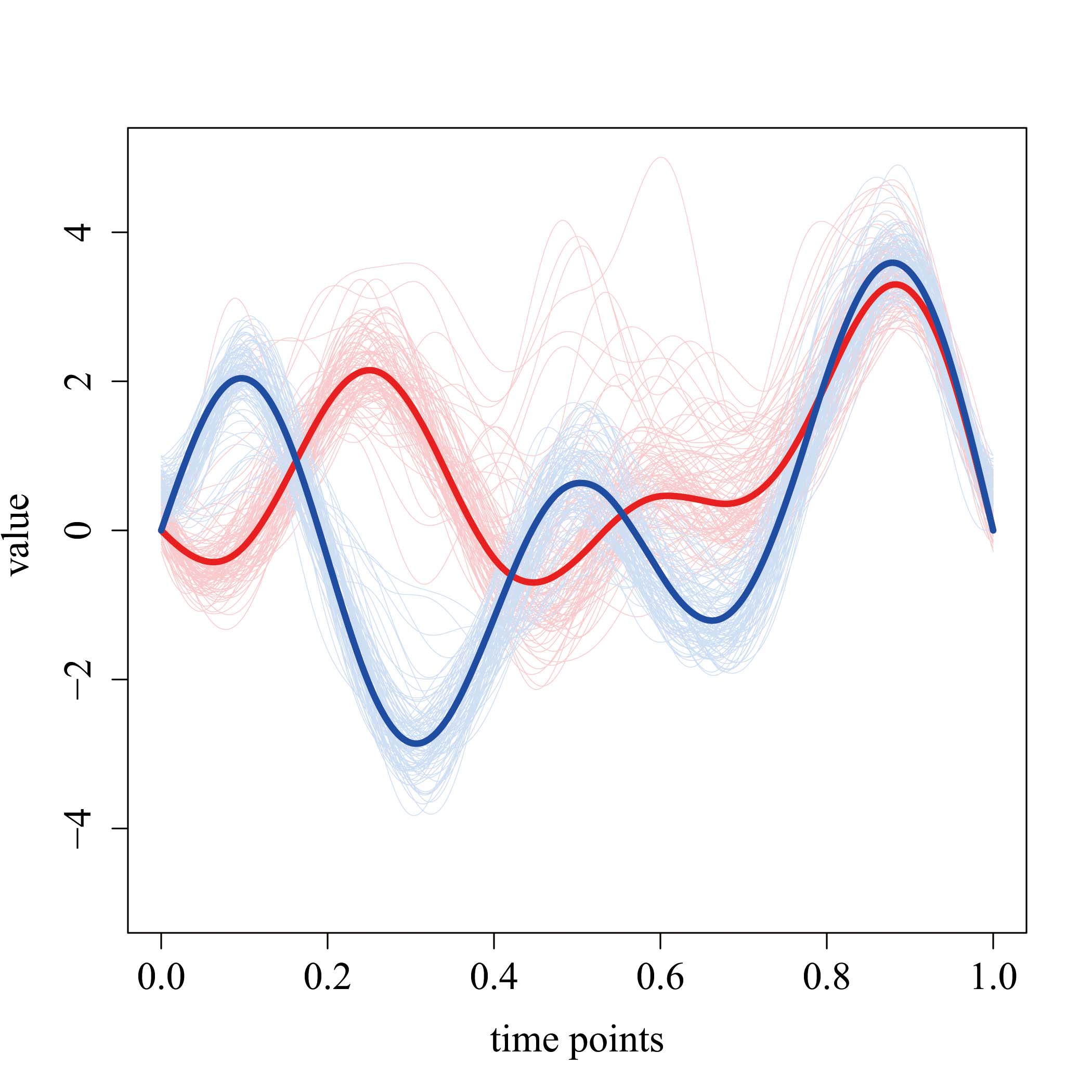} & 	
			\includegraphics[width=4cm]{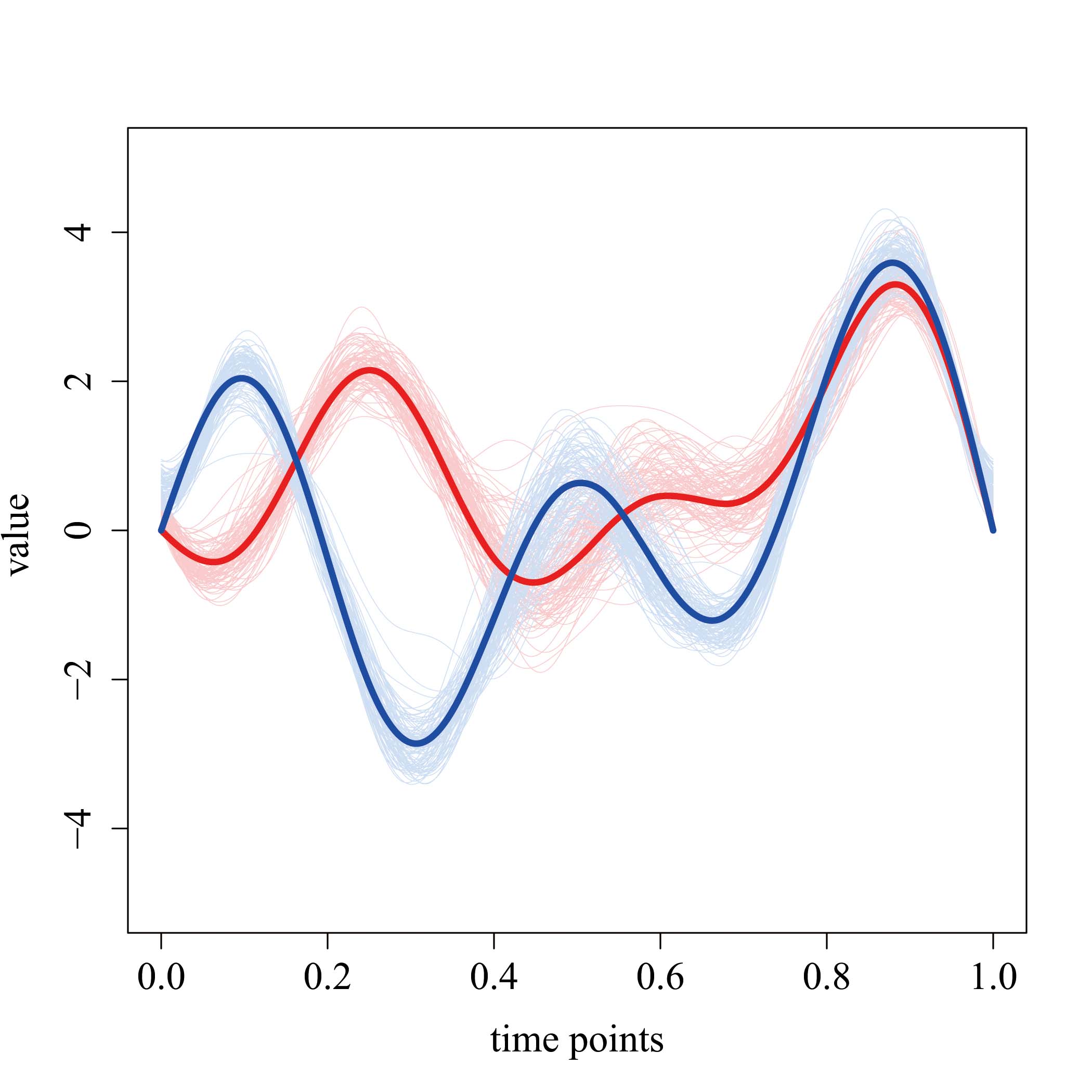} & \includegraphics[width=4cm]{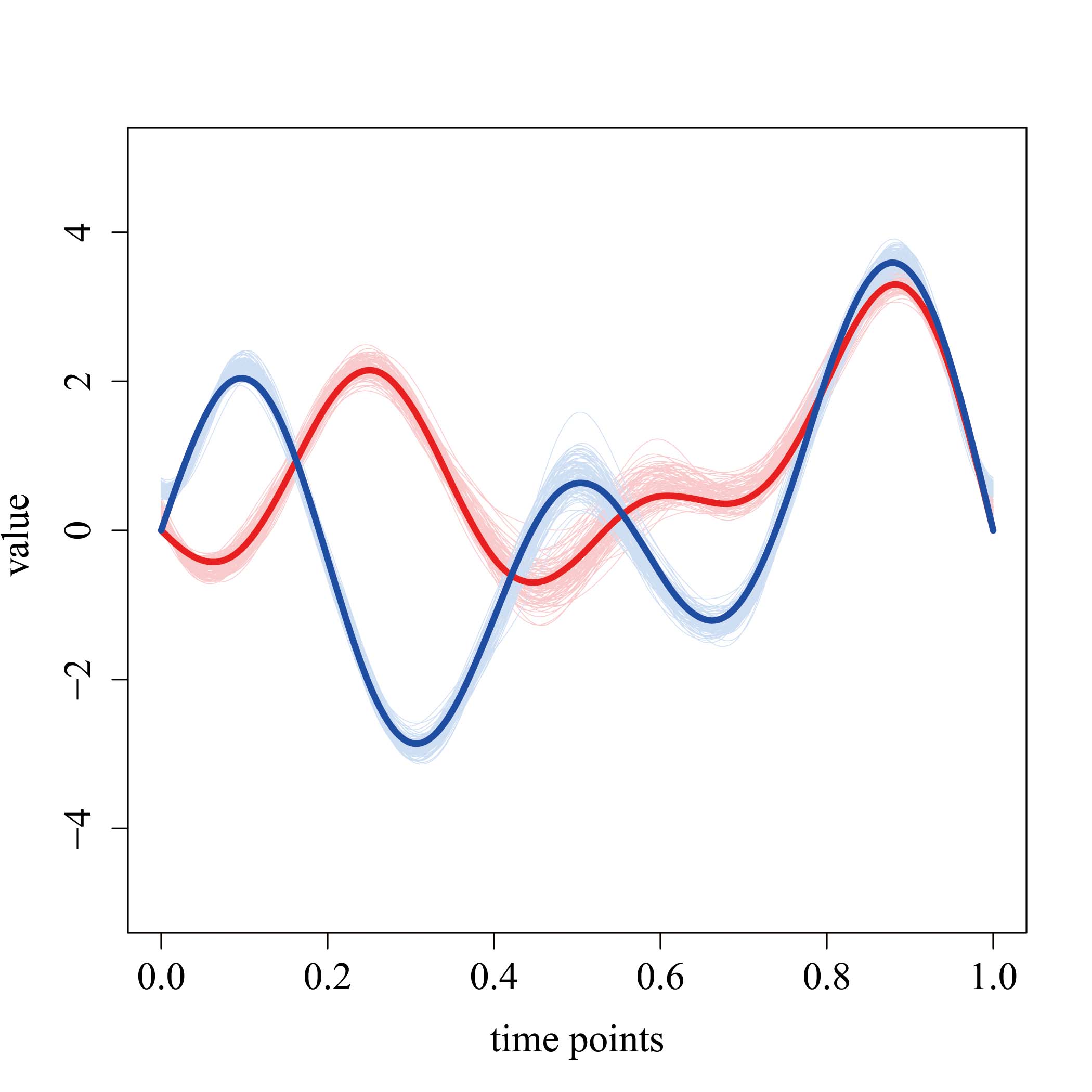}
		\end{tabular}
		\caption{
        Estimated cluster centers (light lines) by FKM-f across 100 replications and optimal cluster centers (bold lines) at the error variance $\sigma=0.1$; 
        panels representing sample sizes $n=50$, 100, 200, and 1000 (from left to right), and the expected number of time points $N_{tp}=3$, 5, and 10 (from top to bottom); clusters in red and blue.
        }
		\label{SI-fig:simulation_convergence_sig=0.1}
	\end{center}
\end{figure}

\clearpage

\subsection{Cluster centers estimated by FKM with B-spline basis functions in the artificial experiment}
\label{SI-subsec:cluster-centers-Bspline}

\begin{figure}[!h]
	\begin{center}
		\renewcommand{\arraystretch}{1}
		\vspace{0.5cm}
		\begin{tabular}{cccc}
			$(n,N_{tp})=(50,3)$ & $(n,N_{tp})=(100,3)$ & $(n,N_{tp})=(200,3)$ & $(n,N_{tp})=(1000,3)$\\
			\includegraphics[width=4cm]
			{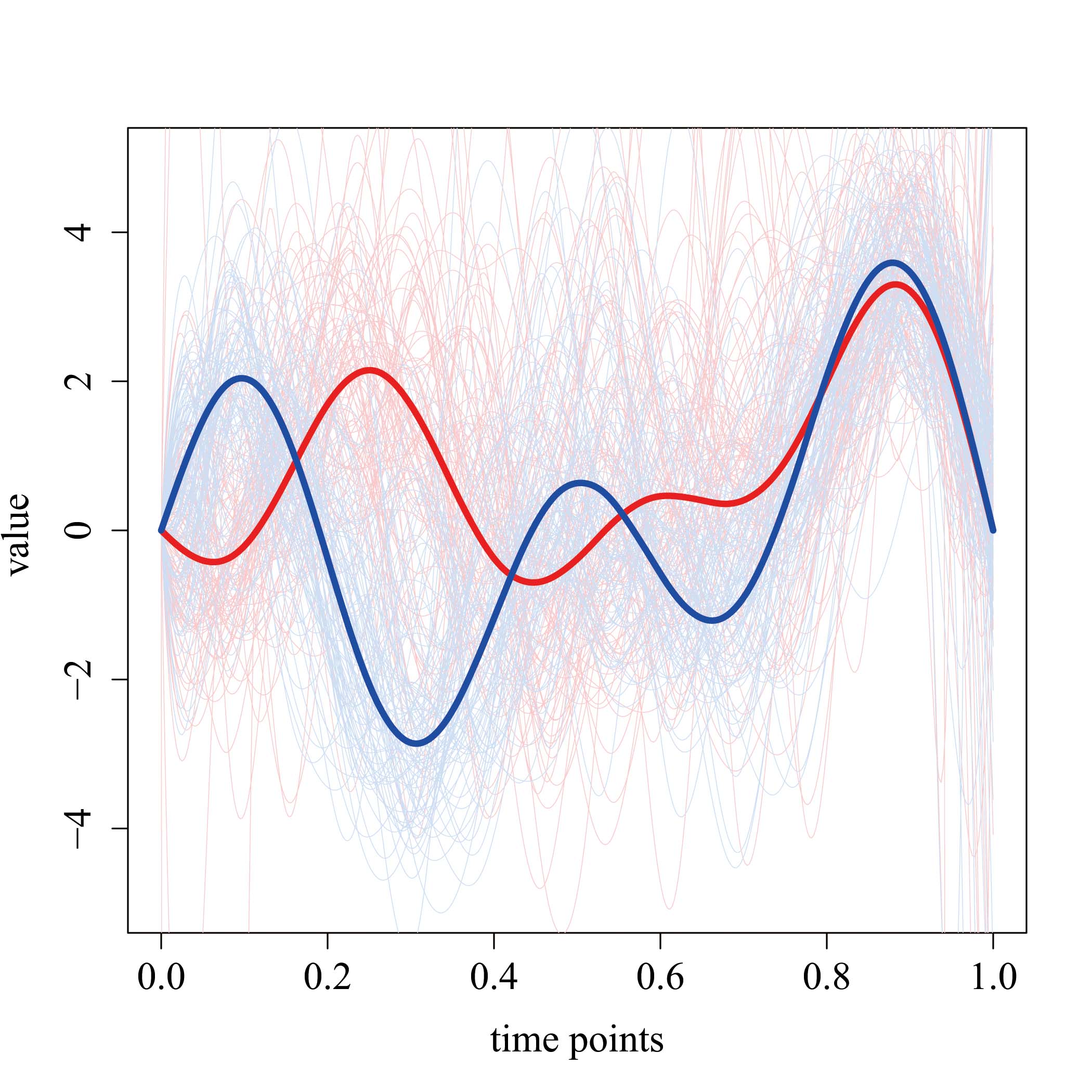} & 	
			\includegraphics[width=4cm]
			{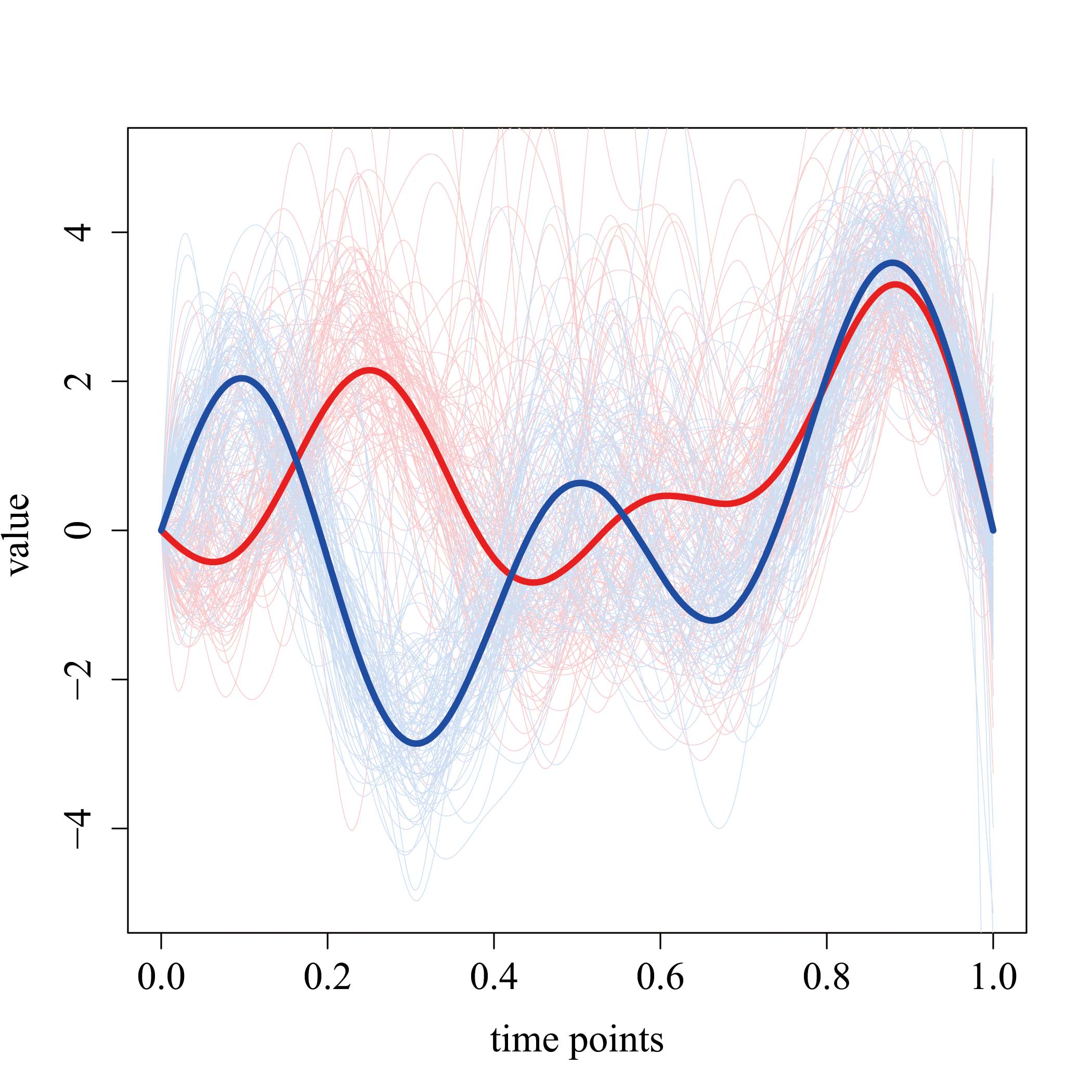} & 	
			\includegraphics[width=4cm]
			{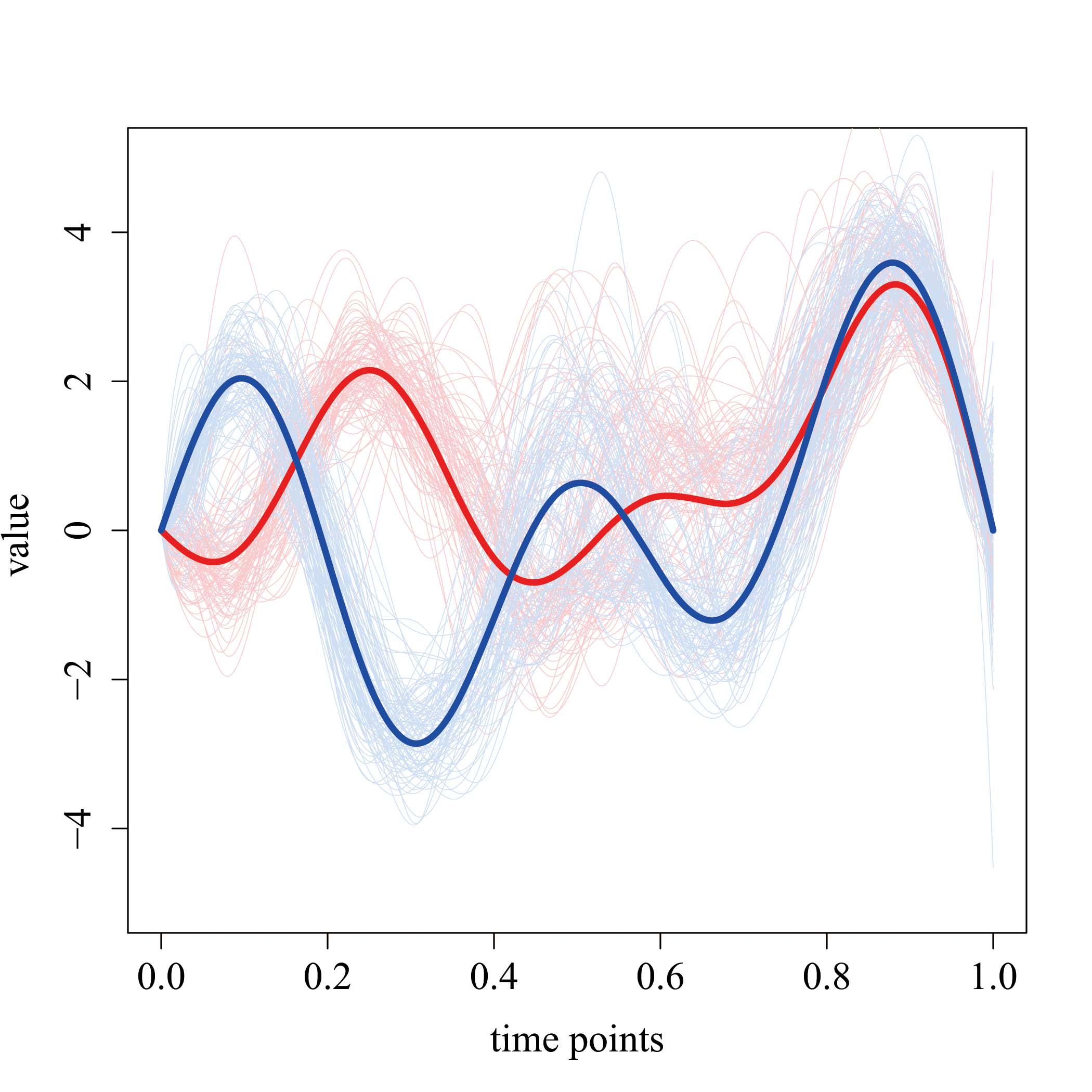} & 	
			\includegraphics[width=4cm]
			{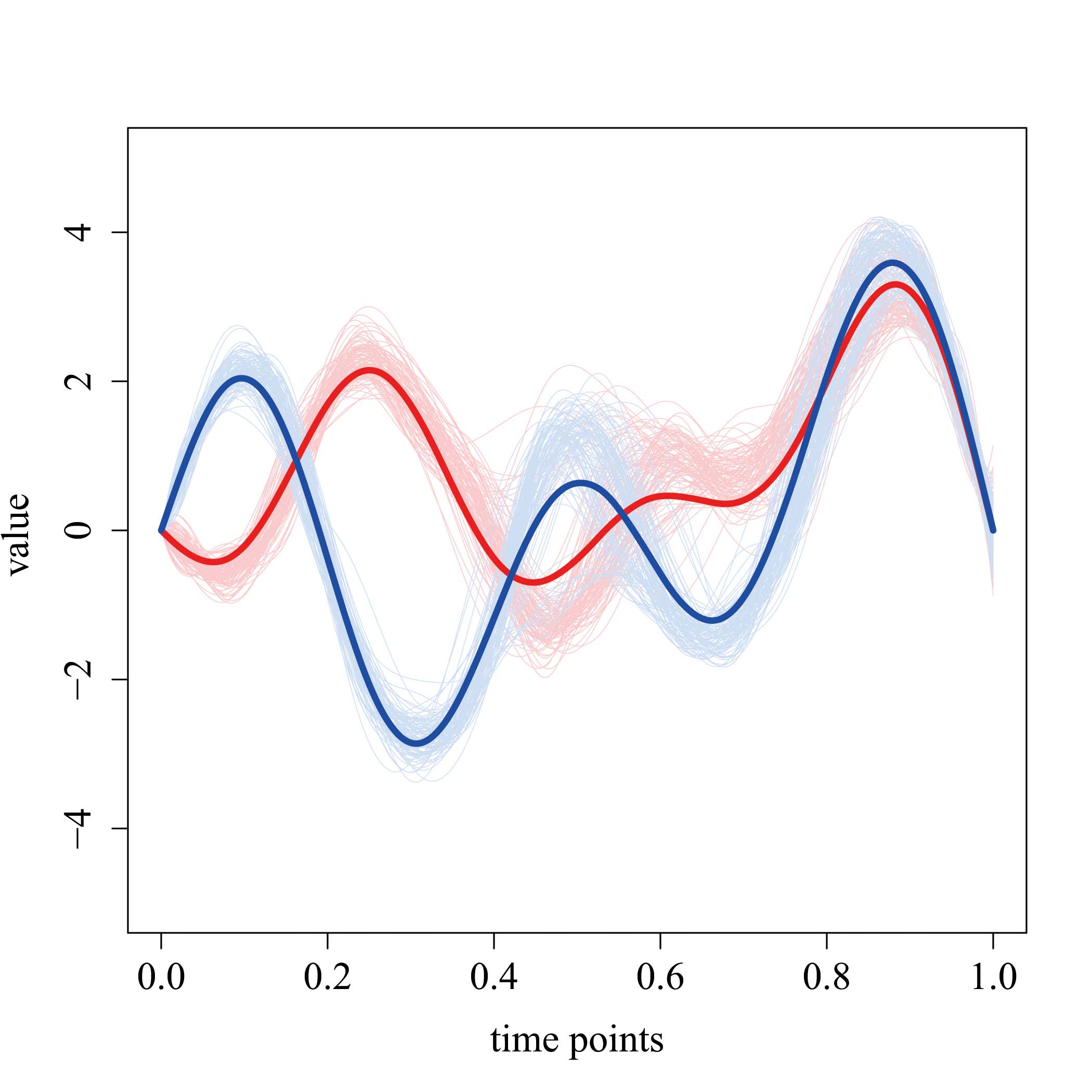}\\\\
			$(n,N_{tp})=(50,5)$ & $(n,N_{tp})=(100,5)$ & $(n,N_{tp})=(200,5)$ & $(n,N_{tp})=(1000,5)$\\
			\includegraphics[width=4cm]
			{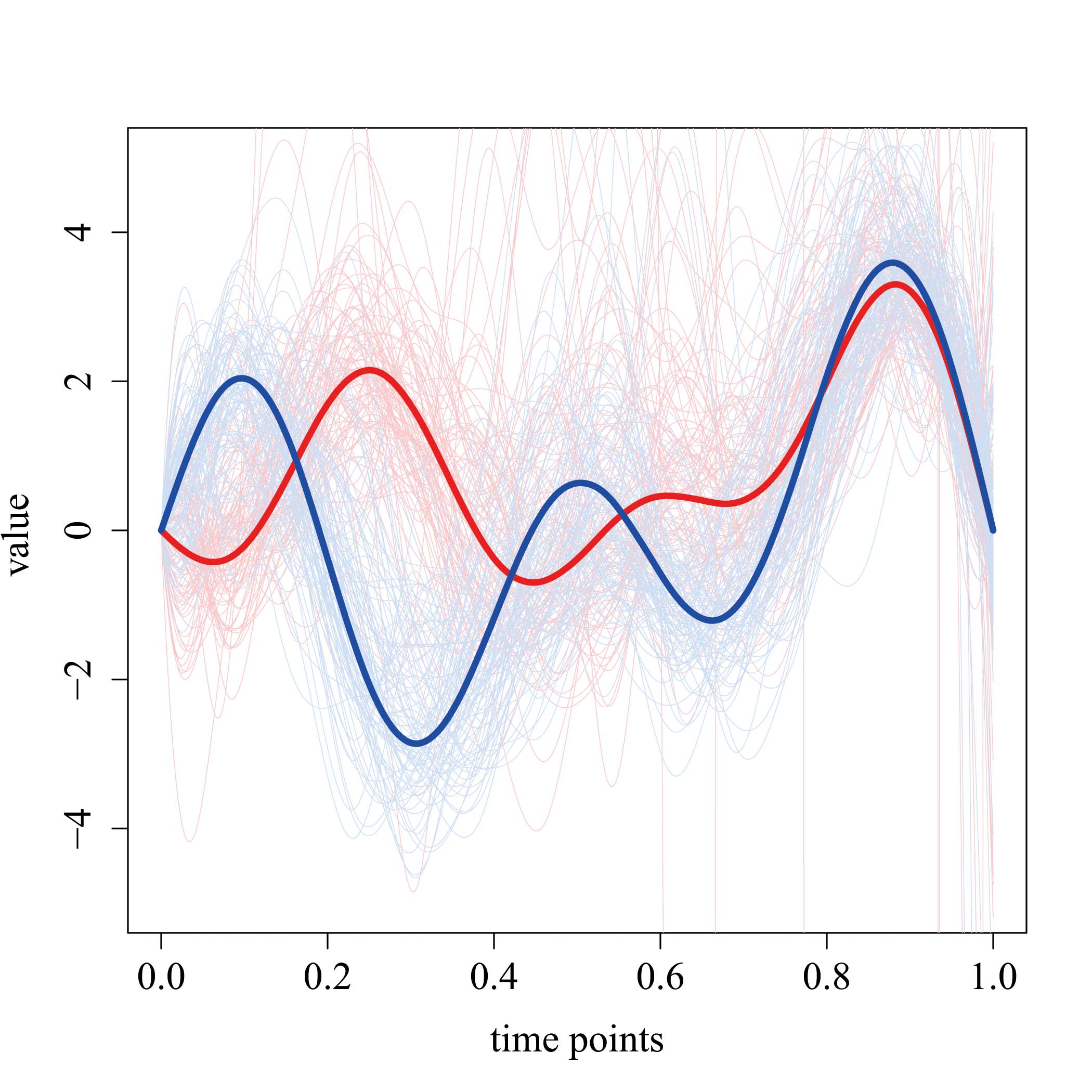} & 	
			\includegraphics[width=4cm]
			{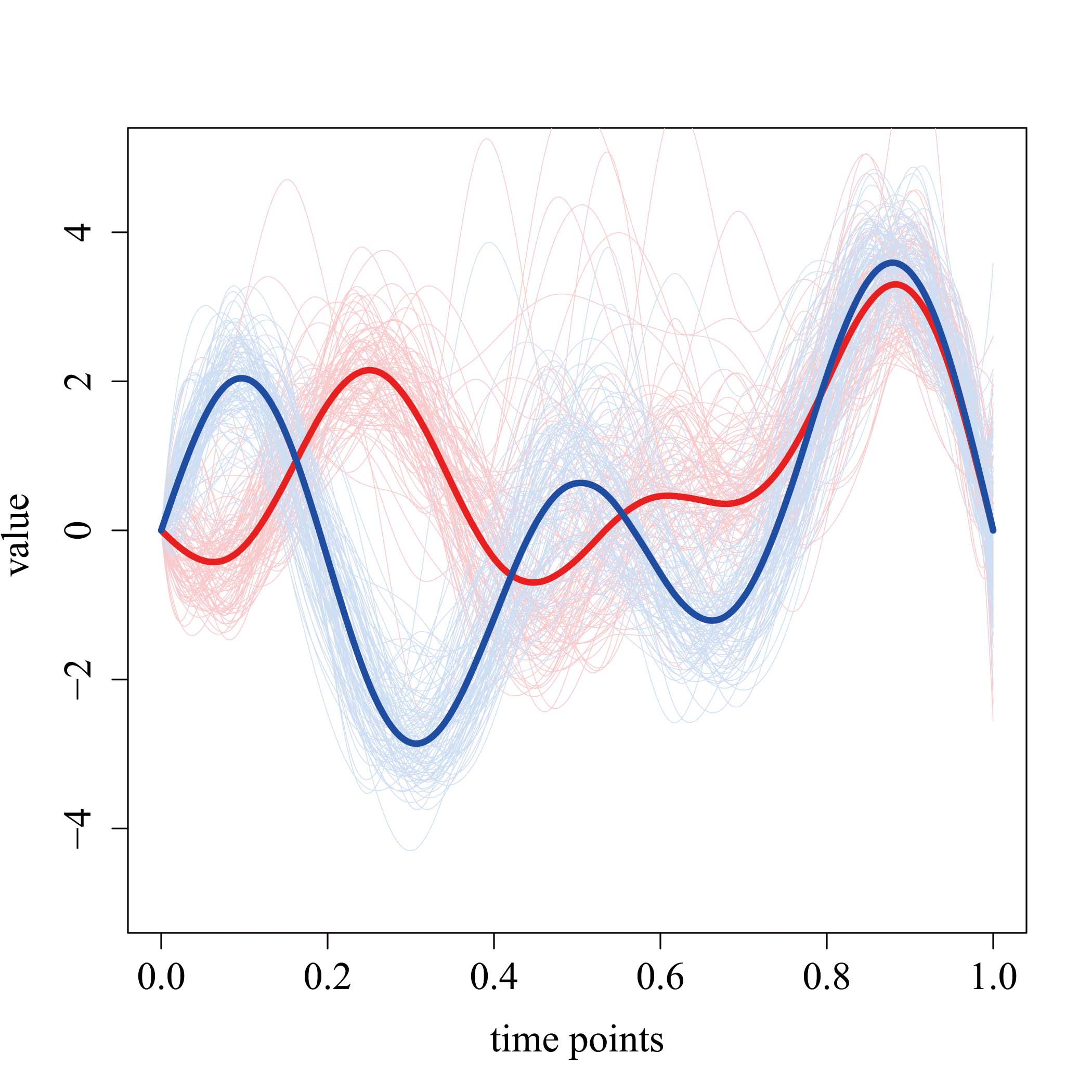} & 	
			\includegraphics[width=4cm]
			{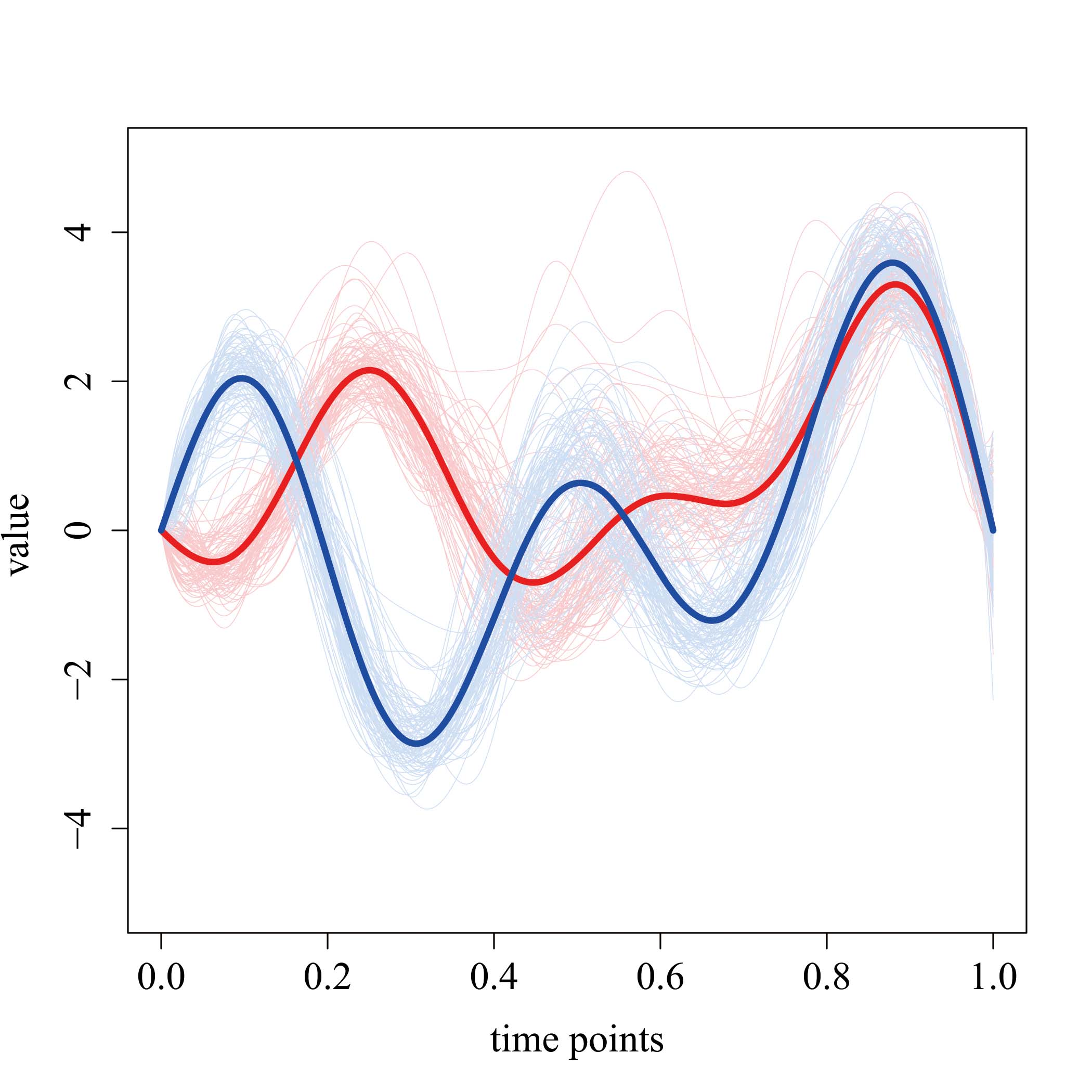} & 	
			\includegraphics[width=4cm]
			{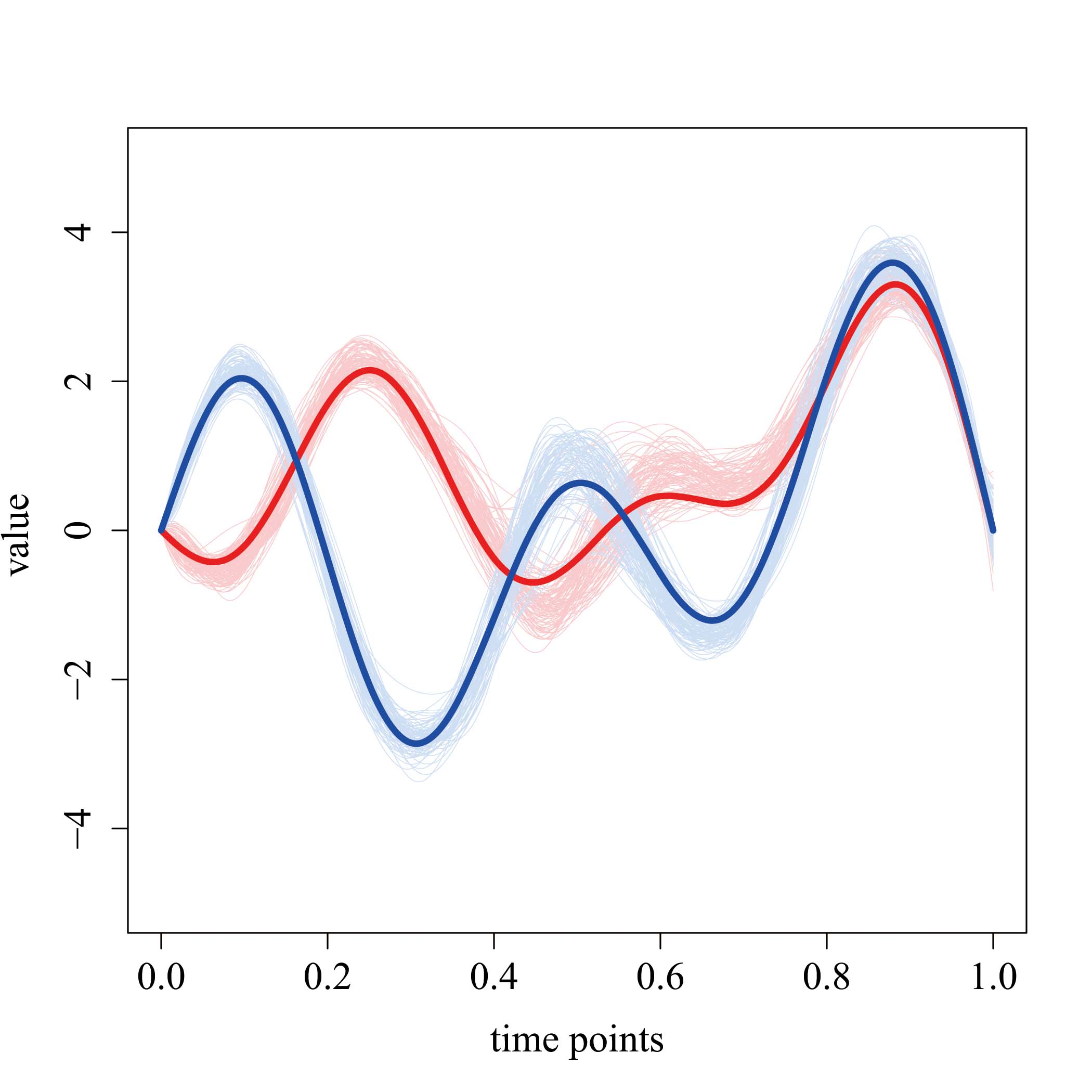}\\\\
			$(n,N_{tp})=(50,10)$ & $(n,N_{tp})=(100,10)$ & $(n,N_{tp})=(200,10)$ & $(n,N_{tp})=(1000,10)$\\
			\includegraphics[width=4cm]
			{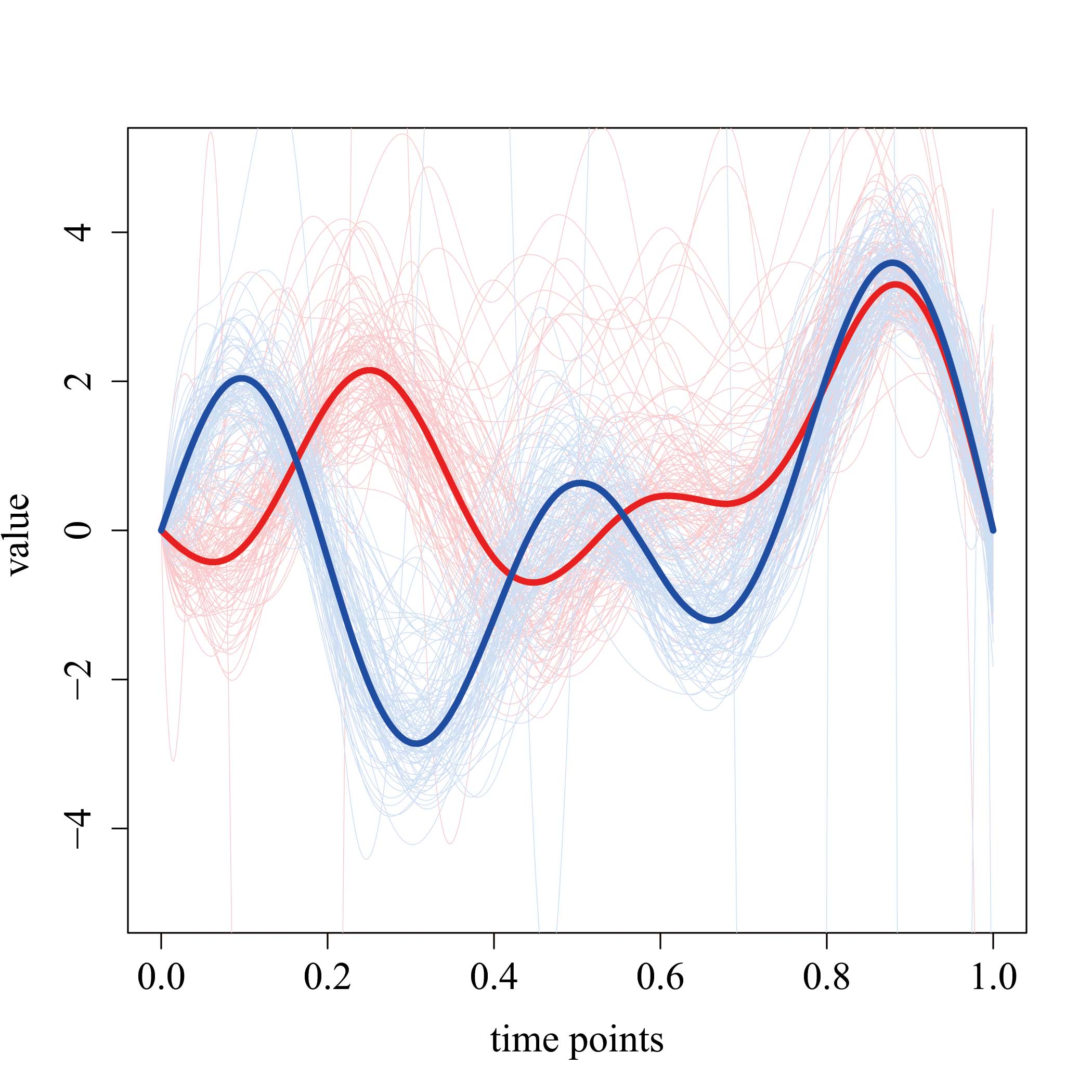} & 	
			\includegraphics[width=4cm]
			{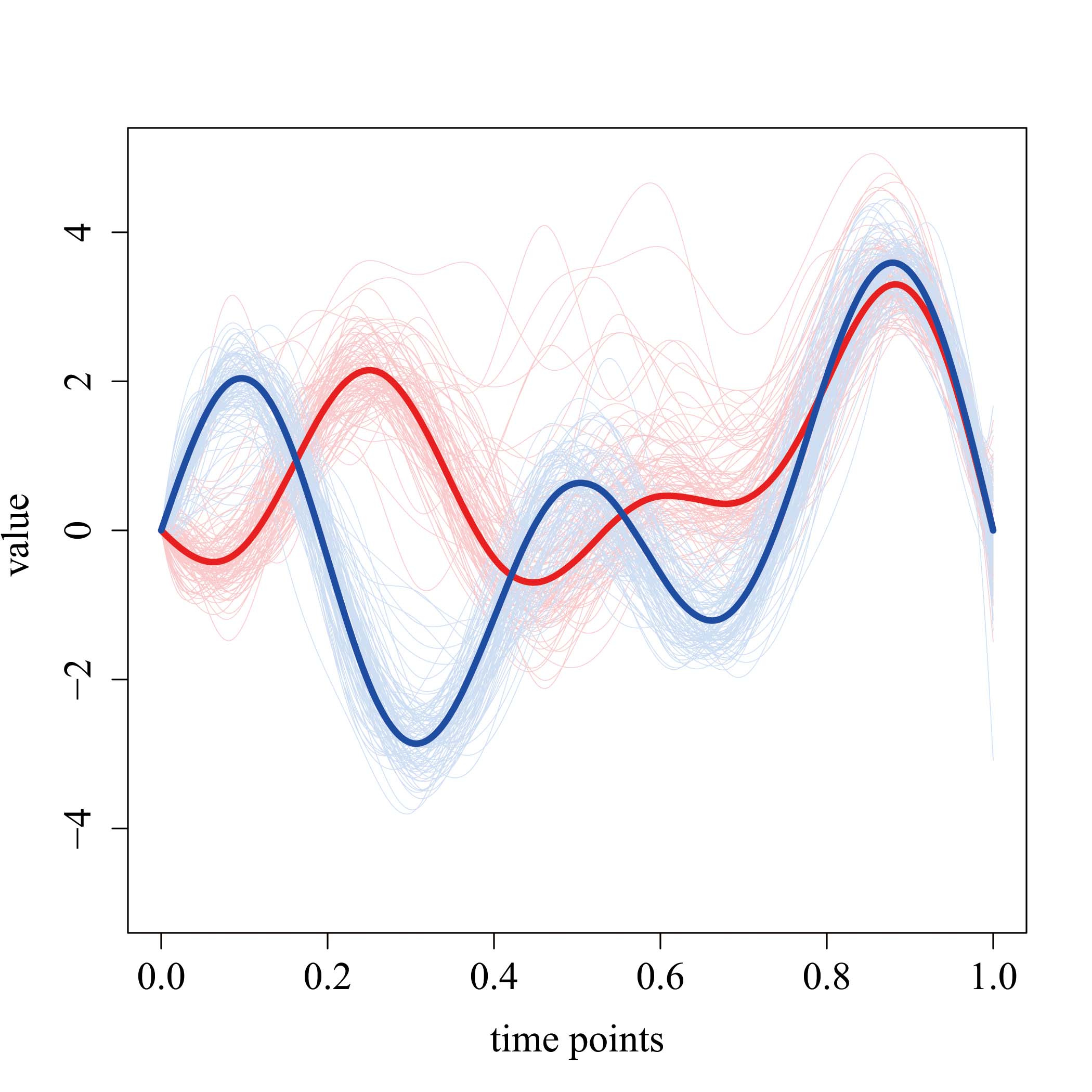} & 	
			\includegraphics[width=4cm]
			{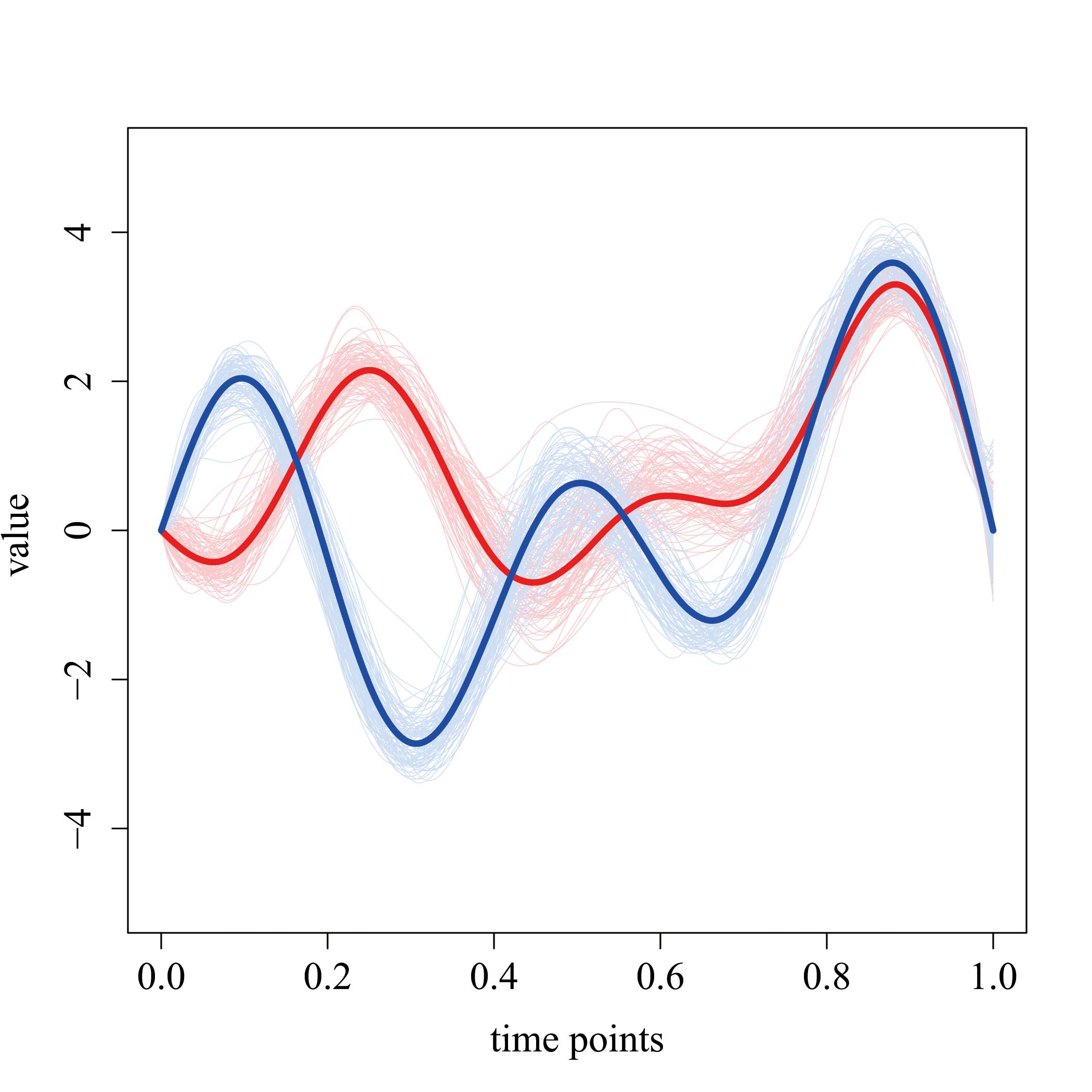} & 	
			\includegraphics[width=4cm]
			{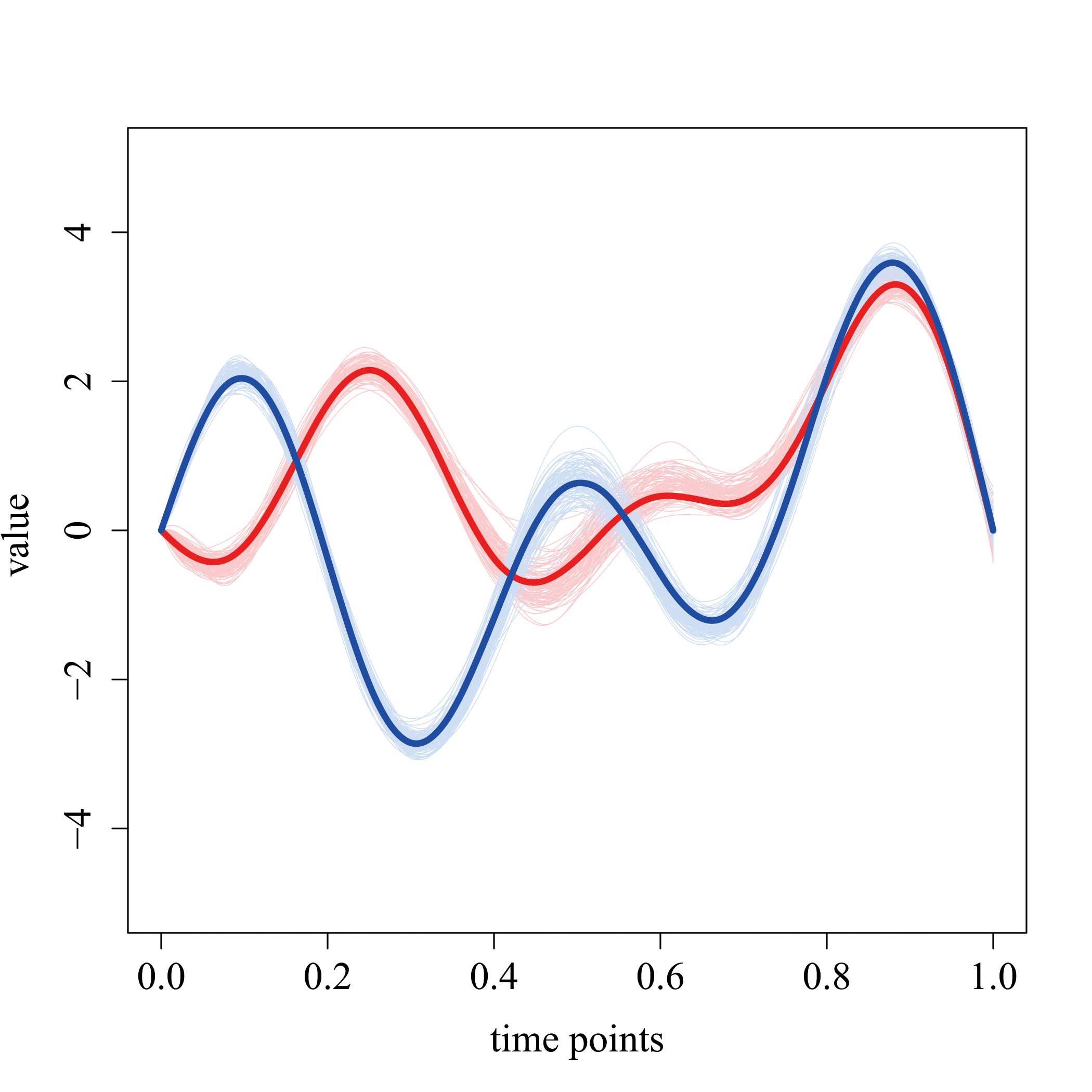}
		\end{tabular}
		\caption{
                Estimated cluster centers (light lines) by FKM-b across 100 replications and optimal cluster centers (bold lines) at the error variance $\sigma=0.1$; 
                panels representing sample sizes $n=50$, 100, 200, and 1000 (from left to right), and the expected number of time points $N_{tp}=3$, 5, and 10 (from top to bottom); 
                clusters in red and blue.
  }
		\label{SI-fig:simulation_convergence_sig=0.1}
	\end{center}
\end{figure}

\clearpage

\subsection{Adjusted Rand indices in the artificial experiment}
\label{SI-subsec:ARI}

\begin{table}[!h]
	\caption{Adjusted Rand indices in percentages averaged over 100 replications under the combinations of conditions $(N_{tp},\sigma,n)$.}
	\label{SI-tab:simulation-results}
  \centering
    \begin{tabular}{cccccccccccccc}
    \toprule
          &       &       & \multicolumn{3}{c}{$\sigma$=0.1} & \multicolumn{4}{c}{$\sigma$=1.0} & \multicolumn{4}{c}{$\sigma$=2.0} \\
\cmidrule{3-14}    $N_{tp}$  & Methods & \multicolumn{4}{c}{$n$}         & \multicolumn{4}{c}{$n$}         & \multicolumn{4}{c}{$n$} \\
          &       & 50    & 100   & 200   & 400   & 50    & 100   & 200   & 400   & 50    & 100   & 200   & 400 \\
    \midrule
    3     & FKM-f & 12.4  & 15.7  & 20.2  & 22.7  & 8.8   & 11.1  & 13.4  & 15.6  & 3.4   & 3.4   & 4.4   & 7.2 \\
          & FKM-b & 13.6  & 15.7  & 20.7  & 23.3  & 9.4   & 11.8  & 13.6  & 15.8  & 3.2   & 4.7   & 4.7   & 6.6 \\
          & FCM   & 12.8  & 16.2  & 22.0  & 27.1  & 9.3   & 10.6  & 13.8  & 16.3  & 3.7   & 4.2   & 5.3   & 6.6 \\
          & distclust & 3.9   & 5.8   & 7.4   & 9.1   & 3.4   & 3.2   & 3.5   & 5.9   & 1.7   & 1.7   & 1.3   & 2.1 \\
    \midrule
    5     & FKM-f & 20.9  & 35.7  & 39.5  & 42.4  & 12.7  & 22.3  & 27.8  & 31.4  & 6.4   & 9.5   & 12.4  & 15.0 \\
          & FKM-b & 21.4  & 35.6  & 40.2  & 42.4  & 14.0  & 23.7  & 26.7  & 31.6  & 6.6   & 9.9   & 12.6  & 14.5 \\
          & FCM   & 27.4  & 42.7  & 48.1  & 51.6  & 12.4  & 26.5  & 33.6  & 38.6  & 6.6   & 10.2  & 13.0  & 16.7 \\
          & distclust & 10.6  & 15.2  & 20.0  & 19.5  & 8.2   & 10.5  & 13.8  & 14.2  & 2.6   & 3.1   & 4.7   & 5.9 \\
    \midrule
    10    & FKM-f & 48.4  & 56.1  & 61.5  & 64.3  & 39.1  & 45.2  & 50.6  & 52.6  & 22.3  & 26.0  & 30.6  & 33.0 \\
          & FKM-b & 51.4  & 55.8  & 61.5  & 64.1  & 39.4  & 45.2  & 51.9  & 53.7  & 23.0  & 26.3  & 31.1  & 32.2 \\
          & FCM   & 68.2  & 76.5  & 78.9  & 81.0  & 50.2  & 58.3  & 65.2  & 67.7  & 21.9  & 30.0  & 37.6  & 43.0 \\
          & distclust & 36.6  & 42.3  & 42.6  & 45.0  & 26.1  & 30.3  & 36.6  & 36.0  & 13.6  & 15.1  & 17.0  & 20.3 \\
    \bottomrule
    \end{tabular}%
\end{table}%

\clearpage

\subsection{Median time for execution}
\label{SI-subsec:implementation-median-time}

\begin{figure}[!h]
	\begin{center}
	\renewcommand{\arraystretch}{1}
	 \vspace{0.5cm}
		\begin{tabular}{c}
		 \includegraphics[width=16cm]{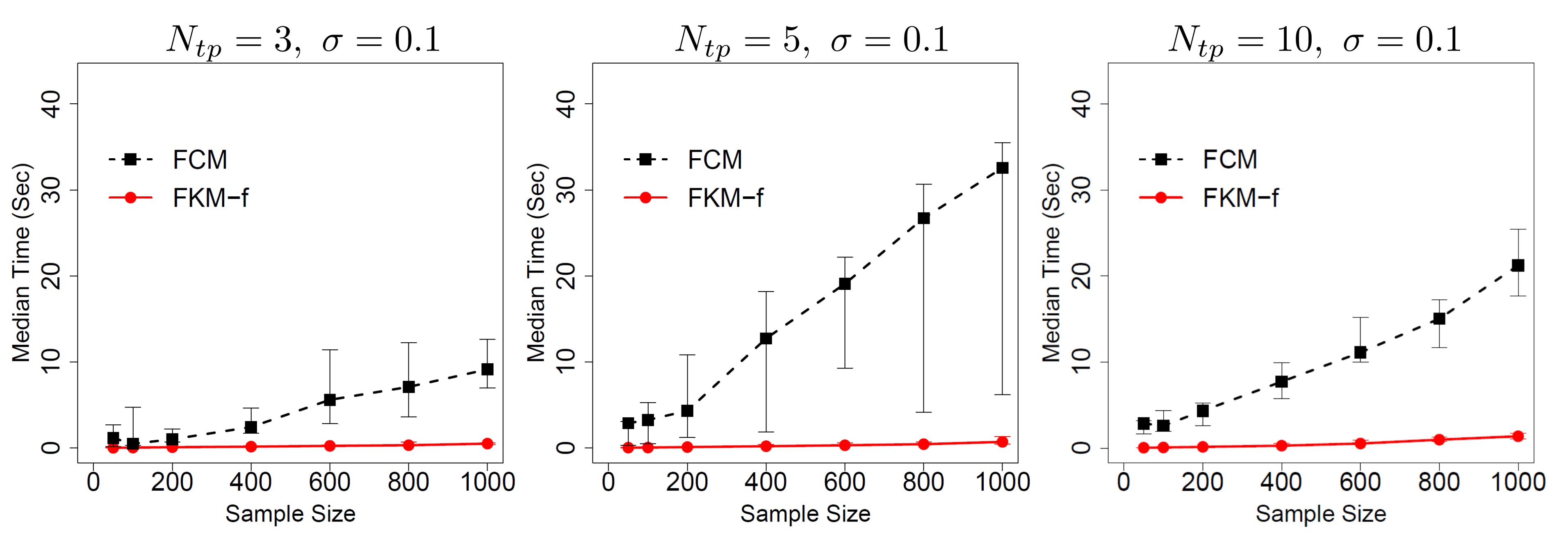}\\\\
		 \includegraphics[width=16cm]{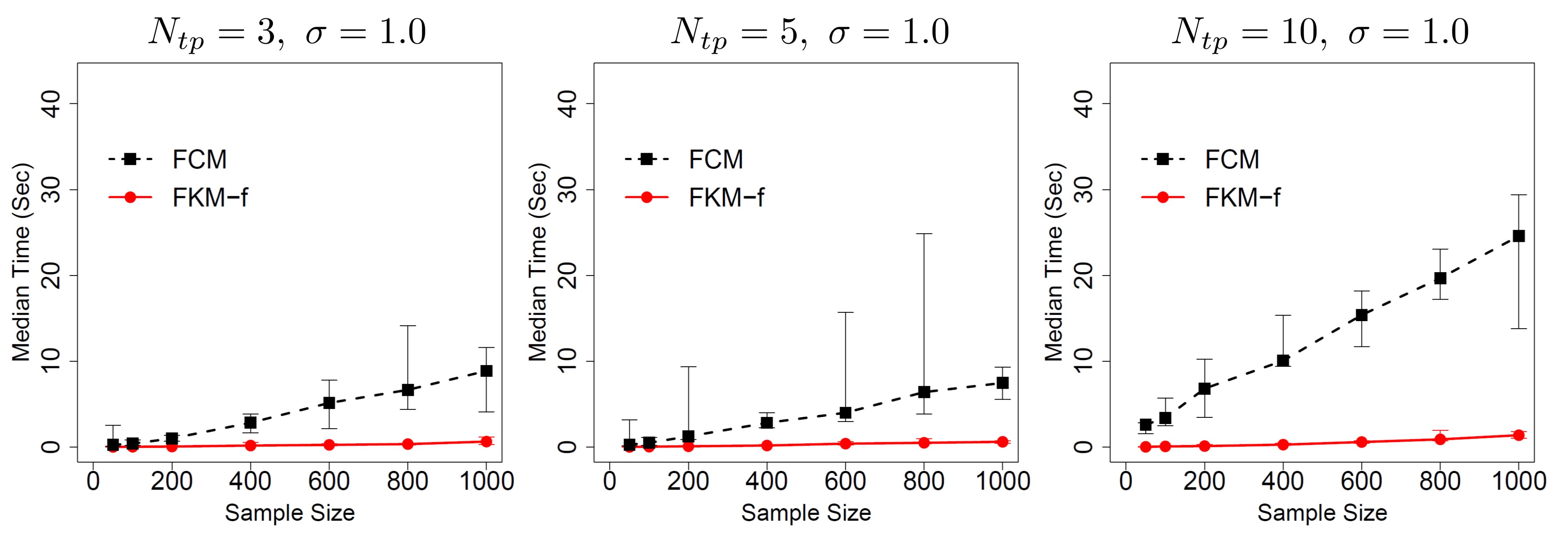}\\\\
		 \includegraphics[width=16cm]{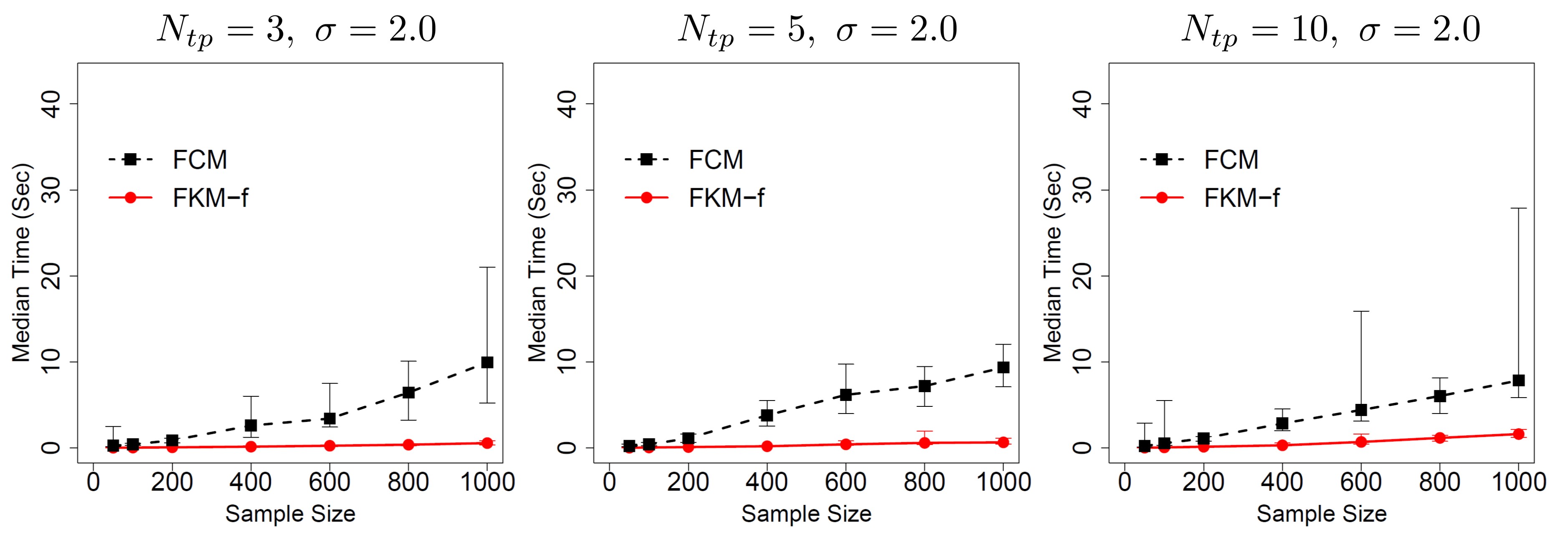}\\\\
		\end{tabular}
		\caption{
            Median analysis time (in seconds) with error bars showing the range (min-max) for each sample size at error variance $\sigma=1.0$ (black dashed line: FCM, red solid line: FKM-f);
            panels representing results for expected time points $N_{tp}=3$, 5, and 10 (from left to right), and the error variances $\sigma=0.1$, 1.0, and 2.0 (from top to bottom).
  }
	 \label{fig:implementation-median-time}
	\end{center}
\end{figure}



\end{document}